\newcommand{\set}[1]{\{#1\}}
\newcommand{\fnc}{f}
\newcommand{\ext}{f'}
\renewcommand{\Re}{\mathbb{R}}
\newcommand{\dist}{D}
\newcommand{\lip}{\mathrm{Lip}}
\newcommand{\norm}[1]{\left\|#1\right\|}%
\newcommand{\eps}{\varepsilon}
\newcommand{\Rbb}{\mathbb{R}}
\newtheorem{theorem}{Theorem}[section]
\newtheorem{claim}[theorem]{Claim}
\newtheorem{lemma}[theorem]{Lemma}
\newtheorem{corollary}[theorem]{Corollary}
\newtheorem{definition}[theorem]{Definition}
\newtheorem{openproblem}{Open Problem}
\renewcommand{\paragraph}{%
  \@startsection{paragraph}{4}%
  {\z@}{2.ex \@plus 1ex \@minus .2ex}{-1em}%
  {\normalfont\normalsize\bfseries}%
}
\title{Nonlinear Dimension Reduction via Outer Bi-Lipschitz Extensions\footnote{An extended abstract appeared in the proceedings of STOC~2018.}}
\author{
Sepideh Mahabadi\\TTIC\\\small{mahabadi@ttic.edu}\and
Konstantin Makarychev \\ Northwestern University\\\small{konstantin@northwestern.edu}\and
Yury Makarychev\thanks{Supported by NSF awards CCF-1718820, CCF-1150062, and IIS-1302662.}\\TTIC\\\small{yury@ttic.edu} \and
Ilya Razenshteyn\\Microsoft Research\\\small{ilyaraz@microsoft.com}}
\date{}
\begin{document}
\maketitle
\begin{abstract}
We introduce and study the notion of an \emph{outer bi-Lipschitz extension} of a map between Euclidean spaces.
The notion is a natural analogue of the notion of a \emph{Lipschitz extension} of a Lipschitz map.
We show that for every map $f$ there exists an outer bi-Lipschitz extension $f'$ whose distortion is greater than that of
$f$ by at most a constant factor. This result can be seen as a counterpart of the classic Kirszbraun theorem for outer
bi-Lipschitz extensions. We also study outer bi-Lipschitz extensions of near-isometric maps and show
upper and lower bounds for them.
Then, we present applications of our results to prioritized and terminal dimension reduction problems.
\begin{itemize}
\item We prove a \emph{prioritized} variant of the Johnson--Lindenstrauss lemma: given a set of points $X\subset \Re^d$ of size $N$
and a permutation (``priority ranking'') of $X$, there exists an embedding $f$ of $X$ into $\Rbb^{O(\log N)}$ with distortion $O(\log \log N)$
such that the point of rank $j$ has only $O(\log^{3 + \eps} j)$ non-zero coordinates -- more specifically, all but the first $O(\log^{3+\eps} j)$ coordinates are equal to~$0$; the distortion of $f$ restricted to the first $j$ points (according to the ranking)
is at most $O(\log\log j)$. The result makes a progress towards answering an open question by Elkin, Filtser, and Neiman about prioritized dimension reductions.
\item We prove that given a set $X$ of $N$ points in $\Re^d$, there exists a \textit{terminal} dimension reduction embedding
of $\Rbb^d$ into $\Re^{d'}$, where $d' = O(\frac{\log N}{\eps^4})$, which preserves distances $\|x-y\|$ between points $x\in X$ and $y \in \Re^{d}$, up to a multiplicative factor of $1 \pm \eps$.
This improves a recent result by Elkin, Filtser, and Neiman.
\end{itemize}
The dimension reductions that we obtain are nonlinear, and this nonlinearity is necessary.
\end{abstract}

\section{Introduction}
In this paper, we introduce and study the notion of \textit{an outer bi-Lipschitz extension}.
The notion is a natural analogue of the notion of \text{a Lipschitz extension}, which is widely used in mathematics and theoretical
computer science. Recall that a map $f: X \to Y$ is $C$-Lipschitz if for any two points $x, y\in X$ we have $d_Y(f(x),f(y))\leq C \cdot d_X(x,y)$;
the Lipschitz constant of $f$ is the minimum $C$ such that $f$ is $C$-Lipschitz.
In the Lipschitz extension problem, given a Lipschitz map $f$ from a subset $A$ of $X$ to $Y$ and a superset $A'\supset A$, the goal is to find an extension map $f'$ from $A'$ to $Y$ such that the Lipschitz constant of $f'$ is equal to or not
significantly larger than the Lipschitz constant of $f$.
This problem has found numerous applications in mathematics and theoretical computer science
(see e.g., \cite{K34, M34, MP84, JL84, LN05, MN06, NPSS06, AKR15, MM10, MM16}).
One of the most important results in the field is the Kirszbraun theorem, which states that any map $f:A\to \Rbb^m$
from a subset $A$ of Euclidean space $\Rbb^n$ to Euclidean space $\Rbb^m$ can be extended to a map $f':\Rbb^n\to \Rbb^m$ so
that the Lipschitz constant of $f'$ equals that of $f$~\cite{K34} (see Theorem~\ref{thm:Kirszbraun} in Section~\ref{sec:prelim}; see also~\cite{AT08}).

\paragraph{Outer bi-Lipschitz extension.} In this paper, we prove several analogues of the Kirszbraun theorem for \emph{bi-Lipschitz} maps. The bi-Lipschitz constant of a map $f\colon X \to Y$ is the minimum $D$ such that for some $\lambda > 0$ and every $x, y \in X$,
$
\lambda \cdot d_X(x,y) \leq d_Y(f(x),f(y)) \leq \lambda \cdot D \cdot d_X(x,y)
$.
If there is no such number $D$, we say that the map is not bi-Lipschitz. Bi-Lipschitz maps are also known as embeddings with distortion $D$. Low distortion metric embedding have numerous applications in approximation and online algorithms (see e.g. \cite{LLR95,AR98,Bartal98,Feige98,M02,ABN06,ANCDGKNS05,Bartal06,CMM06,FRT08,ALN08,KMM11, MMV12, MMV14, MMSJ16,EFN17}); hardness of approximation (see e.g.~\cite{KV15}); computational geometry (see e.g. \cite{M02, IM04} and references therein); and sketching, streaming, and similarity search algorithms (see e.g.~\cite{IM98, CMS01,Achlioptas03,BES06,CK06,NS07,OR07,AIK08,AIK09,N14,ANNRW17,andoni2018holder}).

Since bi-Lipschitz  maps are widely used in mathematics and theoretical computer science, it is natural to ask whether there is a counterpart of the Kirszbraun theorem for bi-Lipschitz maps.
\begin{quote}
Given a bi-Lipschitz map $f$ from a subset of $\Rbb^n$ to $\Rbb^m$, can we extend it to a bi-Lipschitz map
from the whole space $\Rbb^n$ to $\Rbb^m$?
\end{quote}

This question has been extensively studied in the literature (see e.g.~\cite{Ghamsari,PV93,Vaisala94,ATV03,AT09,Kovalev17}). It turns out that the answer to this question depends on the geometry of the set $A$. In general, the answer is ``no''.
For instance, consider a map that maps points $0$, $1$, $2$ to $0$, $-1$, $2$, respectively.
There is no continuous one-to-one extension of this map to $\Rbb$, let alone a bi-Lipschitz extension.
The reason is that in one dimension we cannot connect points $0$ and $-1$ and points $-1$ and $2$ with non-intersecting paths. However, we can easily do this in $\Rbb^2$.
This observation suggests the following idea. Let $A\subset \Rbb^n$ and $f:A \to \Rbb^m$ be a bi-Lipschitz map. Let us allow extension $f'$ of $f$ to use additional dimensions or, in other words, allow $f'$ to map points $x \in \Rbb^n\setminus A$ to points in some higher-dimensional (ambient) space $\Rbb^{m'}$ that contains $\Rbb^m$. We get the following definition.
\begin{definition}[Outer extension]
A map $f': A'\to \Rbb^{m'}$ (where $m' \geq m$) is an outer extension of $f$
if $f(a) = f'(a)$ for all $a\in A$; we assume that $\Rbb^m$ is the subspace of $\Rbb^{m'}$ spanned by the first
$m$ standard basis vectors; that is,
we identify points $(x_1,\dots,x_m) \in \Rbb^m$ and $(x_1,\dots, x_m, 0, \dots, 0)\in\Rbb^{m'}$.
We say that the extension is proper if $m = m'$.
\end{definition}
Note that the exact dimension of the image is not very important in many applications in computer science,
as long as the dimension is comparable to $m$ and $n$. Therefore,  outer extensions seem to be as useful as proper (standard) extensions.
However, in stark contrast with proper bi-Lipschitz extensions, outer bi-Lipschitz extensions \textit{always} exist -- for
every bi-Lipschitz map $f: A \to \Rbb^m$ there exists an outer bi-Lipschitz extension $f':\Rbb^n \to \Rbb^{m'}$, as we prove in this paper.

\subsection{Results}
\paragraph{Outer bi-Lipschitz Extensions.}
One of the main results of this paper is an analogue of the Kirszbraun theorem for bi-Lipschitz maps.
\begin{theorem}\label{lem:bi-lip-ext}
Let $X \subset \Rbb^n$ and $\fnc : X\rightarrow \Re^m$ be a bi-Lipschitz map with distortion at most $\dist$.
There exists an outer extension $\ext: \Re^n \rightarrow \Re^{m'}$ of $f$ with
the distortion at most $3\dist$ and $m'=n+m$.
\end{theorem}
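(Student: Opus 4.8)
The plan is to reduce everything to two applications of the classical Kirszbraun theorem (Theorem~\ref{thm:Kirszbraun}), using the $n$ new coordinates to record, for each point of $\Re^n$, how far (and roughly in which direction) it lies from $X$. This is exactly the information needed to repair the lower Lipschitz bound, which a plain Lipschitz extension of $f$ may well destroy.

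First I would rescale $f$ so that $\norm{x-y}\le\norm{f(x)-f(y)}\le D\norm{x-y}$ for all $x,y\in X$; this loses nothing, since a rescaling of $f$ does not change its distortion and, conversely, an outer extension of a rescaled $f$ can be rescaled back. Now $f$ is $D$-Lipschitz, injective, and $f^{-1}\colon f(X)\to X$ is $1$-Lipschitz. I apply Kirszbraun to $f$ to obtain a $D$-Lipschitz extension $g\colon\Re^n\to\Re^m$, and Kirszbraun to $f^{-1}$ to obtain a $1$-Lipschitz extension $\sigma\colon\Re^m\to\Re^n$. The proposed extension is
\[
f'\colon\Re^n\to\Re^m\times\Re^n=\Re^{m+n},\qquad f'(x)=\bigl(g(x),\ \alpha\,(x-\sigma(g(x)))\bigr),
\]
with $\alpha=\sqrt{D/(D+1)}\in(0,1)$. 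Since $\sigma(g(a))=\sigma(f(a))=a$ for $a\in X$, the last $n$ coordinates vanish on $X$, so $f'(a)=(f(a),0)$ and $f'$ is indeed an outer extension of $f$ with $m'=m+n$.

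To estimate the distortion, fix $x,y$, put $u=x-y$, $s=\norm{g(x)-g(y)}$, $w=\sigma(g(x))-\sigma(g(y))$, and note $\norm{f'(x)-f'(y)}^2=s^2+\alpha^2\norm{u-w}^2$ with $\norm{w}\le s\le D\norm{u}$. The upper bound is a triangle inequality: $\norm{u-w}\le(1+D)\norm{u}$, so $\norm{f'(x)-f'(y)}^2\le\bigl(D^2+\alpha^2(1+D)^2\bigr)\norm{u}^2$. For the lower bound I split on $s$: if $s\ge\norm{u}$ the first block already gives $\norm{f'(x)-f'(y)}\ge\norm{u}\ge\frac{\alpha}{\sqrt{1+\alpha^2}}\norm{u}$; if $s<\norm{u}$ then $\norm{u-w}\ge\norm{u}-s>0$, so $\norm{f'(x)-f'(y)}^2\ge s^2+\alpha^2(\norm{u}-s)^2$, and minimizing this convex quadratic over $s\in[0,\norm{u}]$ gives $\frac{\alpha^2}{1+\alpha^2}\norm{u}^2$. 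Hence $\norm{f'(x)-f'(y)}\ge\frac{\alpha}{\sqrt{1+\alpha^2}}\norm{x-y}$ always, so the distortion of $f'$ is at most $\frac{1}{\alpha}\sqrt{(1+\alpha^2)\bigl(D^2+\alpha^2(1+D)^2\bigr)}$; the choice $\alpha^2=D/(D+1)$ makes this exactly $2D+1$, which is $\le 3D$ because $D\ge1$.

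I expect the calculations above to be routine; the two real ideas are (i) building the extra coordinates from a Kirszbraun extension of $f^{-1}$ rather than of $f$, so that they vanish precisely on $X$, and (ii) combining the two coordinate blocks with the Euclidean norm, so that the two reasons for non-contraction --- ``$s$ is large'' and ``$u-w$ is large'' --- add in squares instead of competing inside a maximum; this second point is what turns a bound of roughly $4D$ into $2D+1\le 3D$. The only mildly delicate step should be the lower-bound case analysis together with the quadratic optimization that produces the sharp constant $\alpha=\sqrt{D/(D+1)}$.
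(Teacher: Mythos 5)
Your proof is correct and is essentially the paper's own construction: apply Kirszbraun to $f$ and to $f^{-1}$, and append a scalar multiple of the ``correction'' vector $x-\sigma(g(x))$ as the extra block of coordinates. The paper takes the same map, scaled (after your normalization $\|f^{-1}\|_{\lip}=1$) by $1/\sqrt{2}$, and gets $3D$; you treat the scale $\alpha$ as a free parameter, run the identical upper/lower bound analysis (triangle inequality above, case split on $s$ and a one-variable quadratic below), and then optimize $\alpha$ to $\sqrt{D/(D+1)}$, which tightens the constant to $2D+1\le 3D$. So this is not a new route but a sharper tuning of the same one; the paper even flags in the remarks after the theorem that the factor $3$ can potentially be improved, and your computation makes that improvement explicit. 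The only thing worth double-checking in your write-up is the trivial-looking inequality $\|u\|\ge\frac{\alpha}{\sqrt{1+\alpha^2}}\|u\|$ used in the $s\ge\|u\|$ branch --- it holds because $\alpha<1$, but it deserves one clause so a reader sees that the final lower bound is uniform across both cases.
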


The  main difference between the outer bi-Lipschitz extension from Theorem~\ref{lem:bi-lip-ext} and the Lipschitz extension from the Kirszbraun theorem --
aside from the difference we discussed above (that Theorem~\ref{lem:bi-lip-ext} gives an \textit{outer} extension and not a proper extension) --
is that while the Lipschitz extension preserves the Lipschitz constant of the map exactly, the bi-Lipschitz extension preserves the distortion only up to a constant factor. This limitation is unavoidable; it is easy to see that even in the example we considered --
extending the map $f$ that sends $0$, $1$, $2$ to $0$, $-1$, $2$, respectively -- the distortion of any outer extension of $f$ is greater than the distortion of $f$.
Thus, for arbitrary bi-Lipschitz maps we cannot get a result stronger than Theorem~\ref{lem:bi-lip-ext} (except that factor $3$ in the statement of the theorem can be potentially replaced with a smaller factor $c > 1$).

We then focus on an important class of near-isometric maps, maps with distortion $D = 1 + \varepsilon$.
Observe that if the distortion of $f$ is exactly $1$ (i.e., $f$ is an isometric embedding), it can be extended to an isometric embedding of the whole space $\Rbb^n$ into $\Rbb^{m'}$.
In this case, we can extend $f$ without increasing its distortion.
What happens if the distortion of $f$ is close to $1$ but not $1$?
Let $\varphi(\eps)$ be the smallest $\varepsilon'$ such that the following holds:
for every map $f:A \to \Rbb^m$ with distortion at most $D = 1 + \varepsilon$, there exists
an outer extension $f':\Rbb^n\to \Rbb^{m'}$ with distortion at most $D' = 1 + \varepsilon'$.
Note that $\varphi(0) = 0$, as discussed above.
\begin{openproblem}
Find the asymptotic behavior of $\varphi(\eps)$ as $\eps \to 0$. Does $\varphi(\eps) \to 0$ as $\eps \to 0$?
\end{openproblem}
\noindent We study this problem and get partial results for it. First, we show that $\varphi(\eps) \geq \Omega(1/\log^2 (1/\eps))$.
\begin{theorem}\label{thm:lower-bound-on-phi}
There exists a map $f: X \to \Re$, where $X\subset \Re$, with the distortion $1+\eps$, such that every outer extension $f': \Re \to \Re^m$ of $f$ has distortion at least $1+\Omega(\frac{1}{\log^2 (1/\eps)})$.
\end{theorem}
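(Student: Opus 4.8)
The plan is to exhibit a hard instance produced by a recursive multi-scale construction and then to bound below the distortion of every outer extension by a rigidity (``near-straightness'') argument. The guiding principle is that a near-isometry $f$ of a set $X\subset\Re$ into $\Re$ may fail to be monotone, but only locally: if $u<v$ are consecutive points of $X$ with $f(u)>f(v)$, then the bi-Lipschitz lower bound together with the $(1+\eps)$ upper bound forces $v-u$ to be only an $O(\eps)$-fraction of the neighbouring gaps, so each individual ``fold'' is shallow relative to its scale and, in isolation, is trivially absorbed by an extension. The idea is to stack such folds at $K\asymp\log(1/\eps)$ nested scales, in such a way that no pair of points of $X$ sees distortion more than $1+O(\eps)$, yet any extension must simultaneously ``unfold'' all of them, which forces a large detour in the extra dimensions.

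\smallskip
\noindent\textbf{The construction.} We build $X$ and $f$ by iterating a single gadget $K$ times, starting from a long block of points on which $f$ is the identity. The gadget takes a block $B=[p,p+\ell]$ on which (up to reflection) $f$ is already a monotone near-isometry, subdivides it into sub-blocks, and reverses the monotonicity of $f$ on a carefully placed family of short sub-blocks whose total length is a $\Theta(\eps)$-fraction of $\ell$ — so, by the ``locally non-monotone'' calculation above, $f$ remains a $(1+O(\eps))$-embedding on $B$ — while deleting from $X$ whatever points would otherwise collide under $f$. After $K$ iterations we obtain a self-similar $X=X_K$ with folds at geometric scales $\ell_0>\ell_1>\dots>\ell_K$; the scale ratios and fold depths are tuned so that the distortion of $f$ stays $\le 1+\eps$, and $K$ is taken as large as this allows, which turns out to be $\Theta(\log(1/\eps))$.

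\smallskip
\noindent\textbf{Distortion of $f$.} For $x,y\in X$ let $k$ be the coarsest scale separating them into distinct scale-$k$ blocks. Folds at scales finer than $k$ move $x$ and $y$ independently but have geometrically decaying depths, so they change $\|f(x)-f(y)\|$ only to second order; folds at coarser scales translate the common block containing $x$ and $y$ rigidly and do not affect the ratio $\|f(x)-f(y)\|/\|x-y\|$; and the scale-$k$ fold is designed to perturb that ratio by at most $1+\Theta(\eps)$. Summing the geometric series gives distortion $1+O(\eps)$, which we normalise to $\le 1+\eps$.

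\smallskip
\noindent\textbf{Lower bound for extensions, and the main obstacle.} Let $\ext\colon\Re\to\Re^m$ extend $f$ with distortion $1+\delta$, normalised so that $\|\ext(s)-\ext(t)\|\ge|s-t|$ and $\le(1+\delta)|s-t|$ for all $s,t$; note that $\ext$ sends all of $X$ onto the line $\Re^1\subset\Re^m$ carrying $f(X)$. Over the block $B_0=[\min X,\max X]$ the arc length of $\ext$ exceeds the distance between its endpoints by at most $\delta|B_0|$, and since those endpoints lie on $\Re^1$ far apart, the total variation of the component $v$ of $\ext$ orthogonal to $\Re^1$ is at most $O(\sqrt{\delta}\,|B_0|)$. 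On the other hand, $v$ vanishes at every point of $X$, while each fold forces $v$ to make a sizeable excursion: pairing the deep end of a reversed sub-block with the point of $X$ whose $\ext$-image has the same first coordinate — which, because $\ext$ had to travel far in the domain to unfold the reversal and return to that first coordinate, is far from it in $\Re^m$ by the bi-Lipschitz lower bound — forces $v$ to reach a value of order (the scale of the fold) even at that comparatively coarse point. Accumulating these forced excursions over all folds at all $K=\Theta(\log(1/\eps))$ scales and comparing with the $O(\sqrt{\delta}\,|B_0|)$ budget gives $\delta=\Omega(1/\log^2(1/\eps))$. The hard part — and the crux of the whole argument — is to choose the scales, depths, gap sizes and placements so that these two demands are compatible: the folds must be numerous and deep enough that the orthogonal excursions they force genuinely add up (rather than cancel, or overlap and share one budget), yet shallow enough that $f$ itself has distortion only $1+\eps$; it is exactly this tension that caps the number of usable scales at $\Theta(\log(1/\eps))$ and thereby yields the $\log^2$. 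One must also make the near-straightness estimate dimension-free, so that the bound applies to outer extensions into $\Re^m$ for every $m$.
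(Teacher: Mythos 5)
Your proposal is not a proof; it is a program, and you yourself flag the crux as unresolved: \emph{``The hard part --- and the crux of the whole argument --- is to choose the scales, depths, gap sizes and placements so that these two demands are compatible.''} Everything that would make the argument go through is left there. In particular, you never verify that the forced orthogonal excursions at the $K$ nested scales genuinely add up rather than overlap and share the single $O(\sqrt{\delta}\,|B_0|)$ budget; this is not a detail but the entire content of the lower bound. The near-straightness estimate itself is also stated imprecisely: an arc-length excess of $\delta|B_0|$ bounds the maximal \emph{deviation} of the curve from the chord by $O(\sqrt{\delta}\,|B_0|)$ (a Pythagorean estimate), but it does not directly bound the \emph{total variation} of the orthogonal component, which can be much larger. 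To turn this into a proof you would have to replace ``total variation'' by a correct quantity and then show that the excursions at distinct scales are charged to disjoint portions of the budget.

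More importantly, the elaborate multi-scale fold construction of $X$ is unnecessary. The paper takes $X$ to be a \emph{three-point} set, $X=\{-\eps,0,1\}$, with $f$ mapping these to $\{\eps,0,1\}$; the distortion of $f$ is $\tfrac{1+\eps}{1-\eps}=1+O(\eps)$. The multi-scale structure enters only in the \emph{analysis} of a hypothetical extension $f'$: one looks at the images $x_i'=f'(2^{-i})$, $i=0,\dots,k$ with $k=\log_2(1/\eps)$, together with $z=f'(-\eps)$. A clean elementary lemma (Claim~\ref{claim:triangle}) replaces your arc-length bookkeeping: if $b=(a+c)/2$ and the images $a',b',c'$ subtend angle $\alpha$ at $a'$, then the distortion is at least $\min(1/\cos\alpha,\sqrt2)$. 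Applying this to the triples $(0,2^{-i},2^{-(i-1)})$ gives angles $\alpha_i$ at the origin, applying it to $(x_k,0,-\eps)$ and $(-\eps,0,x_k)$ gives an angle $\beta$, and the spherical triangle inequality plus the fact that the angle sum of the Euclidean triangle $0\,z\,x_k'$ is $\pi$ yield $2\beta+\sum_i\alpha_i\ge\pi$. Pigeonhole then forces some angle to be at least $\pi/(k+2)$, and the lemma gives distortion $\ge 1/\cos\!\bigl(\tfrac{\pi}{k+2}\bigr)=1+\Omega(1/\log^2(1/\eps))$, dimension-free. Your ``unfold forces an orthogonal detour'' intuition is philosophically the same as the paper's ``some angle must be bounded away from zero,'' but the paper's angle lemma is what makes the intuition rigorous, and it does so without any recursive construction or any normalisation sensitivity --- precisely the two places your sketch is missing an argument.
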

\noindent Note that $1/\log^2 (1/\eps) \to 0$ as $\eps \to 0$, but the dependence of $1/\log^2 (1/\eps)$ on $\eps$ is not polynomial and, in our opinion,
highly unusual. This result rules out the possibility that $\varphi(\eps) = O(\eps^{1/k})$ for any $k$.
Further, we provide some evidence that $\varphi(\eps)$ might, in fact, be equal to $1 + \Theta(\frac{1}{\log^2 (1/\eps)})$.
Namely, we prove the following result for 1-dimensional case: for every map from $X\subset \Rbb$ to $\Rbb$, there is an outer extension with $D' = 1 + O(\frac{1}{\log^2 (1/\eps)})$. By Theorem~\ref{thm:lower-bound-on-phi}, this bound is asymptotically optimal.
\begin{theorem}\label{thm:line-embedding}
Let $X\subset \Re$ and $f: X \to \Re$ be a map with the distortion at most $1+\eps$. There exists an outer extension $f'\colon \Re \to \Re^2$ of $f$ with the distortion at most $1+O(\frac{1}{\log^2 (1/\eps)})$.
\end{theorem}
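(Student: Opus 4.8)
The plan is to first understand, combinatorially, how a near-isometry of a subset of the line into the line can fail to be monotone, and then to build $\ext$ by interpolating linearly along the monotone parts while using the second coordinate to ``round off'' the places where monotonicity breaks.

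\emph{Normalization and reduction.} I would rescale $\fnc$ so that $\norm{x-y}\le\norm{\fnc(x)-\fnc(y)}\le(1+\eps)\norm{x-y}$ for all $x,y\in X$, and reduce to $X$ finite: construct $\ext$ on an increasing sequence of finite subsets, observe that the resulting maps (after fixing the scaling constant and a base point) are uniformly Lipschitz, hence equicontinuous, on every ball, and extract a locally uniform subsequential limit, which is then an outer extension of $\fnc$ with the same distortion bound. For finite $X=\{x_1<\dots<x_N\}$, record the sign pattern of the increments $\fnc(x_{i+1})-\fnc(x_i)$; this splits $X$ into maximal monotone \emph{runs} joined at \emph{turning points}. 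The structural fact I would extract, by feeding the triple straddling a turning point into non-contraction and the two flanking gaps into non-expansion, is that turning points are \emph{shallow}: at a turning point $x_i$ with flanking gaps $p=x_i-x_{i-1}$ and $q=x_{i+1}-x_i$ one has $\min(p,q)\le\tfrac{\eps}{2}\max(p,q)$. Iterating this, the ``non-monotone detours'' of $\fnc$ are organized into a laminar hierarchy of mountains carrying smaller mountains on their slopes, with geometrically separated length scales.

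\emph{Construction.} I would fix a scaling constant $\lambda=1-\theta$ with $\theta=\Theta(\delta)$ and $\delta=C/\log^2(1/\eps)$; taking $\lambda$ slightly below $1$ is exactly what creates the ``room'' needed below, and the inequalities $\lambda\le 1\le\lambda(1+\delta)$ and $1+\eps\le\lambda(1+\delta)$ mean linear interpolation of $\fnc$ along any monotone run is already a $(1+\delta)$-embedding compatible with $\lambda$. Then $\ext$ is defined by: on each monotone run, interpolate $\fnc$ linearly (this stays on the line $\Re\subset\Re^2$); around each turning point, inside a window whose width is a suitable multiple of the short flanking gap $\min(p,q)$, replace the broken ``up-then-down'' path by an arc of a circle of a carefully chosen radius, lifted into the second coordinate. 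The shallowness estimate guarantees the two endpoints of the window are close (distance $\le(1+\eps)\min(p,q)$) relative to the length $\lambda(1+\delta)\cdot(\text{window width})$ that the arc is allowed, so a $C^1$ near-isometric arc fits and pulls the two branches emanating from the turning point apart; inside the window one recurses on the sub-hierarchy, and the $C^1$ matching at each splice must be exact so no distortion leaks at the joints.

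\emph{Analysis, and the main obstacle.} The distortion bound is then verified in the three standard regimes: two points inside one arc or segment (each such piece has near-unit speed), two points in consecutive pieces (handled by the $C^1$ joints and the scale separation of adjacent turning points), and two points far apart (the total extra length contributed by all roundings nested below a given scale is a small fraction of that scale). I expect the main difficulty, and the reason the exponent $2$ appears and is tight, to be the quantitative bookkeeping of how much room — how far below $1$ the constant $\lambda$ must sit, and how narrow each rounding window is forced to be — is consumed as one descends the hierarchy: a chain of $\Theta(\log(1/\eps))$ nested turning points each pressed against its surrounding run drives the accumulated loss up to $\Theta(1/\log^2(1/\eps))$, which is precisely the configuration realizing the lower bound of Theorem~\ref{thm:lower-bound-on-phi}, and one must choose the window widths and arc radii so that the construction never does worse than this. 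Making this balancing precise, together with the $C^1$ splicing estimates and the passage from finite to arbitrary $X\subset\Re$, is where I expect the bulk of the work to lie.
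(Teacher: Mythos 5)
Your decomposition ideas are broadly aligned with the paper's: the laminar hierarchy of ``mountains on slopes'' corresponds to the paper's decomposition of $\pi_f$ into nested flips, and your shallowness bound $\min(p,q)\le O(\eps)\max(p,q)$ is essentially the paper's separation lemma (that the flipped block $F_t$ is at distance $\Omega(|\Delta_t|/\eps)$ from its complement). The reduction to finite $X$ by compactness and the linear interpolation away from the turning points also match.

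The gap is in the local construction, and it is fatal. At a turning point the tangent direction of $h$ must reverse by an angle $\approx\pi$ (the two flanking monotone runs have slopes $\approx-\lambda$ and $\approx+\lambda$). If you resolve that reversal with a \emph{single circular arc} of any radius, the bi-Lipschitz constant of the arc, viewed as a curve parametrized by arclength, is at least the arclength-to-chord ratio of a near-semicircle, which is $\pi/2-o(1)$ --- a \emph{constant} bounded away from $1$, independent of $\eps$, of the radius, and of the window width. Widening the window does not help: if the window has domain-width $W$ and the arc has length $\approx W$ but turns by $\approx\pi$, then its two endpoints are at distance $\approx 2W/\pi$, so the contraction on the pair of window endpoints is already $\approx\pi/2$. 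No choice of ``carefully chosen radius'' escapes this, because constant curvature forces it. The paper's construction replaces the circle with the logarithmic spiral $t\mapsto(t\cos\varphi(t),\,t\sin\varphi(t))$ with $\varphi(t)=\pi\ln(1/|t|)/\ln(1/\eps)$. This curve has \emph{varying} curvature: it distributes the total turn of $\pi$ evenly over the $\log(1/\eps)$ dyadic scales between radius $\eps$ and radius $1$, turning by only $O(1/\log(1/\eps))$ per scale, and --- crucially for non-contraction --- it preserves the distance to the origin exactly, so $\|g(t_1)-g(t_2)\|\ge\bigl||t_1|-|t_2|\bigr|$. This is what yields the $1+O(1/\log^2(1/\eps))$ bound even for the irreducible three-point example $0\mapsto 0$, $\eps\mapsto-\eps$, $1\mapsto 1$.

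A related misconception: you locate the source of the exponent $2$ in the ``accumulated loss over a chain of $\Theta(\log(1/\eps))$ nested turning points.'' In fact the $\log^{-2}$ appears already for a \emph{single} turning point, as a property of the optimal way to turn by $\pi$ when the two flanking scales differ by a factor $1/\eps$; this is exactly what the three-point lower-bound example and the spiral upper bound establish. The laminar hierarchy is then used only to show that the contributions from distinct flips do not interfere, using scale separation (the portals at $\eps^{1/3}$ and $\eps^{2/3}$) to keep the different spirals apart; it does not create the $\log^{-2}$. So as written, your construction would give a constant-factor distortion bound rather than $1+O(1/\log^2(1/\eps))$; to repair it you would need to replace the circular arc with a spiral (or another variable-curvature curve satisfying the radial non-contraction property), which is the heart of the argument.
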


We also consider a simpler problem of extending a near-isometric map by one point. We prove the following result.
\begin{theorem} \label{thm:one-point-ext}
Let $f$ be a $(1+\varepsilon)$-bi-Lipschitz map from a subset $X$ of $\Rbb^n$ to $\Rbb^m$
and $u \in \Rbb^n$. There exists an outer extension $f':X\cup\set{u} \to \Rbb^{m+1}$ of $f$ with the distortion at most
$1 + O(\sqrt{\varepsilon})$.
\end{theorem}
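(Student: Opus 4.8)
The plan is to reduce the problem to the following geometric question: we are given the points $f(x)$ for $x \in X$, which are a $(1+\eps)$-distorted copy of $X$ inside $\Rbb^m$, and we must place one new point $f'(u)$ in $\Rbb^{m+1} = \Rbb^m \times \Rbb$ so that, after rescaling by the optimal $\lambda$, the new distances $\|f'(u) - f(x)\|$ all lie within a $(1+O(\sqrt\eps))$ factor of the target distances $\|u - x\|$. Without loss of generality, normalize so that the lower Lipschitz constant of $f$ on $X$ is $1$ and the upper one is $1+\eps$; thus for all $x,y \in X$ we have $\|x-y\| \le \|f(x)-f(y)\| \le (1+\eps)\|x-y\|$. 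The natural candidate for the ``shadow'' of $f'(u)$ in $\Rbb^m$ is obtained by applying the Kirszbraun theorem (Theorem~\ref{thm:Kirszbraun}) to the map $f^{-1}$ restricted to $f(X)$: since $f^{-1}$ is $1$-Lipschitz from $f(X)$ to $\Rbb^n$, it extends to a $1$-Lipschitz map on all of $\Rbb^m$; but what we actually want is the other direction. Instead, apply Kirszbraun to $f$ itself, which is $(1+\eps)$-Lipschitz, to get a point $p = \tilde f(u) \in \Rbb^m$ with $\|p - f(x)\| \le (1+\eps)\|u-x\|$ for every $x \in X$. This gives us the upper bound for free; the whole difficulty is the \emph{lower} bound, i.e.\ guaranteeing that $\|p - f(x)\|$ is not much \emph{smaller} than $\|u-x\|$, and this is exactly what the extra coordinate must fix.

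The key step is then to set $f'(u) = (p, h)$ for a suitable height $h \ge 0$, so that
\[
\|f'(u) - f(x)\|^2 = \|p - f(x)\|^2 + h^2 .
\]
Adding $h^2$ can only increase distances, so the upper bound is preserved up to the factor coming from $h$; we need $h$ large enough that $\|p-f(x)\|^2 + h^2 \ge (1-O(\sqrt\eps))^2 \|u-x\|^2$ for all $x$, yet small enough that $(1+\eps)^2\|u-x\|^2 + h^2 \le (1+O(\sqrt\eps))^2\|u-x\|^2$, i.e.\ roughly $h^2 = O(\eps) \cdot \|u-x\|^2$. These two requirements are in tension only when some $\|u - x\|$ is large while $h$ is forced to be large by a \emph{different} point $x'$ with $\|u-x'\|$ small but $\|p - f(x')\|$ deficient. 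So the crux is to bound, for each $x \in X$, the ``deficiency'' $\delta(x) := \|u-x\|^2 - \|p - f(x)\|^2$ and show it is at most $O(\eps)$ times $(\min_{x'} \|u - x'\|)^2$ — or more precisely at most $O(\eps) \cdot \|u - x_0\|^2$ where $x_0$ is a/the nearest point of $X$ to $u$ — so that a single choice $h := \Theta(\sqrt\eps)\,\|u-x_0\|$ (equivalently $h^2 = \Theta(\eps)\,r_0^2$ with $r_0 = \operatorname{dist}(u,X)$) works uniformly. I expect this deficiency bound to be the main obstacle, and I would attack it as follows: fix $x$, let $x_0$ be the nearest point to $u$; by the triangle inequality and near-isometry, $\|p - f(x)\| \ge \|f(x_0) - f(x)\| - \|p - f(x_0)\| \ge \|x_0 - x\| - (1+\eps)r_0$, while $\|u - x\| \le \|x_0 - x\| + r_0$. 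When $\|x_0 - x\|$ is much larger than $r_0$ (say $\ge r_0/\sqrt\eps$) these two estimates already force $\|p-f(x)\| \ge (1 - O(\sqrt\eps))\|u-x\|$ with no help from $h$ needed, while $h^2 = \Theta(\eps r_0^2)$ is negligible relative to $\|u-x\|^2$; when $\|x_0 - x\| = O(r_0/\sqrt\eps)$, all relevant distances are within a $\poly(1/\sqrt\eps)$ factor of $r_0$, the deficiency $\delta(x)$ is at most (distance)$\,\times\,$(error per distance) $= O(r_0/\sqrt\eps)\cdot O(\eps\cdot r_0/\sqrt\eps) = O(\eps r_0^2)$ by a more careful application of the near-isometry of $f$ together with the Kirszbraun bound $\|p - f(x_0)\| \le (1+\eps)r_0$, and then $h^2 = \Theta(\eps r_0^2)$ covers it.

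To make the middle-range estimate rigorous I would use the standard sharpening of Kirszbraun in the Euclidean setting: the extension point $p$ can be taken in the convex hull of $\{f(x) : x \in X\}$, and in fact the Kirszbraun/Helly argument yields, for the nearest point $x_0$, the quantitative bound $\langle p - f(x_0),\, f(x) - f(x_0)\rangle \ge \langle u - x_0,\, x - x_0\rangle - O(\eps)\|x - x_0\|\,r_0$ (this is where the $(1+\eps)$-Lipschitz slack enters multiplicatively), which when combined with $\|p - f(x_0)\|\le (1+\eps)r_0$ and $\|f(x) - f(x_0)\|^2 = \|x-x_0\|^2(1 \pm O(\eps))$ gives $\|p - f(x)\|^2 \ge \|u - x\|^2 - O(\eps)\,\|x-x_0\|\,\|u-x\| - O(\eps r_0^2)$. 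Dividing through and using that in the middle range $\|x-x_0\| = O(\|u-x\|)$, the per-point lower bound becomes $\|f'(u) - f(x)\|^2 = \|p-f(x)\|^2 + h^2 \ge \|u-x\|^2\bigl(1 - O(\sqrt\eps)\bigr)$ once $h^2 \ge c\,\eps\,r_0^2$ for an appropriate absolute constant $c$; checking the complementary far range as sketched above, and verifying the upper bound $\|f'(u)-f(x)\|^2 \le (1+\eps)^2\|u-x\|^2 + h^2 \le (1+O(\sqrt\eps))^2\|u-x\|^2$ (using $h^2 = O(\eps r_0^2) \le O(\eps)\|u-x\|^2$ since $r_0 \le \|u-x\|$), completes the argument with overall distortion $1 + O(\sqrt\eps)$ after rescaling by the appropriate common factor $\lambda$.
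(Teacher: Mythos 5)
Your setup — place a point $p$ in $\Rbb^m$ and then lift it by a height $h$ into the new coordinate so that $f'(u) = (p,h)$ — is exactly the paper's framework, and you correctly identify the upper bound as easy and the lower bound as the real difficulty. The gap is in how $p$ is chosen and in the ``quantitative Kirszbraun/Helly'' estimate you invoke for it.

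The Kirszbraun theorem gives you only the one-sided bound $\|p - f(x)\| \le (1+\eps)\|u-x\|$; it says nothing that prevents $p = \tilde f(u)$ from landing at, or arbitrarily close to, $f(x_0)$. (Already for $|X| = 1$ the extension $\tilde f \equiv f(x_0)$ is a perfectly valid Kirszbraun extension.) If $p \approx f(x_0)$, then $\delta(x_0) := \|u-x_0\|^2 - \|p-f(x_0)\|^2 \approx r_0^2$, forcing $h \approx r_0$, and then for far points $x$ the upper bound degenerates to $\|f'(u)-f(x)\|^2 \lesssim (1+\eps)^2\|u-x\|^2 + r_0^2 \lesssim 2\|u-x\|^2$, i.e.\ distortion $\approx \sqrt 2$, not $1 + O(\sqrt\eps)$. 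So a careless Kirszbraun $p$ does not work, and your argument never establishes that a good $p$ exists.

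The specific inequality you attribute to a ``standard sharpening of Kirszbraun,'' namely $\langle p - f(x_0), f(x) - f(x_0)\rangle \ge \langle u - x_0, x - x_0\rangle - O(\eps)\|x - x_0\|\,r_0$, is not a known consequence of the Kirszbraun/Helly argument and is false for a generic Kirszbraun point (take $p = f(x_0)$; the left side is $0$ while the right side can be $\approx r_0\|x-x_0\|$). Moreover, the deficiency bound it leads to, $\|p - f(x)\|^2 \ge \|u-x\|^2\bigl(1 - O(\eps)\bigr)$, cannot hold for \emph{any} choice of $p\in\Rbb^m$: combined with the Kirszbraun upper bound it would give a proper (no extra dimension) extension of distortion $1 + O(\eps)$, contradicting the paper's lower bound (Section~\ref{sec:one-point}) which shows the one-point extension problem has distortion at least $1 + \Omega(\sqrt\eps)$. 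So your claimed estimate is at the wrong scale ($\eps$ rather than $\sqrt\eps$) and too strong to be true. What the paper actually proves (Lemma~\ref{lem:one-point-ext-main}) is the existence of a point $u'$ with $\|u'\|\le 1$ and $|\langle u', f(v)\rangle - \langle u, v\rangle| \le O(\sqrt\eps)(\|v\|^2+1)$, and this is obtained not from Kirszbraun but from a dedicated minimax/LP-duality argument; that duality step is the genuine new content, and your proposal does not replace it.
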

The bound in this theorem is asymptotically tight -- there exist a map $f$ from a subset of $\Rbb$ to $\Rbb$ and a point $u\in \Rbb$ such that every outer extension of $f$ to  $u$  has distortion $1+\Omega(\sqrt \eps)$.

\paragraph{Computability.} Given sets $A\subset A' \subset \Rbb^n$ and a map $f:A\to \Rbb^m$,
we can compute an outer extension $f': A' \to \Rbb^n$ with the least possible distortion using
semidefinite programming (SDP). The running time is polynomial in $|A'|$ and $\log 1/\delta$, where $\delta$ is the desired precision.
In particular, we can find outer extensions $f'$, whose existence is guaranteed by Theorems~\ref{lem:bi-lip-ext} and~\ref{thm:one-point-ext}.

\paragraph{Applications.} Using our extension results, we obtain \textit{prioritized} and \textit{terminal} dimension reductions~\cite{EFN15,EFN17}.
Recall the statement of the Johnson--Lindenstrauss lemma~\cite{JL84}.
\begin{theorem}[The Johnson--Lindenstrauss Lemma~\cite{JL84}]
For every $0 < \eps < 1/2$ and every set $X \subset \Rbb^d$ of size $N$,
there exists an embedding $f: X \to \Rbb^{d'}$, where $d' = O\left(\frac{\log N}{\eps^2}\right)$,
such that for every $p, q \in X$: $
\|p - q\|_2 \leq \|f(p) - f(q)\|_2 \leq (1 + \eps)  \|p - q\|_2
$.
\end{theorem}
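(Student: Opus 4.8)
The plan is to prove the lemma by the classical random-projection argument, and then renormalize to match the one-sided form in the statement. First I would take $f(x) = \tfrac{1}{\sqrt{d'}}\,\Pi x$, where $\Pi$ is a random $d' \times d$ matrix with i.i.d.\ standard Gaussian entries (equivalently, up to a negligible normalization, $\sqrt{d/d'}$ times the orthogonal projection onto a uniformly random $d'$-dimensional subspace). The heart of the argument is the \emph{distributional} Johnson--Lindenstrauss lemma: for a fixed unit vector $v \in \Rbb^d$, writing $\pi_1,\dots,\pi_{d'}$ for the rows of $\Pi$, we have $\|f(v)\|_2^2 = \tfrac{1}{d'}\sum_{i=1}^{d'}\langle \pi_i, v\rangle^2$, and since each $\langle \pi_i, v\rangle$ is a standard Gaussian, this is an average of $d'$ independent $\chi^2_1$ variables. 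Its expectation is $1$, and it concentrates: $\Pr\bigl[\,\bigl|\,\|f(v)\|_2^2 - 1\,\bigr| > \eps\,\bigr] \le 2\exp(-c\,\eps^2 d')$ for all $\eps \in (0,1)$ and an absolute constant $c > 0$.

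Next I would apply this tail bound to each of the fewer than $N^2$ unit vectors $v_{p,q} = (p-q)/\|p-q\|_2$ over pairs $p \ne q$ in $X$. Choosing $d' = \lceil C \log N / \eps^2 \rceil$ with $C$ a sufficiently large absolute constant makes the per-pair failure probability at most $2\exp(-cC\log N) < N^{-3}$, so a union bound over the at most $\binom{N}{2}$ pairs shows that with positive probability the event $(1-\eps)\|p-q\|_2^2 \le \|f(p) - f(q)\|_2^2 \le (1+\eps)\|p-q\|_2^2$ holds simultaneously for all pairs. Fix one realization of $\Pi$ for which it does; taking square roots gives $\sqrt{1-\eps}\,\|p-q\|_2 \le \|f(p)-f(q)\|_2 \le \sqrt{1+\eps}\,\|p-q\|_2$.

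Finally, to obtain the normalization stated (lower bound exactly $\|p-q\|_2$), I would replace $f$ by $\tilde f = f / \sqrt{1-\eps}$: the lower bound becomes $\|p-q\|_2 \le \|\tilde f(p)-\tilde f(q)\|_2$, and the upper bound becomes $\sqrt{(1+\eps)/(1-\eps)}\,\|p-q\|_2 = (1 + O(\eps))\|p-q\|_2$ for $\eps < 1/2$. Replacing $\eps$ throughout by a suitable constant multiple of itself — which only changes the constant hidden in $d' = O(\log N / \eps^2)$ — upgrades the $1 + O(\eps)$ to the exact factor $1+\eps$ and yields the theorem.

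The main (and essentially only nontrivial) obstacle is the concentration estimate in the first paragraph. The clean way to establish $\Pr[|Z/d' - 1| > \eps] \le 2e^{-c\eps^2 d'}$, where $Z \sim \chi^2_{d'}$, is a Chernoff/Bernstein argument: the moment generating function of a $\chi^2_1$ variable is $(1-2t)^{-1/2}$ for $t < 1/2$, so $\mathbb{E}[e^{tZ}] = (1-2t)^{-d'/2}$; bounding $e^{tZ}$ from above and optimizing the free parameter $t$ separately for the upper and lower tails, and using that $\eps < 1$ keeps us in the Gaussian (rather than heavy-tailed) regime, gives the claimed bound. Everything after that — the union bound over pairs and the final rescaling — is routine.
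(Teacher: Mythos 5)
The paper states this as the classical Johnson--Lindenstrauss lemma, cites~\cite{JL84}, and gives no proof of its own, so there is nothing in the paper to compare against line by line. Your argument is the standard (and correct) random-projection proof: take a Gaussian matrix $\Pi$, reduce the distributional JL lemma to sub-exponential concentration of an averaged $\chi^2$ variable via the moment generating function, union-bound over the fewer than $N^2$ difference directions with $d' = \Theta(\log N/\eps^2)$, and rescale by $1/\sqrt{1-\eps}$ to match the one-sided normalization in the statement (absorbing the constant into the big-$O$ in $d'$). All the steps are sound and the parameter bookkeeping is right; this is a complete proof of the cited theorem.
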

Prioritized metric structures and embeddings were introduced and studied by Elkin, Filtser, and Neiman~\cite{EFN15}. Among several very interesting results obtained in~\cite{EFN15},
one is a construction of prioritized embeddings.  We give a definition of a prioritized dimension reduction in the spirit of~\cite{EFN15}.
\begin{definition}[Prioritized dimension reduction]\label{def:prioritizeddimered}
Consider a set of points $X \subset \Re^d$ of size $N$. Let $\pi$ be a bijection from $[N] = \set{1,\dots,N}$
to $X$, which defines a \emph{priority ranking} of $X$: $\pi(1), \dots, \pi(N)$.
An embedding $f:X \to \Rbb^{d'}$ is an $(\alpha,\beta)$-prioritized dimension reduction, where $\alpha: [N] \to \Rbb$ and
$\beta: [N] \to {\mathbb N}$, if
\begin{itemize}
\item for every $j\in [N]$, the distortion of $f$ restricted to points $\pi(1), \dots, \pi(j)$ is at most $\alpha(j)$.
\item for every $j\in [N]$, $\pi(j)$ is mapped to a point $f(\pi(j))$ in $R^{\beta(j)}$; that is,
all but the first $\beta(j)$ coordinates of $f(\pi(j))$ are equal to $0$.
\end{itemize}
\end{definition}
Note that points $f(\pi(1)), \dots, f(\pi(j))$ lie in Euclidean space of dimension
$\beta(j)$ and $\beta(j)$ may potentially be much smaller than $\log N$ (when $j \ll N$).
The definition requires that the distortion of the distance between points $\pi(i)$ and $\pi(j)$ be at most
$\alpha(\max(i, j))$ (note that this condition is weaker than a similar condition in the definition
of \textit{a prioritized embedding} in \cite{EFN15}, which requires that the distortion be at most $\alpha(\min(i, j))$).

Ideally, we want to have a dimension reduction with parameters $(1+\eps, \mathrm{polylog}\ j)$.
\begin{openproblem}[{\cite[talk and pers.~comm.]{EFN15}}]
\label{open_problem_jl}
Is there a prioritized dimension reduction with parameters $(1+\eps, \mathrm{polylog}\ j)$?
\end{openproblem}
Very little is known about prioritized dimension reductions. The only known result follows from
Theorem 15 in \cite{EFN15}. (The theorem is a prioritized variant of Bourgain's theorem~\cite{B85} and is more general
than its corollary stated below.)
\begin{theorem}[\cite{EFN15}]
For every set $X\subset \Re^d$ and $\varepsilon > 0$, there is a $(c_1\log^{4 + \varepsilon} j, c_2\log^4 j)$-prioritized dimension reduction $f:X \to \Rbb^{O(\log^2 |X|)}$ (where $c_1, c_2$ depend only on $\varepsilon$).
\end{theorem}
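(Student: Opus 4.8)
The plan is to derive the statement from the outer bi-Lipschitz extension theorem (Theorem~\ref{lem:bi-lip-ext}) by applying it once per ``scale'' of the priority ranking, so that distortion only accumulates by a constant factor per scale while the number of scales is $O(\log\log N)$; this is exactly what keeps the final distortion polylogarithmic rather than polynomial in $N$.

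First I would fix the scales. For $k\ge 0$ let $P_k$ be the prefix consisting of the first $\min(2^{2^k},N)$ points of the ranking, so that $P_0\subset P_1\subset\cdots\subset P_L=X$ with $\log|P_k|=\Theta(2^k)$ and $L=O(\log\log N)$; call $\ell(j)=\lceil\log\log j\rceil$ the \emph{level} of $\pi(j)$, so that $\pi(j)\in P_k$ exactly when $k\ge\ell(j)$ and $2^{\ell(j)}=\Theta(\log j)$. I would reserve, in increasing order of $k$, a fresh block of $m_k$ coordinates and maintain the invariant that the image of a level-$k$ point is supported on the first $M_k:=\sum_{i\le k}m_i$ coordinates; taking $m_k=O(\log|P_k|)=O(2^k)$ then gives $\beta(j)=M_{\ell(j)}=O(2^{\ell(j)})=O(\log j)$, which is far inside the required $O(\log^4 j)$, and a codomain of dimension $M_L=O(\log N)$, inside $O(\log^2 N)$.

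The construction is by induction on $k$: I would maintain an embedding $g_k\colon P_k\to\Rbb^{M_k}$ of distortion at most $D_k$, seeded by an isometric $g_0$ of the two-point set $P_0$ (so $D_0=1$). To build $g_{k+1}$ from $g_k$, view $P_k\subset\Rbb^d$ and $g_k$ as a bi-Lipschitz map of distortion $D_k$, and apply Theorem~\ref{lem:bi-lip-ext} to get an outer extension $\tilde g_k\colon\Rbb^d\to\Rbb^{M_k+d}$ of distortion $3D_k$ that agrees with $g_k$ on $P_k$ (so all of $P_k$ stays frozen on the first $M_k$ coordinates and is identically $0$ on the $d$ new ones); restrict $\tilde g_k$ to $P_{k+1}$. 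The only defect is that this uses $d$ fresh coordinates, possibly far more than $m_{k+1}$; but those coordinates carry only the at most $|P_{k+1}|$-point Euclidean configuration $\{\tilde g_k(a)\}_{a\in P_{k+1}}$ projected onto that block, so a Johnson--Lindenstrauss map applied to that block alone re-embeds it into $O(\log|P_{k+1}|)=m_{k+1}$ coordinates, changing all pairwise distances by at most a further constant factor and, being linear, leaving the zero vectors of the old points in place. The outcome $g_{k+1}\colon P_{k+1}\to\Rbb^{M_{k+1}}$ extends $g_k$, keeps new points supported on the first $M_{k+1}$ coordinates, and has distortion $D_{k+1}\le c\,D_k$ for an absolute constant $c<4$.

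Setting $f=g_L$ (consistent because each $g_{k+1}$ extends $g_k$), the prefix $\{\pi(1),\dots,\pi(j)\}\subseteq P_{\ell(j)}$ is embedded with distortion $D_{\ell(j)}\le c^{\ell(j)}=O\!\big((\log j)^{\log_2 c}\big)$, comfortably below $c_1\log^{4+\eps}j$ since $\log_2 c<2$; $\pi(j)$ is supported on its first $\beta(j)=O(\log j)=O(\log^4 j)$ coordinates; and the codomain has dimension $O(\log N)=O(\log^2 N)$ — in fact with parameters strictly better than those stated. The main obstacle is precisely the inductive step: one is extending a bi-Lipschitz map across a ``disparate'' jump of the ranking, having to place the entire new batch $P_{k+1}\setminus P_k$ consistently with the already-committed images of $P_k$ \emph{and} with one another, using only $O(\log|P_{k+1}|)$ fresh coordinates — and it is exactly here that an \emph{outer} (rather than proper) extension is needed and that Theorem~\ref{lem:bi-lip-ext} does the work. (Without that theorem, as in the original argument, one performs the per-batch extension by hand, e.g.\ snapping each new point onto a nearest already-embedded point and adding a rescaled Bourgain embedding of the batch in the new block, which is less tight and accounts for the larger exponent $4+\eps$.) The remaining ingredients are the geometric recursion $D_k\le c^k$, the doubly-exponential choice of the $P_k$ so that $L=O(\log\log N)$ and hence $c^L=(\log N)^{O(1)}$, and routine bookkeeping of the coordinate blocks.
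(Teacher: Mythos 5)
The theorem you were asked to prove is stated by the paper as a prior result of Elkin, Filtser, and Neiman; the paper does not prove it but only notes that it follows from Theorem~15 of~\cite{EFN15}, a prioritized variant of Bourgain's embedding theorem. So there is no in-paper proof of this exact statement to compare against. What you have written instead is, up to a choice of ladder, the paper's proof of its \emph{own} Theorem~\ref{thm:prioritized-embedding} --- the result that improves on the one you were asked to prove --- and it is correct: iterating the outer bi-Lipschitz extension (Theorem~\ref{lem:bi-lip-ext}) across a nested family of prefixes and compressing each new coordinate block with Johnson--Lindenstrauss does give a $(\mathrm{polylog}\ j, O(\log j))$-prioritized dimension reduction into $\Rbb^{O(\log N)}$, strictly inside the $(\log^{4+\eps} j, \log^4 j)$ and $O(\log^2 N)$ envelope. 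The only substantive difference from the paper's argument is the ladder: you take $|P_k|\approx 2^{2^k}$, while the paper takes $|S_i|\approx 2^{2^{C^i}}$ with $C=3+\eps$. The triple-exponential ladder has $O(\log\log\log N)$ levels, so the per-level geometric blowup $C^i$ stops at $O(\log\log j)$ for the first $j$ points, at the price of the per-point dimension $O(\log |S_i|)$ inflating to $O(\log^{3+\eps} j)$. Your double-exponential ladder has $O(\log\log N)$ levels, so the distortion at the level of $\pi(j)$ is $\Theta\bigl((\log j)^{\log_2 c}\bigr)$ with $c<4$ --- polynomial rather than doubly logarithmic in $\log j$ --- but the per-point dimension is tighter, $O(\log j)$. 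Both choices recover the cited theorem with room to spare; the paper headlines the triple-exponential one because it yields the $O(\log\log j)$ distortion stated in Theorem~\ref{thm:prioritized-embedding}. You were also right that the original argument in~\cite{EFN15} uses a different mechanism (Bourgain-style embeddings), which is why its exponents are larger, and your proposal makes the correct observation that the outer bi-Lipschitz extension theorem is precisely what lets one place each new batch consistently with the already-committed images in few fresh coordinates.
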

We make further progress towards solving Open Problem~\ref{open_problem_jl}.
\begin{theorem}\label{thm:prioritized-embedding}
For every set $X\subset \Re^d$, $\varepsilon > 0$, and $N = |X|$, there exist
\begin{itemize}
\item a $(c_1\log_2\log_2 j, c_2\log_2^{3 + \varepsilon} j)$-prioritized dimension reduction $f:X \to \Rbb^{O\left(\frac{\log N}{\varepsilon^2}\right)}$, where $c_1 = 3 + \varepsilon$ and  $c_2 = O(1/\varepsilon^2)$,
\item a $((3+\eps)^k, c_1 \log_2 j \log^{1/k} N)$-prioritized dimension reduction $f:X \to \Rbb^{O\left(\frac{\log N}{\varepsilon^2}\right)}$ for every integer parameter $k > 1$, where $c_1 = O(1/\varepsilon^2)$.
\end{itemize}
The dimension reductions can be computed in polynomial time.
\end{theorem}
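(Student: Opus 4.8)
The plan is to build both embeddings band by band, maintaining a growing chain of outer bi-Lipschitz extensions. Fix the priority ranking $\pi$ and $N=|X|$. First I would choose a \emph{band schedule} $1=m_0<m_1<\dots<m_r=N$ and set $X_t=\{\pi(1),\dots,\pi(m_t)\}$, so that $X_0\subset X_1\subset\dots\subset X_r=X$. Then, by induction on $t$, I construct a non-contracting embedding $F_t\colon X_t\to\Re^{e_t}$ that is an \emph{outer extension} of $F_{t-1}$ (so, by the definition of an outer extension, every point of $X_{t-1}$ has all coordinates beyond position $e_{t-1}$ equal to $0$), with distortion at most $D_t$ and $e_0\le e_1\le\dots\le e_r$. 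The final embedding is $f:=F_r$: if $\pi(j)$ lies in band $t(j)$ (i.e.\ $m_{t(j)-1}<j\le m_{t(j)}$), then all coordinates of $f(\pi(j))$ beyond position $e_{t(j)}$ vanish, and the restriction of $f$ to $\{\pi(1),\dots,\pi(j)\}$ is a restriction of $F_{t(j)}$, hence has distortion at most $D_{t(j)}$ (matching $\max(i,j)$ rather than $\min(i,j)$ in Definition~\ref{def:prioritizeddimered} is exactly what monotonicity of $t(\cdot)$ gives). So everything reduces to running the inductive step and picking the schedule so that $e_{t(j)}\le\beta(j)$ and $D_{t(j)}\le\alpha(j)$ for the two target pairs $(\alpha,\beta)$.

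For the \emph{basic inductive step}, and with it the second bullet, I would do the following. Given $F_{t-1}\colon X_{t-1}\to\Re^{e_{t-1}}$, apply Theorem~\ref{lem:bi-lip-ext}: this produces an outer extension of $F_{t-1}$ to all of $\Re^d$ with distortion at most $3D_{t-1}$ and $d$ new coordinates; restricting to $X_t$ gives an outer extension $\widehat F_t\colon X_t\to\Re^{e_{t-1}+d}$ of $F_{t-1}$ with distortion $\le 3D_{t-1}$ whose last $d$ coordinates form a map vanishing on $X_{t-1}$. Applying the Johnson--Lindenstrauss lemma to that map on the $m_t$-point set $X_t$ (and translating so it still vanishes on $X_{t-1}$) replaces those $d$ coordinates by $O(\eps^{-2}\log m_t)$ coordinates, changing every pairwise distance of $\widehat F_t$ by a factor $1\pm\eps$ while leaving the first $e_{t-1}$ coordinates --- in particular the values on $X_{t-1}$ --- untouched; this yields $F_t$ with $e_t=e_{t-1}+O(\eps^{-2}\log m_t)$ and $D_t\le 3(1+\eps)D_{t-1}$. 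Starting from $F_1$ a Johnson--Lindenstrauss embedding of $X_1$ gives $D_t\le(3+O(\eps))^{t}$. Choosing $m_t=2^{(\log_2 N)^{t/k}}$ (so $r=k$) makes $\log_2 m_t$ grow geometrically, so $e_t=O(\eps^{-2}\log_2 m_t)=O(\eps^{-2}(\log_2 N)^{t/k})$; since $\pi(j)$ in band $t$ satisfies $(\log_2 N)^{(t-1)/k}<\log_2 j$, this is $O(\eps^{-2}\log_2 j\cdot\log^{1/k}N)$, and $D_t\le(3+\eps)^k$ after rescaling $\eps$ --- which is the second bullet.

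For the \emph{first bullet} I would switch to the doubly exponential schedule $m_t\approx 2^{2^t}$, so that $r=O(\log\log N)$, $t(j)=\Theta(\log_2\log_2 j)$, and --- crucially --- the coordinate budget allowed on band $t$ is the polynomially larger quantity $\log^{3+\eps}m_t=2^{(3+\eps)t}$ rather than $\log m_t=2^{t}$. With the basic step above one would still only get $D_t\le(3+\eps)^t=(\log_2 j)^{\Theta(1)}$, far too large over $\log\log N$ bands; the idea is to spend the extra room to keep the distortion growing \emph{additively}, $D_t\le D_{t-1}+(3+O(\eps))$, so that $D_{t(j)}\le(3+\eps)\log_2\log_2 j$. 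Concretely, at band $t$ I would form a fresh non-contracting Johnson--Lindenstrauss embedding $G_t$ of the whole prefix $X_t$ with distortion $1+\eps$, and glue it onto $F_{t-1}$ by an outer-extension-type step that places the $\le m_t$ new points of band $t$ using $O(\eps^{-2}\log^{3+\eps}m_t)$ fresh coordinates: the new points are positioned so that (i) old points keep their images and stay zero on the new coordinates (preserving the outer-extension property), while (ii) every old--new and new--new distance is reproduced up to a factor $\le D_{t-1}+O(1)$, the ``old'' coordinates of the new points coming from a Kirszbraun-type extension of $F_{t-1}$ and the ``new'' coordinates --- a rescaled copy of $G_t$, recentred so as to vanish on $X_{t-1}$, with the $3$ coming from applying Theorem~\ref{lem:bi-lip-ext} to the near-isometry $G_t$ --- certifying the lower bound on old--new distances. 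The recursion $D_t\le D_{t-1}+O(1)$ together with $e_t=O(\eps^{-2}\sum_{s\le t}\log^{3+\eps}m_s)=O(\eps^{-2}\log^{3+\eps}m_t)$ then gives $D_{t(j)}=O(\log_2\log_2 j)$ and $e_{t(j)}=O(\eps^{-2}\log^{3+\eps}j)$, with $e_t$ capped at $O(\eps^{-2}\log N)$ once $\log^{3+\eps}m_t$ exceeds $\log N$.

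The band schedules, the Johnson--Lindenstrauss reduction applied only to the ``new'' coordinates, and the translation of $D_{t(j)},e_{t(j)}$ into functions of $j$ are routine; the hard part is the inductive step for the first bullet, namely losing only an additive constant in the distortion while spending only $\poly\log m_t$ coordinates. The naive application of Theorem~\ref{lem:bi-lip-ext} loses a multiplicative factor $3$ per band --- harmless for the second bullet, where we iterate only $k$ times, but fatal over $\log\log N$ bands --- and the additive version must exploit both that only $m_t$ new points are added and that $\log^{3+\eps}m_t\gg\log m_t$ coordinates are available; in particular it must handle pairs consisting of an old point and a point from a much later band, whose lower bound has to be certified entirely by the freshly added coordinates. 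This is where I expect the real work. Finally, for computability, the extension of Theorem~\ref{lem:bi-lip-ext} is computed by the semidefinite program mentioned after Theorem~\ref{thm:one-point-ext}, the Kirszbraun-type and Johnson--Lindenstrauss steps are classical polynomial-time procedures, and there are only $O(\log N)$ bands, so the whole construction runs in polynomial time.
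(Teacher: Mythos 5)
Your treatment of the second bullet --- bands $m_t=2^{(\log_2 N)^{t/k}}$, an outer extension via Theorem~\ref{lem:bi-lip-ext} at each step, then a Johnson--Lindenstrauss reduction applied only to the fresh block of coordinates (recentered so it still vanishes on the old points) --- is exactly the paper's argument, and the bookkeeping is correct.

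For the first bullet there is a genuine gap. You pick the doubly-exponential schedule $m_t\approx 2^{2^t}$ with $r=\Theta(\log\log N)$ bands, correctly observe that the multiplicative loss of $3$ per band would then blow up to $(\log j)^{\Theta(1)}$, and propose instead an additive-loss inductive step ($D_t\le D_{t-1}+O(1)$ using only $\mathrm{polylog}(m_t)$ new coordinates) which you never actually construct and explicitly flag as ``where I expect the real work.'' That step is not obviously achievable, and the paper does not attempt it. The missing idea is that the schedule should be \emph{triply} exponential: set $C=3+\eps$ and $|S_i|=\min(2^{2^{C^i}},N)$, so the number of bands is only $T=\lceil\log_C\log_2\log_2 N\rceil$ and the multiplicative factor of $C$ per band accumulates to $C^i=\log_2\log_2|S_i|$, which for a point $\pi(j)$ first appearing in band $i$ is at most $(3+\eps)\log_2\log_2 j$. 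With this schedule the very same basic inductive step you already used for the second bullet (Theorem~\ref{lem:bi-lip-ext} followed by JL on the new coordinate block) closes the argument verbatim: the dimension after band $i$ is $O(\eps^{-2}\log|S_i|)=O(\eps^{-2}2^{C^i})$, and since $j$ in band $i$ satisfies $\log_2 j>2^{C^{i-1}}$, this is $O(\eps^{-2}(2^{C^{i-1}})^{C})=O(\eps^{-2}\log_2^{3+\eps} j)$. There is no ``extra coordinate room'' to exploit and no additive improvement to prove; the exponent $3+\eps$ in the dimension bound is simply the base $C$ of the schedule, chosen to equal the per-band multiplicative distortion loss.
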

The first result gives a prioritized dimension reduction with a reasonably small distortion $O(\log\log j)$ and desired
polylogarithmic dimension. The second result gives a constant distortion and maps the first $j$ points to a subspace of dimension
$O(\log_2 j \log^{1/k} N)$.

Now we switch to another problem introduced by Elkin, Filtser, and Neiman~\cite{EFN17}.
\begin{definition}[Terminal dimension reduction] Suppose that we are given a set of points (which we call terminals)
$X\subset \Rbb^d$. We say that a map $f: \Rbb^d \to \Rbb^{d'}$ is a terminal dimension reduction with distortion $D$ if for every terminal $x \in X$ and point $p \in \Rbb^d$ ($p$ may be a terminal), we have
$$
\|p - x\| \leq \|f(p) - f(x)\| \leq D\,  \|p - x\|.
$$
\end{definition}
Elkin, Filtser, and Neiman~\cite{EFN17} proved that there exists a terminal dimension reduction with distortion $O(1)$ and
dimension $d' = O(\log |X|)$. We show how to obtain the distortion of $1 + \varepsilon$.
\begin{theorem}\label{thm:embedding-with-terminals}
For every set $X \subset \Rbb^d$ of size $N$ and parameter $0 < \eps < 1/2$,
there exists a terminal dimension reduction $f: X \to \Rbb^{d'}$ with distortion $1+\eps$, where $d' = O\left(\frac{\log N}{\eps^4}\right)$.
The dimension reduction can be computed in polynomial time.
\end{theorem}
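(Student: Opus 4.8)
The plan is to derive this from the one-point extension result, Theorem~\ref{thm:one-point-ext}, by exploiting a peculiarity of the definition of a terminal dimension reduction: it constrains $\|f(p)-f(x)\|$ only when $x\in X$, and never for two points of $\Rbb^d\setminus X$. Consequently the map $f$ need not be ``globally coherent'' off the terminals, and I can build it \emph{pointwise}: fix one embedding of the terminals, and then for each $p\in\Rbb^d$ extend that embedding by the single point $p$. The two ingredients are therefore (i) the Johnson--Lindenstrauss lemma, used once to embed $X$ into $O(\log N/\eps^4)$ dimensions with a carefully chosen precision, and (ii) Theorem~\ref{thm:one-point-ext}, applied separately for every $p$, all these extensions sharing the same single extra coordinate.

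Concretely, I would set $\eps_0:=c\eps^2$ for a small absolute constant $c$ to be fixed at the end, and apply the Johnson--Lindenstrauss lemma to $X$ with parameter $\eps_0$: this gives $g\colon X\to\Rbb^{m_0}$ with $m_0=O(\log N/\eps_0^2)=O(\log N/\eps^4)$ and $\|x-y\|\le\|g(x)-g(y)\|\le(1+\eps_0)\|x-y\|$ for all $x,y\in X$; in particular $g$ is a $(1+\eps_0)$-bi-Lipschitz map from $X\subset\Rbb^d$ to $\Rbb^{m_0}$. For each $p\in\Rbb^d$ I would then apply Theorem~\ref{thm:one-point-ext} to $g$ with the extra point $u=p$, obtaining an outer extension $f'_p\colon X\cup\{p\}\to\Rbb^{m_0+1}$ of $g$ with distortion at most $1+C\sqrt{\eps_0}$, i.e.\ with some $\lambda_p>0$ such that $\lambda_p\|a-b\|\le\|f'_p(a)-f'_p(b)\|\le\lambda_p(1+C\sqrt{\eps_0})\|a-b\|$ for all $a,b\in X\cup\{p\}$. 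Since $f'_p$ coincides with $g$ on $X$, comparing this to the bounds $\|x-y\|\le\|g(x)-g(y)\|\le(1+\eps_0)\|x-y\|$ on pairs $x,y\in X$ pins the scale down to $\lambda_p\in[1-C'\sqrt{\eps_0},\,1+C'\sqrt{\eps_0}]$ for an absolute constant $C'$. Finally I would define $f\colon\Rbb^d\to\Rbb^{m_0+1}$ by $f(p):=\frac{1}{1-C'\sqrt{\eps_0}}f'_p(p)$ for every $p$ (for $p\in X$ this is $\frac{1}{1-C'\sqrt{\eps_0}}g(p)$, so all terminals are rescaled by the same factor and $f$ is well defined). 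A short computation with the bounds on $\lambda_p$ then yields, for every terminal $x\in X$ and every $p\in\Rbb^d$,
\[
\|p-x\|\ \le\ \|f(p)-f(x)\|\ \le\ \bigl(1+O(\sqrt{\eps_0})\bigr)\,\|p-x\|,
\]
and choosing $c$ so small that $O(\sqrt{\eps_0})=O(\sqrt{c}\,\eps)\le\eps$ makes $f$ a terminal dimension reduction with distortion $1+\eps$ and target dimension $d'=m_0+1=O(\log N/\eps^4)$. The polynomial-time claim follows because $g$ comes from the derandomized Johnson--Lindenstrauss construction and each value $f(p)$ is computed by solving the one-point extension problem via the semidefinite program described under ``Computability''.

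The step I expect to carry the real content is the first observation: terminality is exactly what licenses the pointwise construction, so that the lone extra coordinate supplied by Theorem~\ref{thm:one-point-ext} can be reused for all of $\Rbb^d$ with no compatibility constraint linking distinct non-terminal points. Everything after that is bookkeeping --- normalizing by $\lambda_p$ so that the lower bound comes out as exactly $\|p-x\|$, and keeping track of the choice $\eps_0=\Theta(\eps^2)$, which is precisely what turns the $1/\eps^2$ of Johnson--Lindenstrauss into the $1/\eps^4$ of the statement. One should also check the innocuous edge cases (e.g.\ $\eps_0<1/2$ so that the Johnson--Lindenstrauss lemma applies, and the case $p\in X$), but these are immediate.
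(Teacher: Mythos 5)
Your proposal is correct and is essentially the paper's own proof: apply Johnson--Lindenstrauss with parameter $\Theta(\eps^2)$ to the terminals, extend pointwise via Theorem~\ref{thm:one-point-ext} (terminality is what makes the single shared extra coordinate legitimate), and rescale uniformly so the lower bound comes out as exactly $\|p-x\|$. The only cosmetic difference is the placement of the rescaling --- the paper rescales $g$ by $\lambda=1+c\eps$ before extending, while you divide $f_p'(p)$ by $1-C'\sqrt{\eps_0}$ afterward --- but the key step of pinning down the per-point scale $\lambda_p$ by comparing $f_p'|_X=g$ against the known bi-Lipschitz bounds of $g$ is the same.
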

It is an interesting question if the dimension $O\left(\frac{\log N}{\eps^4}\right)$ can be lowered. Since $f$ is also a (standard) dimension reduction for $X$, $d'$ must be at least $\Omega\left(\frac{\log N}{\eps^2}\right)$ as was shown by Larsen and Nelson~\cite{LN17} (see also ~\cite{AK17,LN16,Alon09}).
\begin{openproblem}
\label{open_terminal}
Is it possible to decrease the dimension to $O\left(\frac{\log N}{\eps^2}\right)$ in Theorem~\ref{thm:embedding-with-terminals}?
\end{openproblem}
After the conference version of this paper appeared, Open Problem~\ref{open_terminal} was resolved in the positive by Narayanan and Nelson~\cite{narayanan2018optimal}.

It is interesting that while most dimension reduction constructions described in the literature
are given by linear transformations, prioritized and terminal dimension reductions must be non-linear. In particular, all dimension reductions presented in this paper are non-linear.

\paragraph{Prior Work on Outer bi-Lipschitz Extensions.} After the conference version of this paper was published,
Kovalev informed us about a relevant result by Alestalo and V\"ais\"al\"a~\cite[Theorem 5.5]{alestalo1997uniform}.
Proved in a different context, it states that every map $f$ with distortion $D$ has an outer bi-Lipschitz extension $f'$
with distortion at most $D' = \sqrt{7} D^2$. The result and its proof are similar to the statement and proof
of Theorem~\ref{lem:bi-lip-ext}. However, in Theorem~\ref{lem:bi-lip-ext}, the dependence of $D'$ on $D$ is \emph{linear.}

We note that Makarychev and Makarychev~\cite{MM16} introduced a related notion of an external bi-Lipschitz extension, but that notion is significantly different from and less natural than the notion of the outer bi-Lipschitz extension studied in this paper.

\paragraph{Roadmap.}
In Section~\ref{sec:bi-lip-ext}, we prove Theorem~\ref{lem:bi-lip-ext}.  In Section~\ref{sec:one-point}, we obtain an optimal bound on one-point outer bi-Lipschitz extensions (prove Theorem~\ref{thm:one-point-ext} and show its optimality). Then, in Section~\ref{sec:applications}, we present applications of our results. Finally, in Section~\ref{sec:line-overview}, we give an overview of the proof of Theorem~\ref{thm:line-embedding}; we present the entire proof, as well as  a matching lower bound, in Section~\ref{sec:line-case}.

\subsection{Preliminaries}\label{sec:prelim}
In this paper, $\Rbb^n$ denotes $n$-dimensional Euclidean space, equipped with the standard Euclidean norm $\|\cdot\|$.
For $m < m'$, we identify $\Rbb^m$ with the $m$-dimensional subspace of $\Rbb^{m'}$ spanned by the first $m$ standard basis vectors
(in other words, we identify vectors $(x_1,\dots, x_m)\in\Rbb^m$ and $(x_1,\dots, x_m, 0, \dots, 0)\in \Rbb^{m'}$).

\begin{definition}[Lipschitz constant and distortion] Let $(X, d_X)$ and $(Y, d_Y)$ be metric spaces, and let $\fnc: X\rightarrow Y$ be a map. Define the Lipschitz constant of $f$ as
$
\norm{\fnc}_\lip = \mathrm{sup}_{x,y\in X} \frac{d_Y(\fnc(x),\fnc(y))}{d_X(x,y)}
$.
We say that the map $\fnc$ is Lipschitz if $\norm{\fnc}_\lip < \infty$. A map $\fnc$ is {non-expanding} if  $\norm{\fnc}_\lip\leq 1$.
The distortion or  bi-Lipschitz constant of an injective map $\fnc$ is $\dist = \dist(\fnc) = \norm{\fnc}_\lip\cdot\|\fnc^{-1}\|_\lip$.
If a map is not injective, its distortion is infinite. A map $\fnc$ is bi-Lipschitz if $\dist(\fnc) <\infty$.
\end{definition}

\begin{theorem}[Kirszbraun Extension Theorem]\label{thm:Kirszbraun}
Consider Euclidean spaces $\Rbb^n$ and $\Rbb^m$, and an arbitrary non-empty subset $X$ of $\Rbb^n$
Let $\fnc : X\rightarrow \Re^m$ be a Lipschitz map. There exists a proper extension $\fnc':\Re^n\rightarrow \Re^m$ of $f$ with the same Lipschitz constant as $f$: $\norm{\fnc'}_\lip = \norm{\fnc}_\lip$.
\end{theorem}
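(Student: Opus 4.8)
The plan is to prove the Kirszbraun theorem by the classical two-step route: first reduce the global extension to the problem of extending $f$ by a single new point, and then handle the one-point case by a convexity argument that crucially exploits the parallelogram identity in Euclidean space. Throughout, write $L=\norm{f}_{\lip}$; we may assume $0<L<\infty$, the cases $L=0$ and $\norm{f}_{\lip}=\infty$ being trivial or vacuous. For the reduction, consider the set of pairs $(Y,g)$ with $X\subseteq Y\subseteq\Re^n$, $g\colon Y\to\Re^m$, $g|_X=f$, and $\norm{g}_{\lip}\le L$, partially ordered by $(Y_1,g_1)\preceq(Y_2,g_2)$ iff $Y_1\subseteq Y_2$ and $g_2|_{Y_1}=g_1$. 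Every chain has an upper bound (the common extension defined on the union of the domains), so Zorn's lemma produces a maximal element $(Y^\ast,g^\ast)$. If we can show that whenever $u\in\Re^n\setminus Y^\ast$ the map $g^\ast$ extends to $Y^\ast\cup\set{u}$ with Lipschitz constant still $\le L$, then maximality forces $Y^\ast=\Re^n$, and $g^\ast$ is the extension we want.

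\medskip
\noindent\emph{One-point extension as a ball-intersection problem.} Fix an $L$-Lipschitz $g\colon Y\to\Re^m$ and a point $u\in\Re^n\setminus Y$. Extending $g$ to $u$ means choosing $p\in\Re^m$ with $\norm{p-g(y)}\le L\norm{u-y}$ for every $y\in Y$, i.e.\ showing that $\bigcap_{y\in Y}\bar B\big(g(y),L\norm{u-y}\big)\neq\emptyset$. Each of these balls is a compact convex subset of $\Re^m$, so by Helly's theorem together with the finite intersection property of families of compact sets it suffices to prove that every subfamily of at most $m+1$ of the balls has a common point.

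\medskip
\noindent\emph{The finite case.} Fix $y_0,\dots,y_k\in Y$ with $k\le m$, and set $v_i=g(y_i)$ and $r_i=L\norm{u-y_i}>0$. The function $\phi(p)=\max_{0\le i\le k}\norm{p-v_i}/r_i$ is convex and coercive, hence attains its minimum at some $p_0$. Suppose, for contradiction, that $\lambda:=\phi(p_0)>1$, and let $I$ be the set of active indices, so $\norm{p_0-v_i}=\lambda r_i$ for $i\in I$ (note $p_0\neq v_i$ since $\lambda>0$). First-order optimality — there is no direction along which all the active terms $\norm{\cdot-v_i}/r_i$ simultaneously decrease — gives $0\in\mathrm{conv}\set{(p_0-v_i)/r_i^2:i\in I}$, and after renormalizing the coefficients we obtain weights $\nu_i\ge 0$ ($i\in I$) with $\sum_{i\in I}\nu_i=1$ and $p_0=\sum_{i\in I}\nu_i v_i$. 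Writing $\bar y=\sum_{i\in I}\nu_i y_i$ and using the inner-product identity $\sum_{i,j}\nu_i\nu_j\norm{a_i-a_j}^2=2\sum_i\nu_i\norm{a_i-\sum_j\nu_j a_j}^2$ (valid since $\sum_i\nu_i=1$), the contraction bound $\norm{v_i-v_j}\le L\norm{y_i-y_j}$, and the fact that $\bar y$ minimizes $z\mapsto\sum_{i\in I}\nu_i\norm{y_i-z}^2$, we obtain
\begin{align*}
2\lambda^2\sum_{i\in I}\nu_i r_i^2
&= 2\sum_{i\in I}\nu_i\norm{p_0-v_i}^2
= \sum_{i,j\in I}\nu_i\nu_j\norm{v_i-v_j}^2 \\
&\le L^2\sum_{i,j\in I}\nu_i\nu_j\norm{y_i-y_j}^2
= 2L^2\sum_{i\in I}\nu_i\norm{y_i-\bar y}^2
\le 2L^2\sum_{i\in I}\nu_i\norm{y_i-u}^2
= 2\sum_{i\in I}\nu_i r_i^2 .
\end{align*}
Since $\sum_{i\in I}\nu_i r_i^2>0$, this yields $\lambda^2\le 1$, contradicting $\lambda>1$. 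Hence $\phi(p_0)\le 1$, i.e.\ the $k+1$ balls have a common point, which completes the finite case and, together with the two reductions above, the whole proof.

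\medskip
\noindent\emph{Where the difficulty lies.} All the real content is in the finite case: the displayed chain of (in)equalities is powered entirely by the parallelogram identity, i.e.\ by the Hilbert-space structure of $\Re^n$ and $\Re^m$; this is precisely why the theorem is special to the Euclidean setting and fails for general norms (even mapping between $\ell_\infty^2$ and $\ell_2^2$). A secondary point that needs a little care is the passage in the second step from finitely many balls to possibly uncountably many — this is exactly where compactness of the balls, hence Helly's theorem plus the finite intersection property, is invoked; the subcase where $u$ is a limit point of $Y$ requires no special treatment, since $u\notin Y$ already guarantees every $r_i>0$ in the computation.
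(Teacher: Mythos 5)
The paper states the Kirszbraun theorem in its Preliminaries as a classical result, citing Kirszbraun's original 1934 paper, and does not include a proof; so there is no in-paper argument to compare against. Your proof is a correct and complete rendering of the classical argument: Zorn's lemma reduces the problem to a one-point extension; the one-point extension is reformulated as the non-emptiness of an intersection of compact balls $\bar B\bigl(g(y),L\norm{u-y}\bigr)$; the finite intersection property of compact sets reduces this to finite subfamilies; and the finite case is handled by minimizing $\phi(p)=\max_i\norm{p-v_i}/r_i$, extracting convex weights from first-order optimality, and running the inner-product identity together with $\norm{v_i-v_j}\le L\norm{y_i-y_j}$ and the variance-minimizing property of the barycenter $\bar y$ to derive $\lambda\le 1$. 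All the delicate points (coercivity of $\phi$, strict positivity of each $r_i$ because $u\notin Y$, positivity of $\sum_i\nu_i r_i^2$) are explicitly addressed. One small remark: the invocation of Helly's theorem is unnecessary. Your finite-case argument never uses $k\le m$ --- the computation works for an arbitrary finite index set --- so you can skip Helly entirely and go directly from the finite case to the full family via the finite intersection property of the compact balls. Including Helly is harmless but adds an extra reduction step that buys nothing here.
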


\section{Outer bi-Lipschitz extension}\label{sec:bi-lip-ext}
In this section, we prove Theorem~\ref{lem:bi-lip-ext} that states that any bi-Lipschitz map $f$ from a subset $X$ of $\Rbb^n$ to $\Rbb^m$ can be extended
to a bi-Lipschitz map $f':\Rbb^n \to \Rbb^{m'}$ for some $m' > m$. 
The result can be seen as a counterpart of the Kirszbraun theorem.

\medskip
\noindent\textit{Informal overview of the proof idea.} For simplicity, let us assume for now that $f$ is near-isometric (it approximately preserves distances).
We want to construct a map $f':\Re^n \to \Re^{m'}$ that satisfies the following conditions:
\begin{enumerate}[label=(\arabic*)]
\item $f'$ is an outer extension of $f$; that is, $f'(x) = f(x)$ for every $x\in X$;
\item $\|f'(x) - f'(y)\| \leq O(\|x - y\|)$ for all $x,y\in \Rbb^n$ ;
\item $\|f'(x) - f'(y)\| \geq \Omega(\|x-y\|)$ for all $x,y\in \Rbb^n$ .
\end{enumerate}
First, using the Kirszbraun theorem, we find a Lipschitz extension $\tilde f: \Rbb^n \to \Rbb^m$.
If we were to let $f' =\tilde f$, then $f'$ would satisfy conditions (1) and (2) but not necessarily (3);
namely, for some points $x,y\in \Rbb^n$, the distance between $f'(x)$ and $f'(y)$ would potentially be considerably smaller than that between $x$ and $y$;
in fact, it could happen that $\tilde f(x) = \tilde f(y)$ for some $x\neq y$. Instead, we are going to let
$f'(x) = \tilde f(x) \oplus h(x)\in \Rbb^{n+m}$ for some map $h$ from $\Re^n$ to $\Re^{n}$. We will choose $h$ which satisfies the following conditions:
\begin{enumerate}[label=($\arabic*'$)]
\item For $x\in X$, $h(x) = 0$. This condition is necessary to ensure that $f'$ is an outer extension of $f$.
\item For all $x,y\in \Rbb^n$, $\|h(x) - h(y)\| \leq O(\|x-y\|)$ and thus $\|f'(x) - f'(y)\| \leq \|\tilde f(x) - \tilde f(y)\| + \|h(x) - h(y)\|
\leq O(\|x-y\|)$.
\item If $\|\tilde f(x) - \tilde f(y)\| \ll \|x-y\|$ for some $x,y\in \Rbb^n$, then $\|h(x) - h(y)\| = \Omega(\|x-y\|)$ and thus
$\|f'(x) - f'(y)\| \geq \|h(x) - h(y)\| \geq \Omega(\|x-y\|)$.
\end{enumerate}
As we see, if $h$ satisfies conditions ($1'$), ($2'$), and ($3'$), then $f' = \tilde f \oplus h$ satisfies conditions (1), (2), and (3).
Now we proceed with a formal proof. Our main task will be to define $h$ appropriately.
\begin{proof}
As above, let $\tilde f:\Re^n \to \Re^{m}$ be a Lipschitz extension of $f$ with $\|\tilde f\|_{\lip} = \|f\|_{\lip}$.
Further, let $g = f^{-1}:f(X)\to X$ be the inverse map of $f$ and $\tilde g: \Re^m\to \Re^n$ be its Lipschitz extension given by the Kirszbraun theorem.
Denote $\alpha = \norm{g}_{\lip}$. Since the distortion of $f$ is at most $D$,
$$\|f\|_{\lip} \leq D/\alpha, \qquad \|\tilde f\|_{\lip} \leq D/\alpha, \qquad \norm{g}_{\lip} \leq \alpha, \qquad \norm{\tilde g}_{\lip} \leq \alpha, \qquad \|\tilde g \circ \tilde f\|_{\lip}\leq D.$$
Let $ h(x)= \frac{\tilde g(\tilde f(x)) - x}{\sqrt2 \alpha}$ and
$
\fnc'(x) = \tilde f(x) \oplus h(x) = \tilde f(x) \oplus \frac{\tilde g(\tilde f(x)) - x}{\sqrt2 \alpha} \in \Rbb^{n+m}
$.
We verify that $f'$ satisfies conditions (1), (2), and (3) described in the proof overview above.

\medskip
\noindent\textbf{Condition (1).} We prove that $\fnc'$ is an outer extension of $\fnc$; i.e., for every $x\in X$, we have
$$
\fnc'(x) =\fnc(x) \oplus \frac{\tilde g(\tilde f(x)) - x}{\sqrt2 \alpha} =
\fnc(x) \oplus \frac{g(f(x)) - x}{\sqrt2 \alpha} = f(x) \oplus 0 = \fnc(x).$$

\medskip
\noindent\textbf{Condition (2).}  For every $x,y\in \Rbb^n$, we have
$$\sqrt 2 \alpha \cdot \|h(x) - h(y)\| = \|(x - \tilde g\circ \tilde f(x)) - (y - \tilde g\circ \tilde f(y))\|)\| \leq  \|x - y\| +  \|\tilde g\circ\tilde f(x) - \tilde g\circ\tilde f(y)\| \leq (1 + D) \|x - y\|.$$
Thus,
$$\|f'(x)-f'(y)\|^2 \leq \|\tilde f(x) - \tilde f(y)\|^2 + \|h(x) - h(y)\|^2 \leq \left(\Bigl(\frac{D}{\alpha}\Bigr)^2 + \frac{(1+D)^2}{2\alpha^2}\right) \|x-y\|^2 \leq \frac{3D^2}{\alpha^2}.$$
Therefore, $\|f'\|_{\lip} \leq \sqrt{3} D/\alpha$.

\medskip
\noindent\textbf{Condition (3).} Finally, we prove that the Lipschitz constant of the inverse map $f'^{-1}$ is at most
$\sqrt{3}\alpha$.
Consider two distinct points $x, y\in \Rbb$. Let $\rho = \frac{\alpha\|\tilde f(x) - \tilde f(y)\|}{\|x-y\|}$.
If $\rho \geq 1$, then $\|f'(x) - f'(y)\| \geq \|\tilde f(x) - \tilde f(y)\| \geq \|x-y\| / \alpha$.
Otherwise, $\|\tilde g(\tilde f(x)) - \tilde g(\tilde f(x))\| \leq \rho \|x-y\| < \|x - y\|$, and
\begin{align*}
\|f'(x) - f'(y)\|^2 & = \|\tilde f(x) - \tilde f(y)\|^2 + \frac{1}{2\alpha^2} \|(x - y) - (\tilde g(\tilde f(x)) - \tilde g(\tilde f(y))) \|^2 \\ &\geq \frac{\rho^2}{\alpha^2} \|x - y\|^2 + \frac{(1 - \rho)^2 \|x-y\|^2}{2\alpha^2} =
\frac{(1 - 2\rho + 3\rho^2) \|x-y\|^2}{2\alpha^2} \geq \frac{\|x-y\|^2}{3\alpha^2}.
\end{align*}
Here we used that the minimum of the quadratic polynomial $1 - 2\rho + 3\rho^2$ equals $2/3$.
In both cases, we have $\|f'(x) - f'(y)\| \geq \nicefrac{\|x-y\|}{\sqrt3\alpha}$. Therefore,
$\|f'^{-1}\|_{\lip} \leq \sqrt{3} \alpha$. We conclude that the distortion of $\fnc'$ is at most $3D$.
\end{proof}

\section{One-point extension of near-isometric maps}\label{sec:one-point}
\subsection{Upper bound}
\newcommand{\Cconst}{3}

In this section, we prove Theorem~\ref{thm:one-point-ext}. The theorem states that every near-isometric map can be extended
to an extra point so that the extended map is also near isometric.
\begin{proof}[Proof of Theorem~\ref{thm:one-point-ext}]
Without loss of generality, we can make several simplifying assumptions.
First, it is sufficient to prove the theorem only for finite subsets $X$ of $\Rbb^n$; the statement for infinite subsets follows
from a simple compactness argument. We will assume that $\varepsilon \in (0,1)$, if $\varepsilon > 1$, the theorem follows
from Theorem~\ref{lem:bi-lip-ext}. Further, by rescaling $f$, if necessary, we may assume that $\|v - w\| \leq \|f(v) - f(w)\| \leq (1+ \varepsilon) \|v - w\|$ for every $v,w \in X$. In particular,
\begin{equation}\label{eq:bi-Lip}
\|v - w\|^2 \leq \|f(v) - f(w)\|^2 \leq (1+ 3\varepsilon) \|v - w\|^2.
\end{equation}
If $u \in X$ then there is nothing to prove, so we assume that $u\notin X$.
Let $v_0$ be the point closest to $u$ in $X$ (or one of the closest points to $u$ if there is more than one such point).
To simplify notation, we assume that $v_0 = 0$, $f(v_0) = 0$, and $\|v_0 - u\| = 1$.
Then $\|u\| = 1$ and $\|u - v\| \geq 1$ for every $v\in X$. The theorem will follow from the following lemma.
\begin{lemma}\label{lem:one-point-ext-main}
There exists a vector $u'\in \Rbb^m$ such that
\begin{enumerate}
\item $\|u'\| \leq 1$,
\item $|\langle u', f(v)\rangle - \langle u, v\rangle| \leq \Cconst\sqrt{\varepsilon}\,(\|v\|^2 + 1)$ for every $v\in X$.
\end{enumerate}
\end{lemma}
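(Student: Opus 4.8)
The plan is to set up $u'$ as the output of a convex feasibility / minimax argument. Here is the idea. Once we have $u'$ satisfying conditions 1 and 2, we will define $f'(u) = u' \oplus t \in \Rbb^{m+1}$, where the last coordinate $t \in [0,1]$ is a free parameter chosen so that $\|f'(u) - f(v)\|^2 = \|u'\|^2 + t^2 - 2\langle u', f(v)\rangle + \|f(v)\|^2$ is within a $(1+O(\sqrt\eps))$ factor of $\|u - v\|^2 = 1 + \|v\|^2 - 2\langle u, v\rangle$ for all $v \in X$. Using $\|u'\|^2 \le 1 = \|u\|^2$, the bound~\eqref{eq:bi-Lip} on $\|f(v)\|^2$, and condition 2 of the lemma, one sees that the cross terms and the norm terms all match up to additive error $O(\sqrt\eps)(\|v\|^2 + 1) = O(\sqrt\eps)\|u-v\|^2$ (recalling $\|u - v\| \ge 1$, so $\|v\|^2 + 1 \le 2\|u-v\|^2$ after a triangle-inequality estimate $\|v\| \le \|u-v\| + \|u\|$). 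The role of $t$ is to absorb the (one-sided) deficit $1 - \|u'\|^2$ so that the lower bound $\|f'(u)-f(v)\| \ge \|u-v\|$ also holds; a single scalar suffices because that deficit is the same for every $v$. So the theorem reduces cleanly to the lemma, and the remaining work is to produce $u'$.

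To find $u'$, I would phrase it as follows. Consider the map $\bar f \colon X \cup \{v_0\} \to \Rbb^m$ and think of $u \in \Rbb^n$ as a vector we must ``transport'' through $f$. The natural first attempt is to take a Lipschitz extension: apply the Kirszbraun theorem to $f^{-1}$ to get $\tilde g \colon \Rbb^m \to \Rbb^n$ with $\|\tilde g\|_{\lip} \le 1 + \eps$ (since, after rescaling, $\|f^{-1}\|_{\lip} \le 1$ on $f(X)$... wait — after the rescaling $\|v-w\| \le \|f(v)-f(w)\|$, we have $\|f^{-1}\|_{\lip} \le 1$, so $\|\tilde g\|_{\lip} \le 1$). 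Dually, apply Kirszbraun to $f$ to get $\tilde f \colon \Rbb^n \to \Rbb^m$ with $\|\tilde f\|_{\lip} \le 1+\eps$. Now the candidate is $u' := \tilde f(u)$. Condition 1, $\|u'\| = \|\tilde f(u) - \tilde f(v_0)\| \le (1+\eps)\|u - v_0\| = 1+\eps$, is slightly too weak; I would instead take $u' := \frac{1}{1+\eps}\tilde f(u)$ or truncate radially to norm $\le 1$, at the cost of an extra $O(\eps)$ in condition 2, which is absorbed by the $O(\sqrt\eps)$ slack. For condition 2, fix $v \in X$ and estimate $\langle u', f(v)\rangle - \langle u, v \rangle$ by comparing with $\langle \tilde f(u), \tilde f(v')\rangle$-type quantities; the point is that $\tilde f$ is an approximate isometry on $\{u, v_0, v\}$ (it preserves the three pairwise distances among these up to $(1+\eps)$), hence it approximately preserves the Gram matrix of $\{u - v_0, v - v_0\} = \{u, v\}$, giving $\langle \tilde f(u), \tilde f(v)\rangle = \langle u, v\rangle \pm O(\eps)(\|u\|^2 + \|v\|^2)$. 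Combined with $\|\tilde f(v) - f(v)\|$ being small (both equal $f(v)$ on $X$, so this term vanishes — $\tilde f$ agrees with $f$ on $X$!), we get condition 2.

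The main obstacle I anticipate is that the ``$\pm O(\eps)$'' heuristic above actually leaks to $O(\sqrt\eps)$, and pinning down exactly why it is $\sqrt\eps$ and not $\eps$ is the crux — indeed the paper's own lower bound shows $\sqrt\eps$ is unavoidable. The issue is that preserving three pairwise \emph{distances} among $\{u, v_0, v\}$ to within $(1+\eps)$ does \emph{not} preserve the inner product $\langle u, v\rangle$ to within $O(\eps)$ times the norms, because when $\|v\|$ is large (say $\|v\| = R \gg 1$) a relative $\eps$-perturbation of the distance $\|u - v\| \approx R$ is an \emph{absolute} perturbation of size $\eps R$, and $\langle u, v\rangle = \tfrac12(\|u\|^2 + \|v\|^2 - \|u-v\|^2)$ then shifts by up to $\eps R^2 \cdot O(1) = O(\eps)\|v\|^2$ — which is fine — but the subtlety is the interaction with the \emph{constraint} $\|u'\| \le 1$: we cannot simultaneously keep $\|u'\|$ at its ideal value and keep all inner products ideal, and balancing these two forces a loss. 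Concretely, I expect the clean way through is: do \emph{not} use Kirszbraun directly, but instead write $u' = \sum_v \lambda_v f(v)$ as a combination mirroring a near-optimal expression $u \approx \sum_v \lambda_v v$ (or use the projection of $\tilde f(u)$ onto $\mathrm{span}\{f(v)\}$), then optimize the single radial scaling; the $\sqrt\eps$ emerges from a Cauchy–Schwarz / AM–GM step balancing a term of order $\eps \cdot (\text{something})$ against a term of order $(\text{deficit})^2 / (\text{something})$. I would locate that trade-off explicitly, as it is both the technical heart of the lemma and the place where the exponent $\tfrac12$ is forced.
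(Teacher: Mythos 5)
There is a genuine gap, and it sits exactly where your first paragraph of estimates begins. Your argument takes $u'$ to be (a rescaling of) $\tilde f(u)$, where $\tilde f$ is a Kirszbraun extension of $f$, and then asserts that $\tilde f$ ``preserves the three pairwise distances among $\{u, v_0, v\}$ up to $(1+\eps)$,'' so that it is an approximate isometry on these three points and hence approximately preserves $\langle u,v\rangle$. This is false. Kirszbraun gives only the \emph{upper} Lipschitz bound $\|\tilde f(x)-\tilde f(y)\|\le(1+\eps)\|x-y\|$; it provides no lower bound whatsoever off the set $X$. In particular $\|\tilde f(u)-\tilde f(v_0)\|=\|\tilde f(u)\|$ can be far smaller than $\|u-v_0\|=1$, precisely when $f$ ``folds'' the geometry near $u$ --- which is the only interesting case. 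When $\|\tilde f(u)\|$ collapses, all inner products $\langle\tilde f(u),f(v)\rangle$ become small while $\langle u,v\rangle$ need not, and condition~2 fails by a quantity that does not tend to $0$ with $\eps$. Your own later remarks show you sense this (``the interaction with the constraint $\|u'\|\le 1$... forces a loss''), but the closing suggestion --- choose $u'=\sum_v\lambda_v f(v)$ mirroring $u\approx\sum_v\lambda_v v$ and hope a Cauchy--Schwarz/AM--GM step yields $\sqrt\eps$ --- is a description of the shape of the answer, not a proof; you never exhibit the $\lambda$, never verify $\|u'\|\le1$, and never derive the bound.

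The paper does not use Kirszbraun for this lemma at all. It phrases the existence of $u'$ as a two-player zero-sum game: it sets $\Phi(y,\lambda)=\sum_{v\in X}\bigl(\lambda(v)(\langle u,v\rangle-\langle y,f(v)\rangle)-3\sqrt\eps\,|\lambda(v)|(\|v\|^2+1)\bigr)$ with $y$ ranging over the Euclidean unit ball $B$ and $\lambda$ over the $\ell_1$ unit ball $\Lambda$, observes that a good $u'$ exists iff $\min_{y\in B}\max_{\lambda\in\Lambda}\Phi(y,\lambda)\le 0$, and invokes the von Neumann minimax theorem to swap the $\min$ and the $\max$. For a fixed maximizing $\hat\lambda$ one sets $P=\sum_v\hat\lambda(v)v$ and $P'=\sum_v\hat\lambda(v)f(v)$, picks the response $\hat y=P'/\|P'\|$, and controls $\|P\|-\|P'\|=\frac{\|P\|^2-\|P'\|^2}{\|P\|+\|P'\|}$ by polarizing the bi-Lipschitz condition; the $\sqrt\eps$ appears exactly from the balancing you anticipated (handling the regime $\|P\|\le 3\sqrt\eps\sum_v|\hat\lambda(v)|\,\|v\|^2$ separately, and otherwise dividing an $O(\eps)$ numerator by $\|P\|\ge 3\sqrt\eps\sum_v|\hat\lambda(v)|\,\|v\|^2$). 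The duality step is what you are missing: it lets you defeat one ``adversary distribution'' $\hat\lambda$ at a time rather than having to produce a single $u'$ that simultaneously controls every $v$, and that is what turns the heuristic estimate into a proof. Without some such convex-duality (or equivalently, separating-hyperplane / Helly-type) device, I do not see how to repair the Kirszbraun route.
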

\begin{proof}
Let $\Lambda = \set{\lambda \in \Rbb^X: \|\lambda\|_1 \leq 1}$ be the unit $\ell_1$-ball in the space of functions $\lambda:X \to \Rbb$  and $B = \set{y\in \Rbb^m: \|y\|_2 \leq 1}$ be the unit $\ell_2$-ball in $\Rbb^m$.
Define
$$\Phi(y, \lambda)  = \sum_{v\in X} \left(\lambda(v)(\langle u, v\rangle - \langle y, f(v)\rangle) - \Cconst|\lambda(v)| \sqrt{\varepsilon}(\|v\|^2 + 1)\right).$$
We shall prove that there exists $u'\in B$ such that for every $\lambda \in \Lambda$, $\Phi(u', \lambda) \leq 0$.
Observe that this $u'$ will satisfy the statement of the lemma for the following reason.
First, $\|u'\| \leq 1$. Second, let $I_v\in \Lambda$ be the indicator function of $v\in X$; then
$\Phi(u', I_v) \leq 0$
and $\Phi(u', -I_v) \leq 0$. Therefore,
$|\langle u, v\rangle - \langle u', f(v)\rangle| \leq \Cconst\sqrt{\varepsilon} (\|v\|^2 + 1)$,
as required.

To prove that such $u'$ exists, we show that $\min_{y\in B} \max_{\lambda\in\Lambda} \Phi(y, \lambda) \leq 0$. Note that
$\Lambda$ and $B$ are compact convex sets, $\Phi$ is linear in $y$ and concave in $\lambda$; thus, by the von Neumann minimax theorem~\cite{vonNeumann},
$$\min_{y\in B} \max_{\lambda\in\Lambda} \Phi(y, \lambda)  =  \max_{\lambda\in\Lambda} \min_{y\in B} \Phi(y, \lambda) .$$
Let $\hat\lambda\in \Lambda$ be the $\lambda$ that maximizes the expression on the right. We need to prove that there is $\hat y\in B$ s.t. $\Phi(\hat y, \hat \lambda) \leq 0$.
Define the point $P = \sum_{v\in V} \hat\lambda(v) v$ and
$P' = \sum_{v\in V} \hat\lambda(v) f(v)$. For every $y\in B$, we have
$$\Phi(y, \hat\lambda) = \langle u, P\rangle - \langle y, P'\rangle - \Cconst \sqrt{\varepsilon} \sum_{v\in X} |\hat\lambda(v)| \|v\|^2 - \Cconst\sqrt{\varepsilon} \|\hat \lambda\|_1.$$
Now, $\langle u, P\rangle \leq \|P\|$ since $\|u\| \leq 1$. Let $\hat y = P'/\|P'\| \in B$. We have,
$$\Phi(\hat y, \hat\lambda) \leq \|P\| - \|P'\| - \Cconst \sqrt{\varepsilon} \sum_{v\in X} |\hat\lambda(v)| \|v\|^2 - \Cconst\sqrt{\varepsilon} \|\hat \lambda\|_1.$$
If $\|P\| \leq \|P'\|$ then $\Phi(\hat y, \hat\lambda) \leq 0$ and we are done.
Similarly, if $\|P\| \leq \Cconst \sqrt{\varepsilon} \sum_{v\in X} |\hat\lambda(v)| \|v\|^2$, we are done. We assume below that
$\|P\| > \|P'\|$ and $\|P\| > \Cconst \sqrt{\varepsilon} \sum_{v\in X} |\hat\lambda(v)| \|v\|^2$.
Then,
$$\|P\| - \|P'\| = \frac{\|P\|^2 - \|P'\|^2}{\|P\| + \|P'\|} \leq \frac{\|P\|^2 - \|P'\|^2}{\|P\|} =
\frac{1}{\|P\|} \sum_{v,w \in X} \hat\lambda(v)\hat\lambda(w) (\langle v, w\rangle - \langle f(v), f(w)\rangle).$$
Since $f$ satisfies bi-Lipschitz condition (\ref{eq:bi-Lip}) and $\|v-w\|^2 \leq 2(\|v\|^2 + \|w\|^2)$, we have
\begin{align*}
|\langle v, w\rangle - \langle f(v), f(w)\rangle| &= \frac12\left| \|f(v)-f(w)\|^2 - \|f(v)\|^2 - \|f(w)\|^2 -
\|v-w\|^2 + \|v\|^2 + \|w\|^2\right|
\\
&\stackrel{\text{\footnotesize by (\ref{eq:bi-Lip})}}{\leq}
\frac{3\varepsilon}{2} \max(\|v - w\|^2, \|v\|^2 + \|w\|^2)
\leq
3\varepsilon (\|v\|^2 + \|w\|^2).
\end{align*}
Finally, using that $\sum_{v\in V} |\hat\lambda(v)| =  \|\hat \lambda\|_1$ and $\|P\| > \Cconst \sqrt{\varepsilon} \sum_{v\in X} |\hat\lambda(v)| \|v\|^2$, we obtain
$$\|P\| - \|P'\| \leq \frac{3\varepsilon}{\|P\|}\sum_{v,w\in X} |\hat\lambda(v)\hat\lambda(w)| (\|v\|^2 + \|w\|^2\bigr) =
\frac{6\varepsilon \|\hat \lambda\|_1}{\|P\|}\sum_{v\in X} |\hat\lambda(v)| \|v\|^2 \leq \frac{6\varepsilon \|\hat \lambda\|_1}{3\sqrt\varepsilon} \leq 2\sqrt{\varepsilon}.$$
Therefore, $\Phi(\hat y, \hat\lambda) < 0$.
\end{proof}

Now we proceed with the proof of Theorem~\ref{thm:one-point-ext}. Let $u'\in \Rbb^m$ as in Lemma~\ref{lem:one-point-ext-main} and $w' = \sqrt{1 - \|u'\|^2} e_{m+1}$ (where $e_{m+1}$
is a standard basis vector for $\Rbb^{m+1}$). Note that $w'$ is orthogonal to all vectors $f(v) \in \Rbb^m$.
Extend $f$ to $f'$ by letting $f'(u) = u' + w'$. Then,
$\|f'(u)\|^2 = \|u'\|^2 + \|w'\|^2 = 1$.
For every $v\in X$, we have
\begin{align}
\|f'(v) - f'(u)\|^2 &= \|w'\|^2 + \|f(v) - u'\|^2 = (\|w'|^2 + \|u'\|^2) + \|f(v)\|^2 - 2\langle f(v), u'\rangle \\
&= 1 + \|f(v)\|^2 - 2\langle f(v), u'\rangle,\label{eq:diff-1}\\
\|v - u\|^2 &=  1+ \|v\|^2 - 2\langle v, u\rangle.\label{eq:diff-2}
\end{align}
From bounds $\|v - u\|^2 \geq 1$ and $\|v - u\|^2 \geq (\|v\| - 1)^2$, it easily follows that
$\|v - u\|^2 \geq (\|v\|^2 + 1)/5$.
By (\ref{eq:diff-1}), (\ref{eq:diff-2}), and the bound on $|\langle f(v), u'\rangle - \langle v, u\rangle|$ from Lemma~\ref{lem:one-point-ext-main}, we have
$$\bigl|\|f'(v) - f'(u)\|^2 - \|v - u\|^2\bigr| \leq 3\varepsilon \|v\|^2 +6\sqrt\varepsilon (\|v^2\| + 1) \leq
9 \sqrt\varepsilon (\|v^2\| + 1) \leq
45 \sqrt\varepsilon \|v - u\|^2 .$$
This implies that $f'$ has distortion $1+ O(\sqrt\varepsilon)$.
\end{proof}

\subsection{Lower bound}
In this section, we show that the bound in Theorem~\ref{thm:one-point-ext} is tight (up to a constant factor in the $O$-notation) -- extending a map with distortion $1+\eps$ by one point might require blowing up the distortion to $1+\Omega(\sqrt \eps)$,
even when $n = m = 1$ (the extension $f'$ may use extra dimensions).

The construction is as follows. Consider points: $A = 0$, $B = \eps$, $B' = -\eps$, and $C = 1$. Let $X = \{A,B,C\}$.
Consider map $f:X\to \Rbb$ that maps $A$, $B$, $C$ to points $A$, $B'$, $C$, respectively.
Clearly $f$ has distortion $\frac{C-B'}{C-B} = \frac{1+\eps}{1-\eps} \leq 1+3\eps$ for $\eps \leq 1/3$. Our goal is to extend $f$ to the fourth point $D=\sqrt \eps$. Note that we can assume that the extension uses at most one additional dimension.

\begin{claim}
Any outer extension of the map $f$ to the point $D$ has distortion at least $(1+\sqrt \eps/2)$.
\end{claim}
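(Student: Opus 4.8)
The plan is to argue by contradiction: suppose some outer extension $f'$ maps $D = \sqrt\varepsilon$ to a point $D' \in \Rbb^{m}$ with distortion strictly less than $1 + \sqrt\varepsilon/2$. After rescaling, we may assume $f'$ is non-contracting and has Lipschitz constant $< 1 + \sqrt\varepsilon/2$, so for every pair of points in $X \cup \{D\}$ the image distance lies between the original distance and $(1+\sqrt\varepsilon/2)$ times it. The key tension is that $D$ is simultaneously close to $B = \varepsilon$ (distance $\sqrt\varepsilon - \varepsilon \approx \sqrt\varepsilon$) and far from $C=1$ (distance $1 - \sqrt\varepsilon$), while its image $D'$ must be close to $B' = -\varepsilon$ and far from $C = 1$; but $B'$ and $C$ are $1+\varepsilon$ apart whereas $B$ and $C$ were only $1-\varepsilon$ apart, so pinning $D'$ near $B'$ forces it too close to $C$ relative to the required distance $1-\sqrt\varepsilon$.

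Concretely, I would write the two constraints that squeeze $D'$. From the pair $(B,D)$: $\|D' - B'\| \le (1+\sqrt\varepsilon/2)(\sqrt\varepsilon - \varepsilon)$, which is roughly $\sqrt\varepsilon(1+\sqrt\varepsilon/2)(1-\sqrt\varepsilon) \le \sqrt\varepsilon(1 - \sqrt\varepsilon/2 + O(\varepsilon))$. From the pair $(C,D)$: $\|D' - C\| \ge 1 - \sqrt\varepsilon$. Also $\|B' - C\| = 1 + \varepsilon$. Now apply the triangle inequality: $\|D' - C\| \le \|D' - B'\| + \|B' - C\| \le \sqrt\varepsilon(1 - \sqrt\varepsilon/2) + 1 + \varepsilon + O(\varepsilon^{3/2})$. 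For this to be compatible with $\|D' - C\| \ge 1 - \sqrt\varepsilon$ I need the reverse direction of the triangle inequality to bite; the cleaner move is to project onto the line through $B'$ and $C$ (or equivalently use that $\|D'-C\|^2 \le \|D'-B'\|^2 + 2\langle B'-D', C - B'\rangle + \|B'-C\|^2$ is not tight enough, so instead bound the coordinate of $D'$ along the $B' \to C$ direction). Let $t$ be the signed length of the projection of $D' - B'$ onto the unit vector from $B'$ toward $C$. Then $\|D' - C\|^2 = (1 + \varepsilon - t)^2 + (\text{perp})^2$ and $\|D' - B'\|^2 = t^2 + (\text{perp})^2$, so $\|D'-C\|^2 - \|D'-B'\|^2 = (1+\varepsilon)^2 - 2(1+\varepsilon)t$. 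Combining with the upper bound on $\|D'-B'\|$ and the lower bound $\|D'-C\| \ge 1-\sqrt\varepsilon$ gives an inequality forcing $t \le$ something like $\varepsilon/2 + \sqrt\varepsilon \cdot(\text{const})$... wait, I need $t$ small, and then $\|D' - C\|^2 \ge (1+\varepsilon - t)^2 \ge (1 + \varepsilon/2)^2$ roughly, but we also need $\|D'-C\| \le (1+\sqrt\varepsilon/2)(1-\sqrt\varepsilon)$ which is about $1 - \sqrt\varepsilon/2 - \varepsilon/2 < 1$: contradiction with $\|D'-C\| \gtrsim 1 + \varepsilon/2 - O(\varepsilon)$ once one tracks the constants carefully. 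So the actual computation is: the upper bound on $\|D'-C\|$ (from non-expansion on the pair $(C,D)$, using $\|C-D\| = 1-\sqrt\varepsilon$) combined with the lower bound on $\|D'-C\|$ derived via the projection forces the contradiction, giving the claimed $1 + \Omega(\sqrt\varepsilon)$.

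The main obstacle will be bookkeeping the constants so that the gap genuinely survives as $\Omega(\sqrt\varepsilon)$ rather than collapsing into the $O(\varepsilon)$ error terms — the two competing effects (the $\sqrt\varepsilon$ proximity of $D$ to $B$ and the $1\pm\varepsilon$ asymmetry created by flipping $B$ to $B'$) are both genuinely present, but the winning margin is of order $\sqrt\varepsilon$ only because $D = \sqrt\varepsilon$ sits at the ``sweet spot'' balancing $\|D-B\| \approx \sqrt\varepsilon$ against $\|B - B'\| = 2\varepsilon$; choosing $D$ elsewhere would weaken the bound. I would therefore set up the inequalities symbolically, substitute $D = \sqrt\varepsilon$ at the end, and verify that the leading contradiction term is $c\sqrt\varepsilon$ for an explicit $c > 0$ (aiming for $c \ge 1/2$ to match the statement), discarding lower-order terms via the assumption $\varepsilon \le 1/3$. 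One should also note at the outset that restricting $f'$ to use at most one extra dimension is without loss of generality (by projecting the four image points onto the affine span of any three of them, or onto a suitable $2$-plane), which is why the ``perp'' coordinate above can be taken scalar.
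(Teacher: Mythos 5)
Your proposal omits the constraint coming from the pair $(A, D)$, and this is not a cosmetic omission---without it the inequalities you set up are mutually consistent and no contradiction arises. Concretely, place $D' = \sqrt{\eps} - 2\eps$ on the real line. Then $\|D'-B'\| = \sqrt{\eps} - \eps = \|D-B\|$ exactly, and $\|D'-C\| = 1 - \sqrt{\eps} + 2\eps$, which for small $\eps$ sits comfortably inside your allowed window $\bigl[1-\sqrt{\eps},\ (1+\sqrt{\eps}/2)(1-\sqrt{\eps})\bigr]$ since the latter equals $1 - \sqrt{\eps}/2 - \eps/2$. So the projection bound you are hoping for, $t \lesssim \sqrt{\eps}/2$, is simply not forced: from the $(B,D)$ and $(C,D)$ constraints alone you can only conclude $t \lesssim \sqrt{\eps}$, which is exactly where an ``honest'' image of $D$ would sit. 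The only pair that goes badly wrong for the configuration $D' = \sqrt{\eps} - 2\eps$ is $(A,D)$, where $\|D'-A\| = \sqrt{\eps} - 2\eps$ versus $\|D-A\| = \sqrt{\eps}$, a contraction by a factor $1 - 2\sqrt{\eps}$.

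The paper's argument keeps the $(A,D)$ constraint as the centerpiece. It pairs the lower bound $\|D'-A\| \geq (1-\sqrt{\eps}/2)\sqrt{\eps} = \sqrt{\eps}-\eps/2$ against the upper bound $\|D'-B'\| \leq (1+\sqrt{\eps}/2)(\sqrt{\eps}-\eps) \leq \sqrt{\eps}-\eps/2$; since these bounds coincide, $D'$ must be at least as close to $B'=-\eps$ as to $A=0$, hence lies on the far side of the perpendicular bisector $\{x = -\eps/2\}$, i.e.\ its first coordinate satisfies $x \leq -\eps/2$. That single geometric fact---$D'$ lands on the \emph{wrong} side of $B'$, opposite from where $D$ sits relative to $B$---immediately makes $\|D'-C\| \geq 1-x \geq 1+\eps/2 > 1$, while $\|D-C\| = 1-\sqrt{\eps}$, and the ratio exceeds $1+\sqrt{\eps}$. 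Your intuition that the ``winning margin'' comes from balancing $\|D-B\|\approx\sqrt{\eps}$ against the $2\eps$ flip of $B$ is pointing at the right tension, but the mechanism is the three-way squeeze among $A$, $B'$, and $C$, not a two-point projection argument. Also note that your opening normalization (``rescale so $f'$ is non-contracting with Lipschitz constant $< 1+\sqrt{\eps}/2$'') is not available here: $f$ is already fixed on $X$ and you cannot rescale it; the correct move is to derive each constraint from the distortion bound directly by comparing to the reference pair $(A,C)$, whose image ratio is exactly $1$.
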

\begin{proof}
Let $f(D) = (x,y)\in\Rbb^2$, and suppose that the distortion is less than $(1+{\sqrt \eps}/2)$. Then we must have
\begin{itemize}
\item $\norm{f(D)-f(A)} \geq (1-{\sqrt \eps}/2) \norm{D-A}$, so $x^2+y^2 \geq (\sqrt \eps - \eps/2)^2$.
\item
$\norm{f(D)-f(B)} \leq (1+{\sqrt \eps}/2) \norm{D-B}$, so $(x+\eps)^2+y^2 \leq \bigl((1+{\sqrt \eps}/2)(\sqrt \eps - \eps)\bigr)^2 \leq (\sqrt \eps - \eps/2)^2$.
\end{itemize}
We get that
$x^2+y^2 \geq (x+\eps)^2+y^2$. Thus, $x\leq - \eps/2$.
Then
$\displaystyle
\frac{\norm{f(D)-f(C)}}{\norm{D-C}} \geq \frac{1}{1-\sqrt \eps} \geq 1+\sqrt \eps,
$
which is a contradiction.
\end{proof} 
\section{Applications -- prioritized and terminal dimension reductions}\label{sec:applications} 
In this section, we prove Theorems~\ref{thm:prioritized-embedding} and~\ref{thm:embedding-with-terminals}.

\begin{proof}[Proof of Theorem~\ref{thm:prioritized-embedding}]
First, we construct a $(c_1\log\log j, c_2\log^{3 + \varepsilon} j)$-prioritized dimension reduction. Denote $C = 3 + \eps$.
We define an increasing family of $T = \lceil\log_C\log_2\log_2 N\rceil$ subsets $S_0, S_1,  \dots, S_T$ of $X \subset \Rbb^d$: $S_i$ consists of the first $\min(2^{2^{C^i}},N)$ points according to the priority ranking $\pi$.

For each set $S_i$, we construct an embedding $f_i:S_i \to \Rbb^{d_i}$
with distortion at most $C^i$ for $d_i = O(\log |S_i|)$ in such a way that each $f_i$ is an outer extension of $f_{i-1}$.
We start with $S_0$ -- we let $f_0$ be an isometric embedding of $S_0$ (which consists of $4$ points) into $\Rbb^3$.
Then we iteratively construct mapping $f_{i}$.
At iteration $i$, we take map $f_{i-1}$ and extend it to map $f_i$ as follows.
Using Theorem~\ref{lem:bi-lip-ext}, we find an outer-bi-Lipschitz extension $h:S\to \Rbb^{d'}$ of $f_{i-1}$ to $S_i$.
The extension $h$ is not yet what we want:
\begin{itemize}
\item while, by Theorem~\ref{lem:bi-lip-ext}, its distortion is at most $3\cdot (3+\varepsilon)^{i-1}$, which is less than $C^i$ (the desired upper bound on the distortion),
\item the dimension $d'$ is possibly greater than $\Omega(\log |S_i|)$.
\end{itemize}
To reduce the dimension, we write $h(x) = h_1(x) \oplus h_2(x) \in \Rbb^{d_{i-1}} \oplus \Rbb^{d'- d_{i-1}}$,
here $h_1(x)$ is the vector consisting of the first $d_{i-1}$ coordinates of $h(x)$ and $h_2(x)$ is the vector
consisting of the remaining coordinates of $h(x)$. Since $h$ is an extension of $f_{i-1}$, we have
$h_1(x) = f_{i-1}(x)$ and $h_2(x) = 0$ for $x\in S_{i-1}$. Now, we use the Johnson--Lindenstrauss lemma
to find a dimension reduction $g$ from $h_2(S_i)$ to $\Rbb^{d''}$ with distortion at most $1 + \eps/3$, where
$d'' = c_{JL}\log |S_i|/\varepsilon^2$ for some absolute constant $c_{JL}$. We assume that $g(0) = 0$ (if necessary, we redefine $g$
as $g'(x) = g(x) - g(0)$). Finally, we let $f_i = (id \oplus g)\circ h$;
in other words, $f_i(x) =  h_1(x) \oplus g(h_2(x))$.

Note that $f_i(x)$ is an outer extension of $f_{i-1}$, since $f_i(x) = h_1(x) \oplus g(h_2(x)) = f_{i-1}(x) \oplus g(0) = f_{i-1}(x)$ for
$x\in S_{i-1}$.
The distortion of $id \oplus g$ is at most the distortion of $g$, which is at most $1 + \varepsilon/3$;
therefore, the distortion of $f_i$ is at most $(1 + \varepsilon/3)\times  3\cdot (3+ \varepsilon)^{i-1} = C^i$.
We bound the dimension
$$d_i =  d_{i-1} + d'' = 4 + \sum_{t=1}^{i} c_{JL}\log |S_t|/\varepsilon^2 \leq 4 + \sum_{t=1}^{i-1} c_{JL} 2^{C^t}/\varepsilon^2
+ c_{JL} \log |S_i|/\varepsilon^2 = O(\log |S_i|).$$
The constant in the big-$O$ notation is proportional to $1/\varepsilon^2$.

Finally, let $f = f_T$. We verify that $f$ is $(c_1\log\log j, c_2\log^{3 + \varepsilon} j)$-prioritized dimension reduction.
Fix some $j\in \set{1,\dots, N}$. Let $S_i$ be the smallest of the sets $S_0, \dots, S_T$ that contains $\pi(j)$;
i.e.,  $i = \lceil\log_C\log_2\log_2 j\rceil$ if $j > 4$, and $i = 0$ otherwise.
Then $f$ restricted to $\pi(1), \dots, \pi(j)$ coincides with $f_{i}$. The distortion of
$f_i$ is at most (for $j\geq 4$)
$$C^i \leq C^{1 + \log_C\log_2\log_2 j} \leq C \log_2\log_2 j = (3+\varepsilon) \log_2\log_2 j.$$
Further, $f(\pi(j)) = f_i(\pi(j)) \in \Rbb^{d_{i}}$. Hence, in the vector $f(\pi(j))$ all but the first $d_i$ coordinates
are equal to $0$; we upper bound $d_i$ as follows (for $j\geq 4$):
$d_i \leq O(\log |S_i|) \leq O(2^{C^i}) \leq O\left((2^{C^{i-1}})^C\right) \leq O(\log j)^C$,
as required. Note that the image of $X$ under $f$ lies in space $\Rbb^{d_T}$ of dimension $d_T = O(\log |S_T|) = O(\log N).$

By setting the parameters differently, we can obtain different trade-offs between the distortion and dimension.
Fix a parameter $k \in {\mathbb N}$, $1 < k < \log\log\log N$. Let $T = k$ and $S_i$ be the set consisting of the first $2^{\log_2^{i/k} N}$ points in $X$, according to the priority ordering $\pi$. Construct maps $f_i$ as described above.
The distortion of $f$ is at most
$C^T = (3+\varepsilon)^k$. The vector $f(\pi(j))$ lies in the space $\Rbb^{d_i}$, where $i = \lceil k \frac{\log_2\log_2 j}{\log_2\log_2 N}\rceil$
and
\begin{align*}
d_i &\leq \frac{c_{JL}}{\varepsilon^2} \sum_{t = 0}^i \log |S_t| =
O\bigl(\sum_{t = 0}^i \log_2^{t/k} N\bigr) \leq
O(\log_2^{i/k} N) = O\bigl(\underbrace{\log_2^{(i-1)/k} N}_{\text{less than } \log_2 j} \cdot \log_2^{1/k} N\bigr)\\
&\leq O(\log_2 j\, \log_2^{1/k} N).
\end{align*}
We can compute map $f$ in polynomial time, since, at each iteration, we can compute the outer extension $h$ and dimension reduction $g$ in polynomial time.
\end{proof}

Now we prove Theorem~\ref{thm:embedding-with-terminals}.
\begin{proof}[Proof of Theorem~\ref{thm:embedding-with-terminals}]
First we apply the Johnson--Lindenstrauss lemma to $X$ with $\eps' = \varepsilon^2$. We get an embedding
$g:X \to \Rbb^{d'}$ with the distortion at most $1 + \varepsilon^2$ and $d' = O(\log N / \varepsilon^4)$; we rescale it so that
$\lambda \|x - y\| \leq \|g(x) - g(y)\| \leq \lambda(1 + \varepsilon^2)\|x - y\|$,
where $\lambda  = 1 + c\varepsilon$ (we will specify $c$ later).

For every point $p \in \Rbb^d$, we extend $g$ to a map $g_p:X \cup \set{p} \to \Rbb^{d'+1}$ using Theorem~\ref{thm:one-point-ext};
for $p \in X$, $g_p = g$. The distortion of $g_p$ is $1 + O(\sqrt{\varepsilon^2}) = 1 + O(\eps)$.
Finally, we let $f(p) = g_p(p)$.
The image of $f$ lies in $\Rbb^{d'+1}$, as required.
For every $x\in X$ and $p \in \Rbb^d$, we have $g_x(x) = g(x) = g_p(x)$ and
$$\|f(p) - f(x)\| = \|g_p(p) - g_x(x)\| = \|g_p(p) - g_p(x)\| \in [(1+c\eps)(1-O(\eps)) \|p - x\|, (1+c\eps)(1+O(\eps)) \|p - x\|].$$
We choose $c$ so that the $(1+c\eps)(1-O(\eps))$ term is $1$; then $(1+c\eps)(1+O(\eps)) = 1 + O(\eps)$.

Note that we can compute $f(x)$ in polynomial time, since we can compute each map $g_p$ in polynomial time.
\end{proof} 


\section{Overview of the extension result for maps from \texorpdfstring{$\Rbb$}{R} to \texorpdfstring{$\Rbb$}{R}.} \label{sec:line-overview}

In this section, we consider the case of map $f \colon X \to \Rbb$ with distortion $(1+\eps)$, where $X \subset \Rbb$.
We show that such a map is very structured, which allows us to extend it to $\tilde f\colon \Re\to \Re^2$ with the 	 distortion
$1+O(1/\log^2(1/\eps))$. Here we provide an informal overview to illustrate the main steps.

First, suppose that $X$ consists of three points $0,\eps,1$ that $f$ maps to $0,-\eps,1$, respectively.
It turns out that this simple case is in fact very important. We extend $f$ to the whole segment $[0;1]$ as follows\footnote{Extending $f$ to the whole $\Rbb$ requires a bit more work.}.
For $0\leq x \leq \eps$, we map $x$ to $(-x,0)$, and for $\eps \leq x \leq 1$, we map $x$ to point $g(x) = (r(x),\varphi(x))$ in polar coordinates, where the radius is $r(x)=x$ and the angle is $\varphi(x) = \frac{\pi\ln(1/x)}{\ln(1/\eps)}$, see Figure~\ref{spiral_fig_1}, page~\pageref{spiral_fig_1}. First, the map is continuous (i.e., $g(\eps)=-\eps$ and $g(1) = 1$). Second, for every $x$, $\|g(x)\| = |x|$, which implies that $g$ is non-contractive. We refer to this map as the ``spiral''. We prove that its distortion is $1+O(1/\ln^2(1/\eps))$, and in fact this is the optimal distortion one can achieve for this specific choice of $X$ and $f$ (see Section \ref{sec:line-lower} for the proof).

For the general case, we decompose $f$ into ``flips'' and use this decomposition to assemble the extension from the above spirals on various distance scales.

For a set $X$ and map $f$, consider how $f$ changes the relative ordering of points $X$; denote the corresponding permutation by $\pi_f \in S_{|X|}$.
For instance, if $X = \{x_1, x_2, x_3\}$, where $x_1 < x_2 < x_3$, and $f(x_1) < f(x_3) < f(x_2)$,
we set $\pi_f = (1\ 3\ 2)$. We show that a permutation can arise as $\pi_f$ for some $f$ iff it excludes $(3\ 1\ 4\ 2)$ and $(2\ 4\ 1\ 3)$
as a subpermutation. Furthermore, we show that $\pi_f$ can be decomposed into a laminar sequence of \emph{flips}.
We start with the identity permutation, and then iteratively choose a substring and reverse its order (this is one flip).
We do this so that every two flips are either disjoint, or the later is strictly contained in the earlier one.
For example, if $\pi_f = (3\ 1\ 2\ 4\ 6\ 5)$, then the decomposition is as follows: $(1\ 2\ 3\ 4\ 5\ 6)$, $({\color{red}3\ 2\ 1}\ 4\ 5\ 6)$,
$(3\ {\color{red}1\ 2}\ 4\ 5\ 6)$, $(3\ 1\ 2\ 4\ {\color{red}6\ 5})$.

We use this decomposition to build the desired extension. For each flip, we add two spirals. We show that the points that participate in a given flip are well-separated from others. For example if the permutation is $(1\ 3\ 2)$, then the distance between $2$ and $3$ should be much smaller by a factor of $\eps$) than the distance from $1$ to either of them -- both in the domain and in the image.
We show that this separation is sufficient for these spirals not to interfere much with each
other, and the bound of $1 + O\left(1 / \log^2(1 / \eps)\right)$ on the distortion holds for the overall construction.
See Figure~\ref{spiral_fig_1}, page~\pageref{spiral_fig_1}, for the construction for the case $\pi_f = (3\ 1\ 2\ 4\ 6\ 5)$.


{
\small
\bibliographystyle{alpha}
\bibliography{bibfile}
}
\appendix
\section{Outer extension of a map from \texorpdfstring{$\Rbb$}{R} to \texorpdfstring{$\Rbb$}{R}}\label{sec:line-case}
\subsection{Extension to the whole line}\label{sec:line}
In this section we prove the following theorem.

\begin{theorem}[Theorem~\ref{thm:line-embedding}]
\label{line_main}
  Let $X \subset \Rbb$ be an arbitrary set.
  Suppose that $f \colon X \to \Rbb$ is a map such that for every $x_1, x_2 \in X$, we have:
  \begin{equation}
  \label{near_iso_line}
  |f(x_1) - f(x_2)| \in (1 \pm \eps) \cdot |x_1 - x_2|.
  \end{equation}
  Then there exists a map $h \colon \Rbb \to \Rbb^2$ such that:
  \begin{itemize}
  \item For every $x \in X$, we have $h(x) = (f(x), 0)$;
  \item For every $u, v \in \Rbb$, we have
    $$
    \|h(u) - h(v)\| \in \left(1 \pm O\left(\frac{1}{\log^2(1 / \eps)}\right)\right) \cdot |u - v|.
    $$
  \end{itemize}
\end{theorem}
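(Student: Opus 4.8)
The plan is to follow the roadmap laid out in Section~\ref{sec:line-overview}: first reduce the problem to a combinatorial description of the map $f$, then build the extension $h$ by superimposing explicit ``spiral'' maps, one per flip in a laminar decomposition, and finally verify the distortion bound scale by scale.

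\textbf{Step 1: structural analysis of $f$.} I would first observe that since $f$ is $(1+\eps)$-bi-Lipschitz on $X\subset\Rbb$, it is injective, so it induces a permutation $\pi_f$ of $X$ recording how $f$ changes the linear order. The key structural claim is that $\pi_f$ avoids the patterns $(3\ 1\ 4\ 2)$ and $(2\ 4\ 1\ 3)$: if four points $x_1<x_2<x_3<x_4$ realized such a pattern, the four pairwise distances among their images would be forced (up to $(1\pm\eps)$) to satisfy contradictory inequalities --- essentially because on the line, ``interleaving'' two pairs is geometrically impossible without large distortion. From pattern-avoidance I would derive (this is a known fact about separable permutations) that $\pi_f$ decomposes into a \emph{laminar} family of flips: a sequence of interval reversals where any two flip-intervals are nested or disjoint. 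I would also need the quantitative companion of this combinatorics: whenever a flip occurs on an interval $I$ of $X$, the sub-block of points whose order is reversed inside $I$ must be separated from the rest of $I$ by a factor of order $1/\eps$ in \emph{both} the domain and the image (this is the $(1\ 3\ 2)$ example: $2$ and $3$ are $\eps$-close relative to their distance to $1$). This follows by the same type of local distance estimate as the pattern-avoidance argument.

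\textbf{Step 2: the single spiral and its distortion.} Next I would analyze the basic gadget: the map $g$ on $[\eps,1]$ sending $x$ (polar) to radius $r(x)=x$, angle $\varphi(x)=\pi\ln(1/x)/\ln(1/\eps)$, glued to the reflection $x\mapsto(-x,0)$ on $[0,\eps]$, and extended suitably outside $[0,1]$. Since $\|g(x)\|=|x|$, the map is automatically non-contractive at the level of norms, but I must check the genuine statement $\|g(x)-g(y)\|\in(1\pm O(1/\log^2(1/\eps)))\,|x-y|$. The computation reduces to controlling $\|g(x)-g(y)\|^2 = x^2+y^2-2xy\cos(\varphi(x)-\varphi(y))$ against $(x-y)^2$; writing $t=\log(x/y)$ one gets $\varphi(x)-\varphi(y)=\pi t/\ln(1/\eps)$, and a Taylor expansion of $\cos$ shows the relative error is $O((\varphi(x)-\varphi(y))^2)$ when $x,y$ are within a constant factor, and is easily bounded when they are far apart; optimizing over the logarithmic parametrization gives exactly the $1/\log^2(1/\eps)$ loss. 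I would also record the matching lower bound here (Section~\ref{sec:line-lower}), showing this gadget is optimal, though that is not needed for the theorem itself.

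\textbf{Step 3: assembling and the multi-scale estimate.} Finally I would define $h$ on all of $\Rbb$ by starting from a base embedding respecting the coarsest structure and, for each flip in the laminar decomposition, inserting an appropriately scaled and translated copy of the spiral gadget acting on the coordinates ``local'' to that flip; nestedness of the flips makes this well-defined (inner flips act on finer scales, within the image already produced by outer flips), and the gluing is continuous by construction. For $u,v\in\Rbb$, let $I$ be the smallest flip-interval (scale) that ``sees'' the pair $(u,v)$ --- i.e., separates them or contains both non-trivially. The separation property from Step~1 guarantees that flips at coarser scales treat $u$ and $v$ essentially rigidly (their contribution to $\|h(u)-h(v)\|$ is a near-isometry up to $1\pm O(\eps)$), that the relevant flip at scale $I$ contributes the spiral distortion $1\pm O(1/\log^2(1/\eps))$ from Step~2, and that finer flips are confined to balls so much smaller (by factors of $1/\eps$) that their cumulative perturbation is a convergent geometric series summing to $O(\eps)\ll 1/\log^2(1/\eps)$. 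Combining the three contributions yields the claimed bound.

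\textbf{Main obstacle.} I expect the hard part to be Step~3 --- proving that the spirals at different scales genuinely do not interfere. Each individual gadget is a clean computation, but the extension must behave well simultaneously for \emph{every} pair $u,v\in\Rbb$, including pairs straddling several flip-intervals or landing in the ``transition'' regions where one gadget is glued to the identity; making the separation bounds from Step~1 strong enough that the error terms from all coarser and all finer scales are both dominated by the single governing scale, and doing so uniformly, is the delicate part. The combinatorial Step~1 (pattern avoidance and the laminar decomposition with quantitative separation) is the second most technical piece, since one must extract metric separation, not just order information, from the $(1+\eps)$-distortion hypothesis.
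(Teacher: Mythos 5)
Your three-stage plan --- pattern-avoidance giving a laminar flip decomposition with $1/\eps$ metric separation, analysis of the logarithmic spiral with distortion $1+O(1/\log^2(1/\eps))$, and a multi-scale assembly --- matches the paper's proof structure exactly, and you correctly identify that the hard work is in Step~3. Steps~1 and~2 as you outline them are essentially what the paper does (Lemmas~\ref{bad_perm_lemma}, \ref{perm_decomposition}, \ref{lem_separ}, and Lemma~\ref{lem:basic-map}).

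The genuine gap is that you have not actually specified the assembly in Step~3, and the construction the paper uses is not quite what your sketch suggests. Rather than ``inserting a scaled copy of the spiral gadget'' per flip with inner spirals acting ``within the image already produced by outer flips,'' the paper introduces \emph{portals}: for each flip $t$ with signed diameter $\Delta_t$, it defines four buffer points on each side of the flipped block at distances $|\Delta_t|/\eps^{1/3}$ and $|\Delta_t|/\eps^{2/3}$, namely $\alpha_t,\beta_t,\gamma_t,\delta_t$ in the domain and $\alpha_t',\beta_t',\gamma_t',\delta_t'$ in the image. The spiral is applied only on the two annular intervals $[\alpha_t,\beta_t]$ and $[\gamma_t,\delta_t]$, and $h$ is taken to be piecewise linear on all remaining bounded intervals. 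The separation lemma ensures the portal regions for distinct flips are disjoint, and the choice of exponents $1/3$ and $2/3$ is what makes the gluing work: distances between portals are only $(1\pm O(\eps^{1/3}))$-distorted, the spiral's $O(1)$ total movement scales down to $O(|\Delta_t|/\eps^{2/3})$, which is $O(\eps^{1/3})$ relative to any distance that spans a portal region. The distortion bound is then proved by a case analysis on whether $u$ and $v$ land on a spiral (type~A) or a linear piece (type~B), and for type~A/type~A pairs further by whether the two governing flips are disjoint or nested (Claims~\ref{ext_f_iso} through~\ref{claim_ab}). Your proposed ``convergent geometric series summing to $O(\eps)$'' is not how the paper controls cross-scale interference, and as stated it is not obviously correct: the spiral at scale $\Delta$ moves points by $\Theta(\Delta/\eps^{2/3})$, so the relative perturbation to a pair at distance $|u-v|\gtrsim \Delta/\eps$ is $O(\eps^{1/3})$, not $O(\eps)$ --- still small enough, but the bookkeeping is by exponent arithmetic per pair, not a geometric sum. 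Without the portal construction (or an equivalent device pinning down exactly where each spiral lives, how it glues to the linear pieces, and what the transition regions look like), Step~3 remains an unfilled gap rather than a proof.
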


By a standard compactness argument, it is enough to handle the case of a finite $X$. From now on, we denote
$n = |X|$.

\begin{figure}
\centering
\begin{subfigure}{.45\textwidth}
\centering
\includegraphics{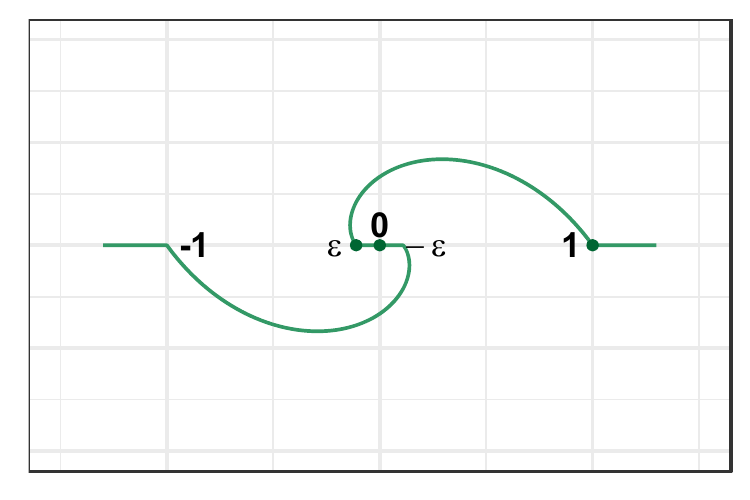}
\end{subfigure}%
\begin{subfigure}{.45\textwidth}
\centering
\includegraphics{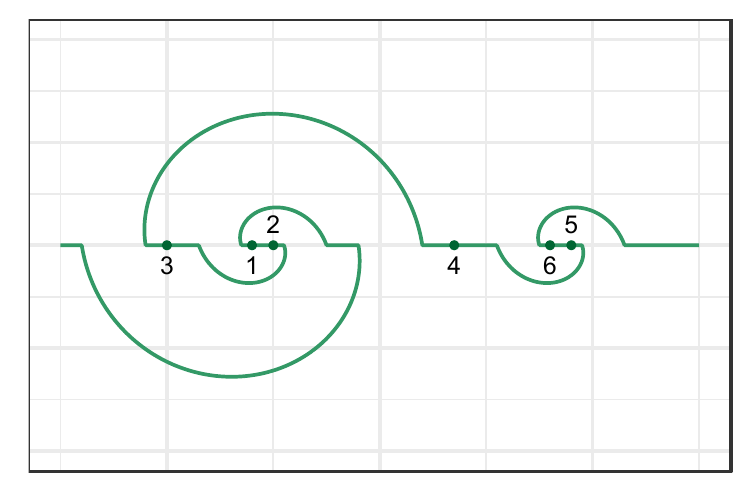}
\end{subfigure}
\caption{Left: one possible extension for the map $0 \mapsto 0$, $\varepsilon \mapsto -\varepsilon$, $1 \mapsto 1$. It has distortion
$1 + O(1 / \log^2(1 / \eps))$, which is tight for this example. Right: an extension built from the spirals recursively for the map $f$
with $\pi_f = (3\ 1\ 2\ 4\ 6\ 5)$. The picture is intentionally out of proportion.}
\label{spiral_fig_1}
\end{figure}

\subsubsection{Characterizing near-isometric maps} To prove the main theorem, we will first prove the necessary conditions $f$ needs to satisfy in order to be a near-isometric mapping. In the rest, we will denote the initial point set by $X = \{x_1, x_2, \ldots, x_n\}$ and without loss of generality we may assume that $x_1 < x_2 < \ldots < x_n$.
Let $\pi_f \in S_n$ be the permutation defined by our mapping $f$ such that $f(x_{\pi_f(1)}) < f(x_{\pi_f(2)}) < \ldots < f(x_{\pi_f(n)})$. The following lemma characterizes the properties of $\pi_f$.

\begin{definition}[Sub-permutation]
Given a permutation $\sigma$ of $[k]$, and a permutation $\pi$ of $[n]$, where $n\geq k$, we say that $\pi$ contains $\sigma$ as a sub-permutation iff there exists $i_1<\cdots<i_k \in [n]$ such that for any $j,j'\in [k]$, if $\sigma(j) < \sigma(j')$, then $\pi(i_j)<\pi(i_{j'})$.
\end{definition}

\begin{lemma}
\label{bad_perm_lemma}
If $\eps > 0$ is sufficiently small, then $\pi_f$ does not have $(3\ 1\ 4\ 2)$ or $(2\ 4\ 1\ 3)$ as sub-permutations.
\end{lemma}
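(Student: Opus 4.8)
The plan is to argue by contradiction: suppose $\pi_f$ contains one of the forbidden patterns, extract the four points of $X$ realizing it, and derive a quantitative obstruction from the near-isometry condition (\ref{near_iso_line}). By symmetry it suffices to handle $(3\ 1\ 4\ 2)$; the pattern $(2\ 4\ 1\ 3)$ is obtained by reversing the orientation of the target line (replacing $f$ by $-f$), which preserves (\ref{near_iso_line}) and maps one forbidden pattern to the other. So assume there are points $a < b < c < d$ in $X$ whose images satisfy $f(b) < f(d) < f(a) < f(c)$ (this is the order forced by the pattern $3\,1\,4\,2$: the first coordinate has rank $3$, the second rank $1$, etc.).

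The core of the argument is a scaling/normalization step. Since the four gap lengths $b-a$, $c-b$, $d-c$ on the domain side need not be comparable, I would first observe that (\ref{near_iso_line}) forces the image gaps to be within $(1\pm\eps)$ of the domain gaps, so the combinatorial constraint $f(b)<f(d)<f(a)<f(c)$ translates into genuine inequalities between sums of $\pm(1\pm\eps)$-perturbed gap lengths. Concretely, write $p = b-a$, $q = c-b$, $r = d-c$, all positive. The four ordering constraints on the images, after using $|f(x)-f(y)| \in (1\pm\eps)|x-y|$ for each of the six pairs, become a small system of linear inequalities in $p,q,r$ with coefficients of the form $(1\pm\eps)$. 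For instance, $f(a) > f(d)$ together with $f(a) > f(b)$, $f(d) > f(b)$ forces $f(a) - f(b) = (1\pm\eps)(q+\text{something})$ while $f(a)-f(b)$ must also relate to $p$; chasing these through shows that to have $f(a)$ lie strictly above $f(d)$ one needs (roughly) $p \le (1+O(\eps))\,q$ *and* $p \le (1+O(\eps))\,r$ simultaneously, i.e. the "outer" gap $p$ must be dominated by both "inner-ish" gaps. Symmetrically, realizing the crossing of the $c$-image above everything while the $d$-image drops below $a$'s image pins down the opposite domination, producing two incompatible inequalities of the type $p < (1+O(\eps))q$ and $q < (1+O(\eps))p$ versus a strict separation factor that is bounded away from $1$ — a contradiction once $\eps$ is below an absolute constant.

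I expect the main obstacle to be bookkeeping rather than conceptual: organizing the six pairwise constraints so that the contradiction pops out cleanly, and checking that every implied inequality is *strict* by a multiplicative factor that survives the $(1\pm\eps)$ slack (so that "sufficiently small $\eps$" is all that is needed, with no hidden dependence on $n$ or on the configuration). A clean way to package this is to normalize by setting the smallest of the three gaps equal to $1$, reducing to a compact region in $(p,q,r)$-space, and then note that the pattern $(3\ 1\ 4\ 2)$ is *impossible even for an exact isometry* of $\Rbb$ (an isometry of $\Rbb$ is either order-preserving or order-reversing, so $\pi_f$ would have to be the identity or the reversal, neither of which contains the pattern); the near-isometry case then follows by a continuity/robustness argument — the set of configurations realizing the forbidden pattern under a $(1+\eps)$-distortion map is empty for $\eps=0$ and, being cut out by finitely many closed conditions, stays empty for $\eps$ below some absolute threshold. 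This last observation lets me avoid grinding the linear system entirely and instead invoke a compactness argument on the normalized parameter space.
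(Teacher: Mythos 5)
Your symmetry reduction (exchange $f\leftrightarrow -f$ swaps $(3\ 1\ 4\ 2)$ with $(2\ 4\ 1\ 3)$) is fine, and your first route (extracting a contradiction directly from the six pairwise $(1\pm\eps)$ constraints) is exactly the spirit of the paper's proof. But you stop at the point you yourself flag as ``bookkeeping,'' and the intermediate inequalities you do write down are not right: with $a<b<c<d$, $p=b-a$, $q=c-b$, $r=d-c$, and $f(b)<f(d)<f(a)<f(c)$, the constraint $f(a)>f(d)$ together with $f(d)>f(b)$ actually forces $q+r\le O(\eps)\,p$ (the outer gap $p$ \emph{dominates} $q,r$, not the other way around), and symmetrically $p+q\le O(\eps)\,r$; these two are what clash. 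The paper's actual argument is a cleaner two-line averaging step and does not need to split into the three gaps at all: with $i<j<k<l$ and $f(x_k)<f(x_i)<f(x_l)<f(x_j)$, set $\Delta=x_l-x_i>0$ and observe $\Delta\ge (x_l-x_k)+(x_j-x_i)\ge (1-O(\eps))\big[(f(x_l)-f(x_k))+(f(x_j)-f(x_i))\big]\ge (2-O(\eps))(f(x_l)-f(x_i))\ge (2-O(\eps))\Delta$, a contradiction for small $\eps$. You should aim for an identity of that flavor rather than the system of per-gap inequalities.

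Your proposed compactness shortcut has two genuine gaps. First, normalizing the \emph{smallest} gap to $1$ does not give a compact parameter region: the other two gaps can still be arbitrarily large. You want to normalize the outer span $d-a=1$ (and, say, $f(a)=0$), which does bound $b,c\in[0,1]$ and $f(b),f(c),f(d)\in[-(1+\eps),1+\eps]$. Second, and more importantly, the conditions defining the bad configuration are \emph{not} ``finitely many closed conditions'': the ordering constraints $f(c)<f(a)<f(d)<f(b)$ are strict, so the feasible set is the intersection of a closed set (the distortion inequalities) with an open one, and a limit of feasible points as $\eps\to 0$ may only satisfy the order constraints weakly, possibly with coincident domain points. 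To make this argument honest you would have to pass to the closure (replace $<$ by $\le$, allow $a\le b\le c\le d$ with $d-a=1$) and check that an exact isometry cannot satisfy even the weak inequalities in any of the degenerate boundary configurations; that case analysis is doable but is absent from your sketch, and once you've done it you've essentially reproduced the direct quantitative argument anyway. So I'd recommend abandoning the compactness route and just completing the averaging computation.
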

\begin{proof}
Let us prove the statement for $(3\ 1\ 4\ 2)$, the proof for $(2\ 4\ 1\ 3)$ is the same.
Assume the contrary. Then, there exists $1 \leq i < j < k < l \leq n$ such that
\begin{equation}
\label{bad_perm}
f(x_k) < f(x_i) < f(x_l) < f(x_j).
\end{equation}
Denote $\Delta = x_l - x_i > 0$.
Then,
\begin{align*}
\Delta = x_l - x_i & \geq (x_l - x_k) + (x_j - x_i) \\&\geq (1 - O(\eps)) \cdot ((f(x_l) - f(x_k)) + (f(x_j) - f(x_i)))
\\& \geq (2 - O(\eps)) \cdot (f(x_l) - f(x_i))
\\& \geq (2 - O(\eps)) \cdot (x_l - x_i)
\\&= (2 - O(\eps)) \cdot \Delta,
\end{align*}
where the first step follows from $x_i < x_j < x_k < x_l$ (which in turn follows from $i < j < k < l$),
the second step follows from $f$ having distortion $(1 + \eps)$ and from~(\ref{bad_perm}),
and the fourth step again follows from $f$ being a near-isometry. Thus, if $\eps > 0$ is sufficiently small, we get a contradiction.
\end{proof}

\subsubsection{Permutation decomposition}
\begin{lemma}
\label{perm_decomposition}
If $\eps > 0$ is sufficiently small, then $\pi_f$ can be decomposed as follows.
We start with $\pi_0$ which is the identity permutation.
Then, we perform $T \geq 0$ \emph{flips} as follows.
Each flip $1 \leq t \leq T$ is defined by two numbers $1 \leq a_t < b_t \leq n$, naturally defining a segment in the permutation.
We obtain $\pi_t$ from $\pi_{t-1}$ as follows.
$$
\pi_t(k) = \begin{cases}
\pi_{t-1}(a_t + b_t - k),&\mbox{if $a_t \leq k \leq b_t$,}\\
\pi_{t-1}(k),&\mbox{otherwise.}\\
\end{cases}
$$
In words, we obtain $\pi_t$ from $\pi_{t-1}$ be reversing the segment $[a_t, b_t]$.
Moreover, the segments form a laminar family: for every $1 \leq t_1 < t_2 \leq T$ the segments $[a_{t_1}, b_{t_1}]$ and $[a_{t_2}, b_{t_2}]$ are either disjoint
or $[a_{t_1}, b_{t_1}] \supset [a_{t_2}, b_{t_2}]$.
The permutation $\pi_f$ is equal to the final permutation $\pi_T$.
\end{lemma}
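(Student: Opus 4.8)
The plan is to prove the lemma by recursion on $n = |X|$, using as the only structural input about $\pi_f$ the conclusion of Lemma~\ref{bad_perm_lemma}: $\pi := \pi_f$ contains neither $(3\ 1\ 4\ 2)$ nor $(2\ 4\ 1\ 3)$ as a sub-permutation, i.e.\ $\pi$ is a \emph{separable} permutation. The key fact I will invoke is the classical structural characterization of separable permutations (Bose--Buss--Lubiw): every separable permutation of $[n]$ with $n \ge 2$ is either \emph{$\oplus$-decomposable} --- there is $1 \le k < n$ with $\pi(\{1,\dots,k\}) = \{1,\dots,k\}$, so that $\pi$ acts on $\{1,\dots,k\}$ and on $\{k+1,\dots,n\}$ as two smaller permutations $\sigma_1,\sigma_2$ --- or \emph{$\ominus$-decomposable} --- there is $1 \le k < n$ with $\pi(\{1,\dots,k\}) = \{n-k+1,\dots,n\}$. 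Any sub-permutation of a separable permutation is again separable, so the recursion stays inside the separable class; I will either cite this characterization or include the short self-contained proof, which is precisely where the forbidden patterns of Lemma~\ref{bad_perm_lemma} get consumed.

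Given this, the recursive construction of the flip sequence runs as follows. If $\pi$ is the identity, output $T = 0$. If $\pi$ is $\oplus$-decomposable with witness $k$, recursively produce laminar flip decompositions of the two pieces; the first uses segments inside $[1,k]$, the second uses segments (shifted to lie) inside $[k+1,n]$, these two families are disjoint, so their union is laminar, and performing all the flips reassembles $\pi$. If $\pi$ is not $\oplus$-decomposable (hence, being separable and not the identity, it is $\ominus$-decomposable), I first flip the whole segment $[1,n]$. A direct computation shows that the resulting permutation $\pi' = \pi \circ R_{[1,n]}$, where $R_{[1,n]}$ is the reversal of $[1,n]$, is $\oplus$-decomposable: if $\pi = \sigma_1 \ominus \sigma_2$ then $\pi' = \sigma_2^{r} \oplus \sigma_1^{r}$ into two nonempty blocks (here $\tau^r$ denotes the reversal of $\tau$), and $\pi'$ is separable since reversal preserves separability. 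I then recurse on $\pi'$; since $[1,n]$ contains every segment used afterwards and was flipped first, the overall family is laminar, and it produces the later flips preceded by $R_{[1,n]}$, which equals $\pi$. The recursion terminates because each recursive call is on a strictly smaller permutation, except for the single call on $\pi'$ of the same size, which is immediately $\oplus$-decomposable and therefore splits at its first step.

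The reason the skew case goes through so cleanly is a closure property that could alternatively be used to organize the induction: ``admits a laminar flip decomposition'' is preserved under the complement operation $\bar\sigma(i) = n+1-\sigma(i)$, because a block reversal permutes positions while complementation relabels values, so the two commute; hence applying a laminar decomposition of $\sigma$ to the reversed identity yields $\bar\sigma$, and the reversed identity is reached from the identity by the single flip $[1,n]$, which can be prepended without breaking laminarity. Combined with the obvious closure under $\oplus$ and the identity $\overline{\sigma_1 \ominus \sigma_2} = \bar\sigma_1 \oplus \bar\sigma_2$, this gives closure under $\ominus$ as well, and the lemma follows by induction from the structural characterization.

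The main obstacle is the structural characterization of separable permutations itself --- that avoiding $(3\ 1\ 4\ 2)$ and $(2\ 4\ 1\ 3)$ forces $\oplus$- or $\ominus$-decomposability. If we want the paper to be self-contained it needs a proof; the cleanest one goes through the substitution (block) decomposition of a permutation and the facts that the ``simple'' permutations of size at most $4$ are exactly $12$, $21$, $(2\ 4\ 1\ 3)$, $(3\ 1\ 4\ 2)$, and that every larger simple permutation contains a simple permutation of size $4$ (a Schmerl--Trotter-type statement). Everything else is routine bookkeeping about laminar families together with the two one-line identities $\pi' = \sigma_2^{r} \oplus \sigma_1^{r}$ and $\overline{\sigma_1 \ominus \sigma_2} = \bar\sigma_1 \oplus \bar\sigma_2$.
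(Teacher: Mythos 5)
Your proposal and the paper's proof follow the same route at the conceptual level: induct on $n$, and use Lemma~\ref{bad_perm_lemma} to either locate an initial-segment value block (a direct-sum split into two smaller subproblems) or, failing that, perform a single whole-segment flip $[1,n]$ that reduces the skew case to the sum case. The genuine difference is that you delegate the key structural step --- every non-identity permutation avoiding $(3\ 1\ 4\ 2)$ and $(2\ 4\ 1\ 3)$ is $\oplus$- or $\ominus$-decomposable --- to the Bose--Buss--Lubiw characterization of separable permutations, whereas the paper re-derives exactly this in place, by taking $w = \min\{k : \pi_f(k) > \max A\}$ and using $(3\ 1\ 4\ 2)$-avoidance to argue that $\pi_f(1),\dots,\pi_f(w-1)$ is a value block. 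Phrasing it via BBL is cleaner conceptually but makes the argument non-self-contained; as you observe, proving the needed piece of the BBL characterization is essentially what the paper's direct step already does, so the two arguments are really the same one organized differently.

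There is one local error in your first paragraph, which your third paragraph happens to state correctly. With the paper's flip convention $\pi_t(k)=\pi_{t-1}(a_t+b_t-k)$, i.e.\ $\pi_t=\pi_{t-1}\circ R_t$, the overall product is $\pi_T=R_1\circ\cdots\circ R_T$, so after prepending the flip $R_{[1,n]}$ the later flips must compose to $R_{[1,n]}^{-1}\circ\pi_f = R_{[1,n]}\circ\pi_f$, which is the \emph{complement} $\bar{\pi}(i)=n+1-\pi(i)$, not the \emph{reverse} $\pi\circ R_{[1,n]}$ that you write. Concretely, for $\pi=(4\ 3\ 1\ 2)=(2\ 1)\ominus(1\ 2)$ your $\pi'=\sigma_2^r\oplus\sigma_1^r=(2\ 1\ 3\ 4)$ decomposes as the single flip $[1,2]$, and the sequence $R_{[1,4]}, R_{[1,2]}$ produces $(3\ 4\ 2\ 1)\neq\pi$; the complement $\bar\pi=(1\ 2\ 4\ 3)$ decomposes as the single flip $[3,4]$, and $R_{[1,4]}, R_{[3,4]}$ does give $(4\ 3\ 1\ 2)$. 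Since your third paragraph already records the correct identity $\overline{\sigma_1\ominus\sigma_2}=\overline{\sigma_1}\oplus\overline{\sigma_2}$ and the correct mechanism (apply the decomposition of $\sigma$ after the flip $[1,n]$ to land on $\bar\sigma$), the fix is just to replace $\pi\circ R_{[1,n]}$ by $\bar\pi$ in the recursive step.
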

\begin{proof}
The proof is by induction over $n$. If $n = 1$, the statement is trivial.
Denote $1 \leq u \leq n$ such that $\pi_f(u) = 1$ (the position where $1$ is mapped to), and $1 \leq v \leq n$ such that $\pi_f(v) = n$ (the position where $n$ is mapped to).
Suppose that $u < v$.
If $u = 1$, then the statement follows from using the induction assumption on $\pi_f$ without the first element.
Assume that $u > 1$. Then, define $A = \{\pi_f(j) \mid j \leq u\}$, to be the set of numbers that are mapped to the left of $1$. Let $z < u$ be such that $\pi_f(z) = \max A$, i.e., the maximum number mapped to the left of $1$. Define $w = \min\{k \mid \pi_f(k) > \max A\}$.
Clearly, $w \leq v$. We claim that the sequence $(\pi_f(1)\ \pi_f(2)\ \ldots\ \pi_f(w - 1))$
is a permutation of the numbers from $1$ to $z$.
Assume not. Then, there exists $w' > w$ such that $\pi_f(w') < z$.
 Then, considering positions $z$, $u$, $w$, and $w'$,
we obtain a sub-permutation $(3\ 1\ 4\ 2)$, which can not be the case by Lemma~\ref{bad_perm_lemma}.
Now we can apply the inductive assumption on the first $w - 1$ numbers, and on the last $n - w + 1$ numbers,
and merge the resulting sequences of flips.
If $u > v$, then we add a flip with $a = 1$ and $b = n$ and reduce to the case, when $u < v$.
\end{proof}

It is not hard to show that the above condition is also a sufficient condition, but we will not need it in our construction.

\subsubsection{Well-separateness and the portals}
First, for each flip $1 \leq t \leq T$, we define the set of points $F_t$ that are affected by it,
the set of points to the left of $F_t$, denoted $L_t$, and the points to the right, $R_t$.
Formally, we have the following.
\begin{definition} For an iteration $1 \leq t \leq T$, we define
\begin{itemize}
\item $L_t = \{\pi_t(1), \pi_t(2), \ldots, \pi_t(a_t - 1)\}$;
\item $F_t = \{\pi_t(a_t), \pi_t(a_t + 1), \ldots, \pi_t(b_t)\}$;
\item $R_t = \{\pi_t(b_t+1), \pi_t(b_t + 2), \ldots, \pi_t(n)\}$.
\end{itemize}
\end{definition}

\begin{lemma}
$F_t$ is the set of $|F_t| = b_t - a_t + 1$ consecutive integers. Moreover,
the sequence $\pi_t(a_t), \pi_t(a_t + 1), \ldots, \pi_t(b_t)$ is either increasing or decreasing.
\end{lemma}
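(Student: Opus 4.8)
The plan is to prove both assertions simultaneously by tracking what the flip at step $t$ does, together with an invariant maintained along the decomposition of Lemma~\ref{perm_decomposition}. Recall that $F_t = \{\pi_t(a_t), \dots, \pi_t(b_t)\}$ consists precisely of the values sitting on the segment $[a_t, b_t]$ \emph{after} the flip at step $t$ has been applied (equivalently, before it, since a flip only permutes the values within its own segment). The key structural fact I would isolate first is this invariant: at every stage of the construction, and in particular just before flip $t$ is performed, the values occupying any segment that is about to be flipped — or more generally any segment of the laminar family — form a block of consecutive integers, and within the flipped segment they appear in monotone order (increasing before the flip, decreasing after, if we think of the flip as reversing an increasing run; the opposite if the enclosing context has already reversed orientation).

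First I would set up the induction exactly as in the proof of Lemma~\ref{perm_decomposition}: that proof builds the flip sequence recursively by splitting $[n]$ at a position $w$ so that $(\pi_f(1),\dots,\pi_f(w-1))$ is a permutation of $\{1,\dots,z\}$ and the remainder is a permutation of $\{z+1,\dots,n\}$, then recursing on each half (with a possible outermost global flip $a=1$, $b=n$). The laminar family produced this way has the property that each segment $[a_t, b_t]$ is, by construction, an interval of positions that at the moment of the split corresponds to a contiguous block of values — this is precisely the ``sequence $(\pi_f(1)\ \ldots\ \pi_f(w-1))$ is a permutation of the numbers from $1$ to $z$'' step, applied recursively. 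So the claim that $F_t$ is a set of $b_t - a_t + 1$ consecutive integers falls out of carrying that observation through the recursion: I would phrase it as ``each flip in the decomposition acts on a window whose current set of values is an interval of length equal to the window width,'' and check that this property is preserved when we pass from a parent window to a child window (a child is a sub-interval of positions inside the parent, and after the parent's internal reversal the values there still form a sub-block of the parent's block, hence an interval).

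For the second assertion — that $\pi_t(a_t),\dots,\pi_t(b_t)$ is monotone — I would argue that immediately before flip $t$ the restriction of $\pi_{t-1}$ to $[a_t,b_t]$ is monotone (increasing or decreasing), so after reversing the segment it is monotone in the opposite sense; in particular it is monotone. Why is it monotone just before the flip? Because any ``activity'' strictly inside $[a_t, b_t]$ that has already happened at steps $1,\dots,t-1$ consists of flips strictly contained in $[a_t,b_t]$ by laminarity, and in the recursive construction those contained flips are performed \emph{after} their parent $t$ in the natural ordering — so at the time $t$ fires, nothing inside $[a_t,b_t]$ has been reshuffled yet and the segment is still in the sorted (monotone) order inherited from $\pi_0 = \mathrm{id}$, possibly globally reversed by an enclosing flip. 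Here the only subtlety is bookkeeping the order in which flips are listed: I would either reorder the flip sequence so that every parent precedes all its children (a valid reordering since disjoint flips commute and a parent-child pair of flips also produces the same result when the parent is applied first, because the parent's reversal of a not-yet-internally-modified block followed by the child's internal action equals the child-then-parent composition up to relabeling the child window), or, more cleanly, I would just build the monotonicity claim directly into the induction of Lemma~\ref{perm_decomposition} rather than proving it after the fact.

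The main obstacle I anticipate is exactly this ordering issue — making precise that at the moment flip $t$ is applied, its segment has not yet been internally disturbed — since the statement of Lemma~\ref{perm_decomposition} only guarantees laminarity, not a parent-before-child ordering. I would resolve it by proving once and for all that the flip sequence may be assumed to list every flip before all flips nested inside it (a topological sort of the laminar forest compatible with the position order), and that this reordering does not change the final permutation $\pi_T = \pi_f$ because nested flips and disjoint flips both ``commute'' in the appropriate sense. With that normalization in hand, both bullet points of the lemma become immediate consequences of the recursive construction.
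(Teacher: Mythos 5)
Your argument is essentially correct and matches the intended reasoning (the paper dispatches this lemma with the single sentence ``Follows trivially from Lemma~\ref{perm_decomposition}''). The invariant you isolate---that when flip $t$ fires, positions $[a_t,b_t]$ hold a monotone run of consecutive integers, inherited from $\pi_0=\mathrm{id}$ possibly reversed and translated by enclosing flips---is exactly the right thing, and your observation that an enclosing reversal keeps the values in a sub-window consecutive (because it is a sub-block of a consecutive block) handles the one subtlety worth worrying about.

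However, your entire last paragraph, and the ``main obstacle'' you flag, rest on a misreading of Lemma~\ref{perm_decomposition}. You write that the lemma ``only guarantees laminarity, not a parent-before-child ordering,'' but the lemma's statement is: \emph{for every $1\le t_1 < t_2 \le T$, the segments $[a_{t_1},b_{t_1}]$ and $[a_{t_2},b_{t_2}]$ are either disjoint or $[a_{t_1},b_{t_1}]\supset[a_{t_2},b_{t_2}]$.} Since the earlier flip's segment contains (or misses) the later one's, this \emph{is} a parent-before-child ordering. Consequently the proposed reordering of the flip sequence, and the verification that nested and disjoint flips commute (with relabeling of the child window), is correct but entirely unnecessary---you are re-deriving something you already have by assumption. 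Once you read the laminarity condition as stated, the argument closes with a short strong induction on $s$: for all $t>s$, positions $[a_t,b_t]$ of $\pi_s$ hold a monotone consecutive run; the disjoint case is trivial, and in the containment case positions $[a_s,b_s]$ of $\pi_{s-1}$ hold such a run (apply the hypothesis with $t:=s$), so its reversal and any sub-interval of that reversal do too.
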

\begin{proof}
Follows trivially from Lemma~\ref{perm_decomposition}.
\end{proof}

\begin{definition}
For an iteration $t\leq T$, we define $u_t = \pi_{t-1}(a_t)$ and $v_t = \pi_{t-1}(b_t)$.
We also define $\Delta_t = x_{v_t} - x_{u_t}$. It can be either positive or negative.
\end{definition}

The quantity $\Delta_t$ can be seen as the \emph{signed} diameter of the flipped points.
The following lemma is a key to the overall analysis. We show that the flipped points $F_t$ are very well-separated
from the remainder: by the amount $\Omega(|\Delta_t| / \eps)$.

\begin{lemma}
\label{lem_separ}
For every $k \in F_t$, and every $p \in L_t \cup R_t$, we have $|x_k - x_p| \geq \Omega\left(\frac{|\Delta_t|}{\eps}\right)$.
\end{lemma}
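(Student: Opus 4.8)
The plan is to show that for $p\in L_t\cup R_t$ and any $k\in F_t$, one of the two domain-endpoints of the flipped block $F_t$ is separated from $x_p$ by a ``detour'' of length $\Omega(|\Delta_t|/\eps)$; since $\{x_j:j\in F_t\}$ has diameter exactly $|\Delta_t|$ in the domain, the bound for every $k\in F_t$ then follows by the triangle inequality.

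First, a few recollections. By the compactness reduction we may take $X=\{x_1<\dots<x_n\}$ finite. By Lemma~\ref{perm_decomposition} and the (unlabeled) lemma following the definition of $L_t,F_t,R_t$, the set $F_t$ is a block of consecutive indices $\{m,m+1,\dots,M\}$, and since $\pi_{t-1}$ restricted to $[a_t,b_t]$ is monotone, its domain-endpoints are exactly $u_t$ and $v_t$; hence $|\Delta_t|=|x_{v_t}-x_{u_t}|=x_M-x_m$ is the diameter of $\{x_j:j\in F_t\}$. Because the flips $t,t+1,\dots,T$ are laminar, they only permute values within each of the position blocks $[1,a_t-1]$, $[a_t,b_t]$, $[b_t+1,n]$; therefore in $\pi_f=\pi_T$ the images of $F_t$ occupy positions $a_t,\dots,b_t$, i.e. every point of $L_t$ has smaller $f$-value than every point of $F_t$, which has smaller $f$-value than every point of $R_t$. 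Finally $p\notin F_t$, so $x_p<x_m$ or $x_p>x_M$.

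The core estimate is a ``detour inequality'': if $A,B,C\in X$ are such that $A$ lies between $B$ and $C$ in the domain while $f(B)$ lies between $f(A)$ and $f(C)$, then $|C-A|\ge\frac{1-\eps}{\eps}\,|A-B|$. Indeed, betweenness turns into $|C-B|=|C-A|+|A-B|$ and $|f(C)-f(A)|=|f(C)-f(B)|+|f(B)-f(A)|$; combining these with $|f(C)-f(A)|\le(1+\eps)|C-A|$, $|f(C)-f(B)|\ge(1-\eps)|C-B|$, $|f(B)-f(A)|\ge(1-\eps)|B-A|$ yields $(1+\eps)|C-A|\ge(1-\eps)|C-A|+2(1-\eps)|A-B|$, which rearranges to the claim. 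I apply this with $C=x_p$ and $\{A,B\}=\{x_{u_t},x_{v_t}\}$, labeling $A=x_{q_{\mathrm{near}}}$ to be the endpoint of $F_t$ that lies between $x_{q_{\mathrm{far}}}=B$ and $x_p$; then $A$ is automatically domain-between $B$ and $C$ and $|A-B|=|\Delta_t|$. Once the image-betweenness hypothesis ``$f(x_{q_{\mathrm{far}}})$ lies between $f(x_{q_{\mathrm{near}}})$ and $f(x_p)$'' is established, we obtain $|x_p-x_{q_{\mathrm{near}}}|\ge\frac{1-\eps}{\eps}|\Delta_t|$, hence for every $k\in F_t$,
$$|x_p-x_k|\ \ge\ |x_p-x_{q_{\mathrm{near}}}|-\operatorname{diam}\{x_j:j\in F_t\}\ \ge\ \Bigl(\tfrac{1-\eps}{\eps}-1\Bigr)|\Delta_t|\ =\ \Omega\!\left(\tfrac{|\Delta_t|}{\eps}\right)$$
for $\eps$ small enough, which is the assertion of the lemma.

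What remains — and this is the main obstacle — is the image-betweenness hypothesis. Note first that, by laminarity, inside the block $[a_t,b_t]$ the value $v_t$ ends up at a position strictly to the left of $u_t$ in $\pi_f$: after flip $t$, $v_t$ is at position $a_t$ and $u_t$ at position $b_t$, and any later (necessarily nested) flip moving $v_t$ rightward and any later flip moving $u_t$ leftward must be disjoint, so they cannot cross; hence $f(x_{v_t})<f(x_{u_t})$. Since $f(x_p)$ lies on one side of the whole $F_t$-block of $f$-values, the desired betweenness is equivalent to $q_{\mathrm{near}}=v_t$ when $p\in R_t$ and $q_{\mathrm{near}}=u_t$ when $p\in L_t$; that is, to the \emph{consistency} statement: if $p\in R_t$ then $x_p$ is on the same side of $[x_m,x_M]$ as $x_{v_t}$, and if $p\in L_t$ then on the same side as $x_{u_t}$ (equivalently: $\Delta_t>0$ forces $L_t=\{1,\dots,m-1\}$, $R_t=\{M+1,\dots,n\}$, and $\Delta_t<0$ swaps the two). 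I expect this to follow by induction along the laminar decomposition of Lemma~\ref{perm_decomposition}: a flip with $\Delta_t>0$ acts on a block still in increasing order just before the flip, so it has not been carried past any outside point by an earlier enclosing reversal, whence indices below/above the block are exactly the $f$-values below/above it; the case $\Delta_t<0$ is the mirror image, produced by an odd number of enclosing reversals. Carrying this induction out — tracking how $[1,a_t-1]$, $[b_t+1,n]$, and the sign of $\Delta_t$ evolve under the flips — is the routine but bookkeeping-heavy step; the detour inequality above is where the factor $1/\eps$ genuinely appears.
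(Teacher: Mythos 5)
Your proposal follows the same route as the paper. Your ``detour inequality'' is exactly the paper's chain of inequalities: with $C=x_p$, $A=x_{u_t}$, $B=x_{v_t}$ and $s=|x_p-x_{u_t}|$, the paper writes $s(1+\eps)\geq f(x_{u_t})-f(x_p)=(f(x_{u_t})-f(x_{v_t}))+(f(x_{v_t})-f(x_p))\geq(1-\eps)(s+2\Delta_t)$, which rearranges to precisely your $s\geq\frac{1-\eps}{\eps}|\Delta_t|$. Your laminarity observations (that $L_t,F_t,R_t$ have totally ordered $f$-values, and that $f(x_{v_t})<f(x_{u_t})$ because nested flips cannot carry $v_t$ past $u_t$) are also what the paper uses, in only slightly different phrasing.

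The step you flag as ``the main obstacle'' --- the consistency claim that $p\in L_t$ forces $x_p$ to sit on the $u_t$-side of $[x_m,x_M]$ (and symmetrically for $R_t$) --- is indeed the crux, and you are right to be suspicious of it: without it the detour inequality can be vacuous (if $x_p$ were on the $v_t$-side, the three points would be simply order-reversed, and no lower bound on the gap follows). Be aware, though, that the paper itself does not spell this step out either: after establishing the $f$-ordering it asserts in one line ``At the same time, either $x_p<x_{u_t}\leq x_k\leq x_{v_t}$ or $x_p>x_{u_t}\geq x_k\geq x_{v_t}$'' and moves on. The paper does make one move you do not: it first reduces (WLOG) to $t$ being the first flip separating $p$ and $k$, at the cost of arguing $|\Delta_{\widetilde t}|\geq|\Delta_t|$ whenever $\widetilde t<t$ also separates them (via $F_{\widetilde t}\supset F_t$). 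This reduction is what makes the domain-ordering claim tractable: once no earlier flip touches one of $p,k$ without the other, one can track the positions of $p$ and $k$ back through the laminar decomposition and read off which side $p$ must lie on. You should incorporate some version of this reduction rather than attempting a global induction over all flips --- without it, the ``bookkeeping-heavy'' induction you sketch has to cope with flips that move $p$ and $k$ independently, which is considerably messier. Apart from this, the proposal matches the paper's argument.
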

\begin{proof}
  Wlog, we can assume that $t$ is the first flip that separates $p$ and $k$ and for which $k \in F_t$, but $p \notin F_t$.
  Indeed, if $\widetilde{t} < t$ is the first such flip,
then $|\Delta_{\widetilde{t}}| > |\Delta_t|$, and the required statement follows from that about $\widetilde{t}$.
Suppose that $p \in L_t$, the case $p \in R_t$ is similar. Then, we have $f(x_p) < f(x_{v_t}) < f(x_{u_t})$
(here we use crucially the fact that $t$ is the first flip that separates $p$ and $k$). Indeed, $t$
is the last flip, which affects the relative order of $f(x_p)$, $f(x_{v_t})$ and $f(x_{u_t})$,
since the flips that are not disjoint are nested.
At the same time, either $x_p < x_{u_t} \leq x_k \leq x_{v_t}$ or $x_p > x_{u_t} \geq x_k \geq x_{v_t}$.
Let us show how to handle the first case, the second case is similar. Let us denote $s = x_{u_t} - x_p$.
See Figure~\ref{fig_separation} for the clarification.
Then,
\begin{align*}
s(1 + \eps) & \geq f(x_{u_t}) - f(x_p) \\&= (f(x_{u_t}) - f(x_{v_t})) + (f(x_{v_t}) - f(x_p)) \\&\geq
(1 - \eps)(x_{v_t} - x_{u_t}) + (1 - \eps)(x_{v_t} - x_p) \\&= (1 - \eps) (s + 2 \Delta_t).
\end{align*}
Thus, $\Delta_t = O(\eps\cdot s)$. Finally, $|x_k - x_p| \geq s = \Omega(\Delta_t / \eps)$.
\end{proof}

\begin{figure}
  \centering
  \includegraphics{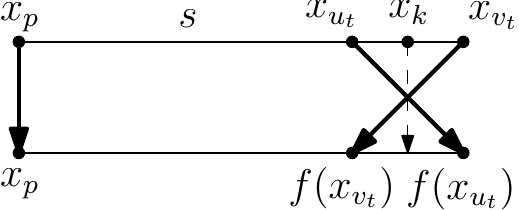}
  \caption{Illustration to the proof of Lemma~\ref{lem_separ}}
  \label{fig_separation}
\end{figure}

\begin{definition}[Portals] For every $1 \leq t \leq T$, we define \emph{portals} as follows (see Figure~\ref{fig_portals}). We set:
\begin{itemize}
\item $\alpha_t = x_{u_t} - \frac{\Delta_t}{\eps^{2/3}}$;
$\beta_t = x_{u_t} - \frac{\Delta_t}{\eps^{1/3}}$;
$\gamma_t = x_{v_t} + \frac{\Delta_t}{\eps^{1/3}}$;
$\delta_t = x_{v_t} + \frac{\Delta_t}{\eps^{2/3}}$;
\item $\alpha'_t = f(x_{v_t}) - \frac{\Delta_t}{\eps^{2/3}}$;
$\beta'_t = f(x_{v_t}) - \frac{\Delta_t}{\eps^{1/3}}$;
$\gamma'_t = f(x_{u_t}) + \frac{\Delta_t}{\eps^{1/3}}$;
$\delta'_t = f(x_{u_t}) + \frac{\Delta_t}{\eps^{2/3}}$.
\end{itemize}
\end{definition}
We will use the portals in our construction to make sure that the spirals at different levels do not interfere with each other.

\begin{figure}
  \centering
  \includegraphics{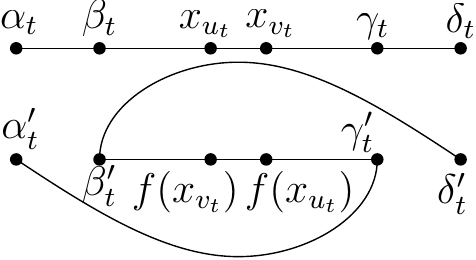}
  \caption{Portals. Note that the scales of the relative distances are not correct.}
  \label{fig_portals}
\end{figure}

\subsubsection{Construction of the final map}
Now we are ready to define the final map $h \colon \Rbb \to \Rbb^2$.
First, for every $1 \leq k \leq n$, we set $h(x_k) = (f(x_k), 0)$.
Second, for every $1 \leq t \leq T$, we define $h$ between $\alpha_t$ and $\beta_t$
and between $\gamma_t$ and $\delta_t$ according to the Corollary~\ref{cor:spirals} (note that we only take the part of the map which corresponds to these two intervals, see Figure~\ref{fig_portals} for the illustration).
In particular, $h(\alpha_t) = (\alpha'_t, 0)$, $h(\beta_t) = (\gamma'_t, 0)$, $h(\gamma_t) = (\beta'_t, 0)$ and $h(\delta_t) = (\delta'_t, 0)$.
After we are done with constructing the spirals for all iterations $t$, on the remaining bounded intervals on the real line, we define $h$ to be linear and consistent with the values at the endpoints.
For the two unbounded intervals, we define the map to be appropriate shifts.

Let us now show that for every $x, y \in \Rbb$, we have:
$$
\|h(x) - h(y)\| \in \left(1 \pm O\left(\frac{1}{\log^2(1 / \eps)}\right)\right) \cdot |x - y|.
$$

For a point $t \in \Rbb$, there are two cases: either it is mapped using the map $g$
from Corollary~\ref{cor:spirals}, or it is mapped using a linear extension.
In the former case, we say that $t$ is of ``type A'', while in the latter case it is said to be of  ``type B''. Note that the type A points are mapped on a spiral curve in $\Re^2$, and the type B points are mapped on a segment in $\Re$.

\begin{claim}
\label{ext_f_iso}
If we extend the original map $f$ to the portals (such that $\alpha_t \mapsto \alpha_t'$,
$\beta_t \mapsto \gamma_t'$, $\gamma_t \mapsto \beta_t'$ and $\delta_t \mapsto \delta_t'$),
then the resulting map is a $(1 \pm O(\eps^{1/3}))$-isometry.
\end{claim}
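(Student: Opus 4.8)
The plan is to verify the near-isometry condition pairwise, splitting into cases according to whether each of the two points is one of the original points $x_k$ or one of the newly added portal points $\alpha_t,\beta_t,\gamma_t,\delta_t$. First I would record the basic distances: by construction $|\alpha_t-x_{u_t}|=|\Delta_t|/\eps^{2/3}$, $|\beta_t-x_{u_t}|=|\Delta_t|/\eps^{1/3}$, and similarly for the primed quantities on the image side, while $|x_{v_t}-x_{u_t}|=|\Delta_t|$ and $|f(x_{v_t})-f(x_{u_t})|\in(1\pm\eps)|\Delta_t|$. The key quantitative input is Lemma~\ref{lem_separ}: every point $x_k$ with $k\in F_t$ is at distance $\Omega(|\Delta_t|/\eps)$ from every point outside the flip, and this is much larger than the portal offsets $|\Delta_t|/\eps^{2/3}$ and $|\Delta_t|/\eps^{1/3}$ (again by a factor $\Omega(\eps^{-1/3})$ or $\Omega(\eps^{-2/3})$). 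Hence a portal point $\alpha_t$ sits at distance $\approx|x_k-x_{u_t}|$ from any point $x_k$ well-separated from the flip, up to an additive error of $|\Delta_t|/\eps^{2/3}$, which is an $O(\eps^{1/3})$ fraction of that distance; the same holds on the image side, so the ratio is preserved to within $1\pm O(\eps^{1/3})$.

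Next I would handle the remaining cases. For two portal points of the \emph{same} flip $t$: the four preimages $\alpha_t<\beta_t\le\gamma_t<\delta_t$ (ordering up to the sign of $\Delta_t$) and the four images $\alpha_t'<\gamma_t'\le\beta_t'<\delta_t'$ — note the deliberate crossing of $\beta',\gamma'$ that realizes the flip — are all within $O(|\Delta_t|/\eps^{2/3})$ of $x_{u_t}$ (resp.\ $f(x_{u_t})$), and a direct computation of each of the $\binom{4}{2}=6$ pairwise distances shows the dominant term $|\Delta_t|/\eps^{2/3}$ (or $|\Delta_t|/\eps^{1/3}$, or $|\Delta_t|$) matches between domain and image up to an additive $O(|\Delta_t|)$ error, again giving a relative error $O(\eps^{1/3})$; the worst pair is $\beta_t,\gamma_t$, whose preimage distance is $|\Delta_t|+2|\Delta_t|/\eps^{1/3}$ and image distance $|f(x_{v_t})-f(x_{u_t})|+2|\Delta_t|/\eps^{1/3}=(1\pm\eps)|\Delta_t|+2|\Delta_t|/\eps^{1/3}$, a perturbation of relative size $O(\eps)/\eps^{-1/3}=O(\eps^{4/3})$. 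For portal points of \emph{distinct} flips $t_1,t_2$, the laminar structure gives two subcases: if $[a_{t_1},b_{t_1}]\supset[a_{t_2},b_{t_2}]$ then $|\Delta_{t_2}|=O(\eps|\Delta_{t_1}|)$ and the portals of $t_2$ are all microscopically close to $x_{u_{t_2}}\in F_{t_1}$, so this reduces to the ``portal vs.\ $x_k$'' estimate above with an extra negligible shift; if the segments are disjoint then the portals of $t_1$ and $t_2$ are separated by $\Omega(|\Delta_{t_1}|/\eps)$ or $\Omega(|\Delta_{t_2}|/\eps)$ (Lemma~\ref{lem_separ} applied to $u_{t_2}$ or $u_{t_1}$), dominating all portal offsets, and the same argument applies. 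Finally, pairs $x_k,x_\ell$ are covered by hypothesis~(\ref{near_iso_line}).

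The one point that needs genuine care — the main obstacle — is checking that the extended map is still \emph{injective with the right local order}, i.e.\ that inserting the portals does not create a new ordering inversion that would contradict the bi-Lipschitz bound, and conversely that the prescribed images $\alpha_t',\beta_t',\gamma_t',\delta_t'$ really lie where claimed relative to the original image points. This is where the factor-$\eps^{-1/3}$ and factor-$\eps^{-2/3}$ gaps in the portal definitions are used: one must confirm that $\beta_t'$ and $\gamma_t'$ (which straddle the flipped block in the image) still fall strictly between $f(x_{u_t}),f(x_{v_t})$ and the images of neighboring well-separated points, using Lemma~\ref{lem_separ} on both sides. Once the ordering and the distance estimates above are assembled, every pair of points in the extended domain satisfies $|f_{\mathrm{ext}}(x)-f_{\mathrm{ext}}(y)|\in(1\pm O(\eps^{1/3}))|x-y|$, which is the claim; I would then remark that $\eps^{1/3}$ is the right order here because the coarsest portal offset is $|\Delta_t|/\eps^{2/3}$, whose relative contribution against the scale-$|\Delta_t|/\eps$ separation is exactly $\Theta(\eps^{1/3})$.
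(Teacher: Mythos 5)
Your case analysis is more explicit than the paper's one-sentence argument, which simply asserts that the worst case is the pair $\alpha_t,\beta_t$ (or $\gamma_t,\delta_t$) and that its distortion is $1+\Theta(\eps^{1/3})$; but your treatment of same-flip portal pairs is wrong in a way that matters. You claim the additive error for every pair of portals of the same flip is $O(|\Delta_t|)$ and then, inconsistently, declare $\beta_t,\gamma_t$ the worst pair with error $O(\eps^{4/3})$ --- which, if true, would give a far better bound than $O(\eps^{1/3})$. The actual bottleneck pair is $\alpha_t,\beta_t$. Since $\beta_t\mapsto\gamma_t'$ lands on the \emph{far} side of the flipped block, the preimage distance is $|\Delta_t|\bigl(\eps^{-2/3}-\eps^{-1/3}\bigr)$ while the image distance is $|\alpha_t'-\gamma_t'|=|f(x_{u_t})-f(x_{v_t})|+|\Delta_t|\bigl(\eps^{-1/3}+\eps^{-2/3}\bigr)$: the two $\eps^{-1/3}$-scale offsets \emph{add} instead of canceling, giving an additive discrepancy $\Theta(|\Delta_t|/\eps^{1/3})$, not $O(|\Delta_t|)$, and hence a relative error $\Theta(\eps^{1/3})$. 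Your $\beta_t,\gamma_t$ computation is fine as a calculation, but there the offsets on the two sides are symmetric and cancel, so all that remains is the $O(\eps)$ distortion of $f$ itself; it is not the worst case.

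A smaller gap: your ``portal vs.\ $x_k$'' case only covers $x_k$ outside the flip, where Lemma~\ref{lem_separ} gives a separation of $\Omega(|\Delta_t|/\eps)$ that swamps the portal offset. For $k\in F_t$ (i.e.\ $x_k$ inside the flipped block) this lemma does not apply; you need the separate, easy observation that there $|x_k-x_{u_t}|\le|\Delta_t|$ and $|f(x_k)-f(x_{v_t})|\le(1+\eps)|\Delta_t|$, both dominated by the $|\Delta_t|/\eps^{2/3}$ offset, giving relative error $O(\eps^{2/3})$. With the worst pair correctly identified as $\alpha_t,\beta_t$ and the in-flip case patched, your argument matches the paper's conclusion.
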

\begin{proof}
    It is immediate to check that the worst case is achieved when we consider distances between portals
    $\alpha_t$ and $\beta_t$ or $\gamma_t$ and $\delta_t$. In this case,
    the distortion is $1 + \Theta(\eps^{1/3})$ (this follows from the definition of the portals).
\end{proof}

\begin{claim}
\label{gradient_type_b}
    If $t \in \Rbb$ is type B, and $h$ is smooth at $t$, then $\|\nabla h(t)\|_2 = 1 \pm O(\eps^{1/3})$.
\end{claim}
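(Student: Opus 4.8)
The plan is to unwind the definition of a type B point and compute the derivative of $h$ directly. Recall that a point $t$ is type B precisely when $h$ is defined on a neighborhood of $t$ by linear interpolation between two consecutive ``breakpoints'' on the real line --- where a breakpoint is either one of the original points $x_k$ (mapped to $(f(x_k),0)$), or one of the portal points $\alpha_s,\beta_s,\gamma_s,\delta_s$ (mapped to $(\alpha_s',0)$, $(\gamma_s',0)$, $(\beta_s',0)$, $(\delta_s',0)$ respectively). In every one of these cases the image point lies in $\Rbb\times\{0\}\subset\Rbb^2$. Hence on the interval in question $h$ has the form $h(s) = (\ell(s),0)$ where $\ell$ is the affine function interpolating the images of the two endpoints, and $\|\nabla h(t)\|_2 = |\ell'(t)|$ is exactly the local stretch factor of the $1$-dimensional interpolation between those two endpoints.

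First I would observe that all breakpoints, together with their images, are contained in the graph of the map obtained by extending $f$ to the portals as in Claim~\ref{ext_f_iso}; call this extended map $\bar f$. Indeed $\bar f(x_k)=f(x_k)$, $\bar f(\alpha_t)=\alpha_t'$, $\bar f(\beta_t)=\gamma_t'$, $\bar f(\gamma_t)=\beta_t'$, $\bar f(\delta_t)=\delta_t'$, matching the first coordinates of $h$ at these points. Next, I would note that since $h$ is smooth at $t$, the point $t$ lies strictly inside one of the linear pieces, say between consecutive breakpoints $p<q$ (consecutive among all breakpoints of type B). On $[p,q]$ the map $\ell$ is affine with slope $(\bar f(q)-\bar f(p))/(q-p)$, so $\|\nabla h(t)\|_2 = |\bar f(q)-\bar f(p)|/|q-p|$.

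Finally I would invoke Claim~\ref{ext_f_iso}, which states that $\bar f$ is a $(1\pm O(\eps^{1/3}))$-isometry on its domain, meaning $|\bar f(q)-\bar f(p)| \in (1\pm O(\eps^{1/3}))|q-p|$ for all $p,q$ in the domain of $\bar f$. Applying this to our two breakpoints gives $\|\nabla h(t)\|_2 = |\bar f(q)-\bar f(p)|/|q-p| = 1\pm O(\eps^{1/3})$, as claimed. One technical point to be careful about: I should confirm that a type B interval's endpoints are always breakpoints whose images are governed by $\bar f$ (as opposed to, say, an endpoint of a spiral interval whose image is a point on the spiral curve not in $\Rbb\times\{0\}$); but by construction the spiral endpoints $\alpha_t,\beta_t,\gamma_t,\delta_t$ are mapped to $(\alpha_t',0),(\gamma_t',0),(\beta_t',0),(\delta_t',0)$ in the plane, which again lie on the graph of $\bar f$, so the interpolation on an adjacent type B interval still interpolates values of $\bar f$. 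The main (very mild) obstacle is therefore purely bookkeeping --- verifying that the set of type B breakpoints and their images is exactly the graph of $\bar f$ restricted to those abscissae --- after which the estimate is an immediate consequence of Claim~\ref{ext_f_iso}.
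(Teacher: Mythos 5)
Your proof is correct and matches the paper's (one-line) argument: the paper simply says the claim is a direct corollary of Claim~\ref{ext_f_iso}, and you have spelled out exactly why---on a type-B interval $h$ affinely interpolates two breakpoints whose images are precisely the values of the extended map $\bar f$, so $\|\nabla h(t)\|_2$ equals the ratio $|\bar f(q)-\bar f(p)|/|q-p|$, which Claim~\ref{ext_f_iso} bounds by $1\pm O(\eps^{1/3})$. The only (harmless) omission is the two unbounded intervals, where $h$ is a shift and $\|\nabla h(t)\|_2=1$ exactly.
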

\begin{proof}
    This is a direct corollary of Claim~\ref{ext_f_iso}.
\end{proof}

\begin{claim}
  \label{type_b_type_b}
    If both $x, y \in \Rbb$ are type B, then
    $$
    \|h(x) - h(y)\| \in \left(1 \pm O\left(\eps^{1/3}\right)\right) \cdot |x - y|.
    $$
\end{claim}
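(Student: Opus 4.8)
The plan is to reduce the case of two type-B points to the already-established fact (Claim~\ref{ext_f_iso}) that the extension of $f$ to all the portals is a $(1 \pm O(\eps^{1/3}))$-isometry, using the fact that on every maximal interval between consecutive portal values the map $h$ is the linear interpolation of its endpoint values, and those endpoint values all lie in $\Rbb \times \{0\} \subset \Rbb^2$. First I would set up notation: let $P \subset \Rbb$ be the finite set of all portal points $\{\alpha_t,\beta_t,\gamma_t,\delta_t : 1 \le t \le T\}$ together with the domain points $x_1,\dots,x_n$, sorted as $p_1 < p_2 < \cdots < p_M$. By construction $h$ restricted to each closed interval $[p_i, p_{i+1}]$ that is \emph{not} one of the spiral intervals $[\alpha_t,\beta_t]$ or $[\gamma_t,\delta_t]$ is the affine map sending $p_i \mapsto (\bar f(p_i),0)$ and $p_{i+1}\mapsto(\bar f(p_{i+1}),0)$, where $\bar f$ denotes the extension of $f$ to the portals described in Claim~\ref{ext_f_iso}; on the two unbounded end intervals $h$ is a shift, hence also affine with unit gradient. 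A point $x$ is type B precisely when it lies in such a non-spiral interval.

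Next I would handle the single-interval subcase. If $x, y$ both lie in the same non-spiral interval $[p_i,p_{i+1}]$, then $h(x)-h(y) = \big(\tfrac{\bar f(p_{i+1})-\bar f(p_i)}{p_{i+1}-p_i}\big)\,(x-y)\cdot e_1$, so $\|h(x)-h(y)\| = \big|\tfrac{\bar f(p_{i+1})-\bar f(p_i)}{p_{i+1}-p_i}\big|\cdot|x-y|$, and by Claim~\ref{ext_f_iso} the ratio $\big|\tfrac{\bar f(p_{i+1})-\bar f(p_i)}{p_{i+1}-p_i}\big|$ lies in $1\pm O(\eps^{1/3})$; equivalently this is Claim~\ref{gradient_type_b}. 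For the general subcase, assume $x < y$ lie in different non-spiral intervals. The obstacle is that the segment $[x,y]$ may cross several spiral intervals, and on those intervals $h$ is not type B and not even $\Rbb$-valued, so I cannot simply integrate $\|\nabla h\|$ along $[x,y]$. The way around this: I only need the two endpoints' images. Since $x$ and $y$ are type B, there exist consecutive portal/domain points with $p_i \le x \le p_{i+1}$ and $p_j \le y \le p_{j+1}$ and the intervals $[p_i,p_{i+1}],[p_j,p_{j+1}]$ are non-spiral, so $h(x) = (\bar f(x),0)$ and $h(y) = (\bar f(y),0)$ where I extend $\bar f$ linearly on each non-spiral interval (consistent with the endpoint values); in particular $h(x)$ and $h(y)$ are both in $\Rbb \times \{0\}$, and $\|h(x)-h(y)\| = |\bar f(x) - \bar f(y)|$.

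It then remains to show $|\bar f(x)-\bar f(y)| \in (1\pm O(\eps^{1/3}))|x-y|$ for arbitrary $x<y$ in non-spiral intervals. Here I would argue that $\bar f$, now viewed as a piecewise-linear function on all of $\Rbb$ interpolating the portal values, inherits the near-isometry property from Claim~\ref{ext_f_iso}: on each piece its slope is within $1\pm O(\eps^{1/3})$ of $\pm 1$ by the single-interval computation, but slopes alone do not control a piecewise-linear function with sign changes. Instead I use monotonicity structure: by Lemma~\ref{perm_decomposition} and the laminar/portal construction, between $x$ and $y$ the function $\bar f$ is piecewise monotone with only boundedly-controlled total variation, and in fact one checks directly from the portal definitions (as in Claim~\ref{ext_f_iso}) that for \emph{any} two portal-or-domain points $p,q$ one has $|\bar f(p)-\bar f(q)| \in (1\pm O(\eps^{1/3}))|p-q|$ — this is exactly the content of Claim~\ref{ext_f_iso}, which quantified over all pairs. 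Combining this with the at-most-$O(\eps^{1/3})$-relative-error linear interpolation on the two end pieces containing $x$ and $y$ gives $|\bar f(x)-\bar f(y)| \in (1\pm O(\eps^{1/3}))|x-y|$, and hence $\|h(x)-h(y)\| \in (1\pm O(\eps^{1/3}))|x-y|$, which is stronger than the claimed $1 \pm O(1/\log^2(1/\eps))$ bound since $\eps^{1/3} = O(1/\log^2(1/\eps))$ for small $\eps$. The main obstacle, as indicated, is the bookkeeping needed to pass from the pairwise statement of Claim~\ref{ext_f_iso} on portal points to arbitrary interior type-B points; this is handled by the observation that $h$ is affine (with the correct endpoint values) on each non-spiral interval, so no new error beyond $O(\eps^{1/3})$ is introduced.
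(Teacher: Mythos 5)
The reduction to showing $|\bar f(x) - \bar f(y)| \in \left(1 \pm O(\eps^{1/3})\right)|x - y|$, where $\bar f$ is the piecewise-linear interpolation of the portal and domain values, is correct and mirrors the paper's setup, and the same-interval subcase is exactly Claim~\ref{gradient_type_b}. However, the final ``combining'' step leaves a genuine gap, and this is precisely where the paper does real work.

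You correctly observe that the pairwise near-isometry from Claim~\ref{ext_f_iso} together with near-unit slopes does not by itself control $|\bar f(x) - \bar f(y)|$ when the slopes at $x$ and $y$ have opposite signs; but you then dispatch this case with phrases like ``monotonicity structure'' and ``boundedly-controlled total variation'' without proving anything quantitative. To see what is missing, let $p_x < x < q_x$ and $p_y < y < q_y$ be the nearest portal-or-domain points with $q_x \leq p_y$, and write
$$
\bar f(y) - \bar f(x) = \bigl(\bar f(p_y) - \bar f(q_x)\bigr) + s_y\,(y - p_y) + s_x\,(q_x - x),
$$
where $s_x, s_y$ are the slopes on the two end intervals. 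Claim~\ref{ext_f_iso} controls $|\bar f(p_y) - \bar f(q_x)|$, and the near-unit-slope bound controls the magnitudes of the other two terms, but when signs disagree with $\sgn\bigl(\bar f(p_y)-\bar f(q_x)\bigr)$, cancellation can make the sum much smaller than $|x-y|$ unless the interval lengths $q_x - p_x$ and $q_y - p_y$ are shown to be $O(\eps^{1/3}) \cdot (p_y - q_x)$. Nothing in your plan establishes this.

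The paper closes exactly this gap by applying Lemma~\ref{lem_separ} to the extended map $\bar f$: since $\bar f$ has distortion $1 + O(\eps^{1/3})$, it admits a flip decomposition (Lemma~\ref{perm_decomposition}), and Lemma~\ref{lem_separ} then gives that whenever a flip separates one end interval from the other (which must happen if the slope signs differ), the separation $p_y - q_x$ dominates that interval's length by a factor of $\Omega(1/\eps^{1/3})$. That is what makes the interpolation error negligible. You invoke Lemma~\ref{perm_decomposition} and the laminar structure in passing, but you never state or use this quantitative separation estimate, which is the ingredient that finishes the proof.
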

\begin{proof}
  If $x = y$, then there is nothing to prove.
  If $x \ne y$ by a small perturbation we can assume wlog that $h$ is smooth in both $x$ and $y$.
  By Claim~\ref{gradient_type_b}, $\|\nabla h(x)\|_2, \|\nabla h(x)\|_2 \in 1 \pm O(\eps^{1/3})$.
  If the signs of $(\nabla h(x))_1$ and $(\nabla h(x))_2$ are the same, then the claim follows
  from Claim~\ref{ext_f_iso} and Claim~\ref{gradient_type_b}.

  Now consider the case of the different signs of the derivatives.
  Then consider an extension of $f$ to the portals as stated in Claim~\ref{ext_f_iso}.
  Abusing notation, let us denote this map $f$ as well.
  Since the extended map has distortion $1 \pm O(\eps^{1/3})$, we decompose it as per
  Lemma~\ref{perm_decomposition}, and we get that Lemma~\ref{lem_separ} holds.

  Let us denote $p_x < x < q_x$ the portals of elements which are closest to $x$,
  similarly, we denote $p_y < y < q_y$. Wlog, $q_x \leq p_y$.
  If a decomposition for $f$ has a flip containing $p_y$ and $q_y$, but not $p_x$ and $q_x$,
  then $p_y - q_x \geq \Omega\left(\frac{q_y - p_y}{\eps^{1/3}}\right)$.
  Similarly, if there is a flip containing $p_x$ and $q_x$, but not $p_x$ and $q_x$,
  then $p_y - q_x \geq \Omega\left(\frac{q_x - p_x}{\eps^{1/3}}\right)$.
  Note that if neither of these two cases hold, then their gradients could not have different signs.
  Combining these observations with Claim~\ref{ext_f_iso} and Claim~\ref{gradient_type_b},
  we get the required result.
\end{proof}

\begin{claim}
    If both $x, y \in \Rbb$ are type A, then
    $$
    \|h(x) - h(y)\| \in \left(1 \pm O\left(\frac{1}{\log^2(1 / \eps)}\right)\right) \cdot |x - y|.
    $$
\end{claim}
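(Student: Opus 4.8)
\medskip\noindent\emph{Proof plan.} The plan is to distinguish two cases, according to whether $x$ and $y$ lie in the portal intervals of the same flip or of two different flips. \emph{Same flip.} If $x$ and $y$ both lie in $[\alpha_t,\beta_t]\cup[\gamma_t,\delta_t]$ for one and the same $t$, then $h(x)$ and $h(y)$ are the images of $x$ and $y$ under the single spiral map supplied by Corollary~\ref{cor:spirals}, so the required estimate is precisely the distortion guarantee of that corollary. This is the case that produces the dominant $O(1/\log^2(1/\eps))$ error term; every other case will contribute only an error polynomial in $\eps$, hence negligible by comparison.

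\emph{Different flips $t_1\neq t_2$.} Here it suffices to show $\|h(x)-h(y)\|=(1\pm O(\eps^{1/3}))|x-y|$, since $\eps^{1/3}=o(1/\log^2(1/\eps))$. I would first collect the facts I need. From the definition of the portals, $h\bigl([\alpha_t,\beta_t]\cup[\gamma_t,\delta_t]\bigr)$ lies in a disk of radius $O(|\Delta_t|\eps^{-2/3})$ about $(f(x_{u_t}),0)$; moreover, since the spirals of Corollary~\ref{cor:spirals} are non-contractive with norm equal to the distance to their centre, $h$ restricted to either of the two portal intervals of $t$ is essentially \emph{radial} about the corresponding centre $c_t\in\{x_{u_t},x_{v_t}\}$, that is $\bigl|\,\|h(p)-(f(c_t),0)\|-|p-c_t|\,\bigr|\le O\bigl(\tfrac{1}{\log^2(1/\eps)}\bigr)|p-c_t|$. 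From Lemma~\ref{lem_separ} and the laminar structure of Lemma~\ref{perm_decomposition}: if $t_2$ is nested inside $t_1$, then $|\Delta_{t_2}|=O(\eps|\Delta_{t_1}|)$ and both portal intervals of $t_2$ lie within $O(|\Delta_{t_1}|)$ of $[x_{u_{t_1}},x_{v_{t_1}}]$; if instead the flips are disjoint, then every point of a portal interval of $t_1$ is at distance $\Omega\bigl(\max(|\Delta_{t_1}|,|\Delta_{t_2}|)\eps^{-1}\bigr)$ from every point of a portal interval of $t_2$. Now assume $|\Delta_{t_1}|\ge|\Delta_{t_2}|$. In the disjoint case, $h(x)$ and $h(y)$ lie in disks of radius $O(|\Delta_{t_1}|\eps^{-2/3})$ about $(f(x_{u_{t_1}}),0)$ and $(f(x_{u_{t_2}}),0)$, whose centres by Claim~\ref{ext_f_iso} are at distance $|x_{u_{t_1}}-x_{u_{t_2}}|(1\pm O(\eps^{1/3}))$; since also $|x-y|=|x_{u_{t_1}}-x_{u_{t_2}}|\pm O(|\Delta_{t_1}|\eps^{-2/3})$ and $|\Delta_{t_1}|\eps^{-2/3}=O(\eps^{1/3}|x-y|)$ by the separation bound, the triangle inequality yields $\|h(x)-h(y)\|=(1\pm O(\eps^{1/3}))|x-y|$. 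In the nested case the disk bound for $x$ is too weak, so I would instead use the radial property about the centre $c_{t_1}$ of the portal interval of $t_1$ that contains $x$: then $\|h(x)-(f(c_{t_1}),0)\|=|x-c_{t_1}|\bigl(1\pm O(\tfrac{1}{\log^2(1/\eps)})\bigr)$, while $\|h(y)-(f(c_{t_1}),0)\|$ and $|y-c_{t_1}|$ are both $O(|\Delta_{t_1}|)$; since $|x-c_{t_1}|\ge|\Delta_{t_1}|\eps^{-1/3}$ (because $x$ lies in a portal interval of $t_1$), we get $|x-y|=\Omega(|\Delta_{t_1}|\eps^{-1/3})$, and two applications of the triangle inequality give $\|h(x)-h(y)\|=(1\pm O(\eps^{1/3}))|x-y|$.

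The step I expect to be the main obstacle is the radial estimate used in the nested case. The crude ``image in a small disk'' bound genuinely fails there: a point $x$ on the outer portion of a portal interval of $t_1$ may be at distance comparable to the full radius $|\Delta_{t_1}|\eps^{-2/3}$ from the centre, while $y$ sits near that centre, and obtaining a matching \emph{lower} bound on $\|h(x)-h(y)\|$ requires knowing that the spiral of Corollary~\ref{cor:spirals} preserves distances to its centre up to the factor $1+O(1/\log^2(1/\eps))$, which must be read off from the explicit spiral construction. A secondary point to check is that the portal intervals of distinct flips are pairwise disjoint, so that the flip associated with a type-A point is well defined; this follows from Lemma~\ref{lem_separ} together with the gap between the inner portals (at distance $|\Delta_t|\eps^{-1/3}$ from the endpoints of the flipped block) and the outer portals (at distance $|\Delta_t|\eps^{-2/3}$).
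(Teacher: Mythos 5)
Your proposal is correct and follows essentially the same case analysis as the paper's proof: same flip (direct from Corollary~\ref{cor:spirals}), disjoint flips (triangle inequality from both points to fixed ``anchor'' points, using Claim~\ref{ext_f_iso} for the anchor distance and Lemma~\ref{lem_separ} to absorb the $O(|\Delta|/\eps^{2/3})$ slack), and nested flips. In the disjoint case you anchor at the spiral centres $(f(x_{u_t}),0)$ while the paper anchors at the portal images $h(\alpha_t)$, which is a cosmetic difference; in the nested case your explicit use of the radial property of the spiral is a clean way to justify the step that the paper attributes tersely to Corollary~\ref{cor:spirals}. One small inconsistency: you state the radial estimate with error $O(1/\log^2(1/\eps))$ but then conclude the nested case gives error $O(\eps^{1/3})$ --- with the bound as you wrote it you only get $O(1/\log^2(1/\eps))$ there, which still suffices; the tighter $O(\eps^{1/3})$ is available because $\|g_0(t)\|=|t|$ exactly in Lemma~\ref{lem:basic-map}, so the radial error is in fact $O(\eps^{1/3})$ rather than $O(1/\log^2(1/\eps))$.
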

\begin{proof}
Define $t_x$ to be the flip $1 \leq t \leq T$, such that $x$ lies between $\alpha_t$ and $\beta_t$
or $\gamma_t$ and $\delta_t$. We define $t_y$ similarly.

If $t_x = t_y$, then the claim follows from Corollary~\ref{cor:spirals}.

First, suppose that $[a_{t_x}, b_{t_x}]$ and $[a_{t_y}, b_{t_y}]$ are disjoint.
Assume wlog that $|\Delta_{t_x}| \geq |\Delta_{t_y}|$.
Then,
\begin{align*}
\|h(x) - h(y)\| & = \|h(\alpha_{t_x}) - h(\alpha_{t_y})\| \pm O(|\Delta_{t_x}| / \eps^{2/3})\\
              & \in (1 \pm O(\eps^{1/3})) |\alpha_{t_x} - \alpha_{t_y}| \pm O(|\Delta_{t_x}| / \eps^{2/3})\\
              & \in (1 \pm O(\eps^{1/3})) |x - y| \pm O(|\Delta_{t_x}| / \eps^{2/3})\\
              & \in (1 \pm O(\eps^{1/3})) |x - y|,
\end{align*}
where the first step follows from Corollary~\ref{cor:spirals}, the second step
follows from Lemma~\ref{lem_separ}, the third step follows from the definition of the terminals,
and the last step follows from Lemma~\ref{lem_separ}.

Now assume that $[a_{t_x}, b_{t_x}] \supseteq [a_{t_y}, b_{t_y}]$, but $t_x \ne t_y$.
Then, we have $|x - y| \geq \Omega(|\Delta_{t_x}| / \eps^{1/3})$, $|\Delta_{t_x}| = \Omega(|\Delta_{t_y}| / \eps)$ and:
\begin{align*}
\|h(x) - h(y)\| & = |h(x) - h(\alpha_{t_y})\| \pm O(|\Delta_{t_y}| / \eps^{2/3})\\
              & \in \left(1 \pm O\left(\frac{1}{\log^2(1 / \eps)}\right)\right)|x - \alpha_{t_y}| \pm O(|\Delta_{t_y}| / \eps^{2/3})\\
              & \in \left(1 \pm O\left(\frac{1}{\log^2(1 / \eps)}\right)\right)|x - y| \pm O(|\Delta_{t_y}| / \eps^{2/3})\\
              & \in \left(1 \pm O\left(\frac{1}{\log^2(1 / \eps)}\right)\right)|x - y|,
\end{align*}
where the first step is due to the definition of the portals and Corollary~\ref{cor:spirals},
the second step is due to Corollary~\ref{cor:spirals},
the third step is again due to the definition of the portals,
and the last step is due to $|x - y| \geq \Omega(|\Delta_{t_x}| / \eps^{1/3}) \geq \Omega(|\Delta_{t_y}| / \eps^{4/3})$.
\end{proof}

\begin{claim}
  \label{claim_ab}
    If $x \in \Rbb$ is type A and $y \in \Rbb$ is type B, then
    $$
    \|h(x) - h(y)\| \in \left(1 \pm O\left(\frac{1}{\log^2(1 / \eps)}\right)\right) \cdot |x - y|.
    $$
\end{claim}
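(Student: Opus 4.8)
The plan is to reduce this mixed case to the two already-handled cases (type A / type A and type B / type B) by ``walking'' $y$ to a portal of the same flip that captures $x$. Let $t_x$ be the flip such that $x$ lies between $\alpha_{t_x}$ and $\beta_{t_x}$ or between $\gamma_{t_x}$ and $\delta_{t_x}$, and let $P\in\{\alpha_{t_x},\beta_{t_x},\gamma_{t_x},\delta_{t_x}\}$ be the portal of that interval that is closest to $y$. The key point is that $P$ is a point on the real line where $h$ takes a value of the form $(\cdot,0)$ (a portal image), so $P$ is effectively a ``type~B'' point: the segment of the construction between $y$ and $P$ that does not pass through the spiral $t_x$ is covered by linear pieces and possibly other, nested or disjoint, spirals, all of which behave like near-isometries with the stated bound on the already-proven claims.

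First I would split on the position of $y$ relative to the flip $t_x$. If $y$ lies outside the interval $[\alpha_{t_x},\delta_{t_x}]$ (so the two portals of the relevant sub-interval separate $x$ from $y$), then by Corollary~\ref{cor:spirals} together with the definition of the portals we have $\|h(x)-h(P)\| = \||x-P\| \pm O(|\Delta_{t_x}|/\eps^{2/3})$ along the spiral, and $P$ is well-separated from the spiral's flipped points by Lemma~\ref{lem_separ}, so $|\Delta_{t_x}|/\eps^{2/3} = O(\eps^{1/3})\cdot |x-P|$, hence $\|h(x)-h(P)\| \in (1\pm O(1/\log^2(1/\eps)))|x-P|$. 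Now $P$ and $y$ are \emph{both} effectively type~B (or the segment between them is a concatenation of linear pieces and full spirals disjoint from or nested inside $t_x$), so by Claim~\ref{type_b_type_b} (applied to the extension-to-portals map, as in its proof) we get $\|h(P)-h(y)\| \in (1\pm O(\eps^{1/3}))|P-y|$. Since $x$ lies strictly between the two portals of its sub-interval and $y$ does not, the points $x$, $P$, $y$ are (up to a multiplicative $1\pm O(\eps^{1/3})$ coming from Claim~\ref{ext_f_iso}) colinear in the ordering $|x-y| = |x-P| + |P-y| \pm O(|\Delta_{t_x}|/\eps^{2/3})$, so we may add the two estimates and absorb the error terms into the $O(1/\log^2(1/\eps))$ slack, using that $|\Delta_{t_x}|/\eps^{2/3} = O(\eps^{1/3})|x-y|$ by Lemma~\ref{lem_separ}.

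The remaining case is when $y$ lies \emph{inside} $[\alpha_{t_x},\delta_{t_x}]$ but is still type~B --- this happens only when $y$ is in one of the linear ``buffer'' intervals adjacent to the spiral, or $y$ sits between two nested spirals. In that case I would instead walk from $x$ to the nearer endpoint $Q$ of the spiral arc of $t_x$ (one of $\alpha_{t_x}',\beta_{t_x}',\gamma_{t_x}',\delta_{t_x}'$ pulled back to the line), again using Corollary~\ref{cor:spirals} for $\|h(x)-h(Q)\|$ and Claim~\ref{type_b_type_b}/Claim~\ref{gradient_type_b} for the piecewise-linear stretch from $Q$ to $y$; the separation estimates of Lemma~\ref{lem_separ} again guarantee that the truncation error $O(|\Delta_{t_x}|/\eps^{2/3})$ is dominated by $O(\eps^{1/3})|x-y|$ because $y$ being type~B forces $|x-y|\gtrsim |\Delta_{t_x}|/\eps^{1/3}$ unless $x$ and $y$ are genuinely close, in which case one works inside a single spiral arc and its adjacent linear piece and invokes Corollary~\ref{cor:spirals} directly. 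I expect the main obstacle to be the bookkeeping in this last sub-case: making precise that a type~B point lying ``between'' two nested spirals is still separated from the inner spiral's flipped points by the right power of $\eps$, and checking that the colinearity-up-to-$O(\eps^{1/3})$ used to add the two length estimates is legitimate when the intermediate point $P$ or $Q$ is not on the straight segment $[x,y]$ but only on the image curve --- this requires the triangle-inequality-with-slack argument exactly as in the proof of Claim~\ref{type_b_type_b}, and all error terms must be shown to be $O(1/\log^2(1/\eps))|x-y|$ after the dust settles.
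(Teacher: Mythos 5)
Your plan — split at a portal $P$ of the flip $t_x$, estimate $\|h(x)-h(P)\|$ by Corollary~\ref{cor:spirals} and $\|h(P)-h(y)\|$ by Claim~\ref{type_b_type_b}, then ``add'' the two estimates — is not what the paper does, and the addition step that you yourself flag as the main obstacle is a genuine gap, not just bookkeeping. The upper bound goes through fine (plain triangle inequality), but the lower bound fails for this route: you would need the image points $h(x), h(P), h(y)$ to be approximately collinear with $h(P)$ between the other two, and this is simply false because the spiral arcs wrap around. Concretely, in the paper's main sub-case ($x\in(\alpha_{t_x},\beta_{t_x})$, $y\in(\beta_{t_x},\gamma_{t_x})$) the natural $P$ is $\beta_{t_x}$, which maps to $(\gamma'_{t_x},0)$, the \emph{right} endpoint of the spiral's image; meanwhile $h(y)_1\in[\beta'_{t_x},\gamma'_{t_x}]$ lies to its left, and $h(x)$ for $x$ in the interior of the arc is off-axis near the spiral's center, also to the left. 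The triangle $h(x)h(P)h(y)$ has an \emph{acute} angle at $h(P)$, so $\|h(x)-h(y)\|$ can be far smaller than $\|h(x)-h(P)\|+\|h(P)-h(y)\|$, and summing the two near-isometry estimates gives you no lower bound at all — not even one off by constants, let alone the $1-O(1/\log^2(1/\eps))$ you need. The ``colinearity up to $O(\eps^{1/3})$'' you invoke is a property of the three \emph{domain} points on the real line; it does not transfer to the images.

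The paper sidesteps all of this with a pullback rather than a split. It defines $\widetilde y$ by $\widetilde y-\beta_{t_x}=\gamma'_{t_x}-h(y)_1$, which is precisely the preimage of $h(y)$ under the flip-$t_x$ spiral map $g$ of Corollary~\ref{cor:spirals} (using that $g$ is linear on $[\beta_{t_x},\gamma_{t_x}]$). Since $g$ is near-isometric on \emph{all} of $\Rbb$ and $h(x)=g(x)$, $h(y)=g(\widetilde y)$, one gets $\|h(x)-h(y)\|\in(1\pm O(1/\log^2(1/\eps)))\,|x-\widetilde y|$ in a single stroke, both bounds at once, with no geometry of the triangle $h(x)h(P)h(y)$ to control. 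The only remaining work is to show $|y-\widetilde y|=O(\eps^{1/3})\,|x-y|$, which follows from Claim~\ref{type_b_type_b} applied to the pair $(\beta_{t_x},y)$ together with $|\beta_{t_x}-y|\le|x-y|$. If you want to repair your version, the fix is exactly this: don't estimate the two legs separately and add; instead transport $h(y)$ back through the spiral map of $t_x$ and apply the corollary to the single pair $(x,\widetilde y)$ in the spiral's domain.
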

\begin{proof}
    Denote $1 \leq t_x \leq T$ to be the flip such that $x$ lies within $\alpha_{t_x}$ and $\beta_{t_x}$ or between $\gamma_{t_x}$ and
    $\delta_{t_x}$.
    Wlog, let us assume that $x$ lies between $\alpha_{t_x}$ and $\beta_{t_x}$.
    Then, $y$ can lie between $\beta_{t_x}$ and $\gamma_{t_x}$ or outside of the segment connecting $\alpha_{t_x}$
    and $\delta_{t_x}$. Let us assume the former, and the latter can be handled similarly.
    By Corollary~\ref{cor:spirals}, we have:
    \begin{equation}
    \label{part_1_path}
    \|h(y) - h(x)\| \in O\left(1 \pm O\left(\frac{1}{\log^2(1 / \eps)}\right)\right) \cdot |x - \widetilde{y}|,
    \end{equation}
    where $\widetilde{y}-\beta_t = \gamma_t' - h(y)_1$ (see Figure~\ref{fig_type_ab}).
    By Claim~\ref{type_b_type_b},
    $$
    |\beta_t - \widetilde{y}| = \|h(\beta_t) - h(y)\| \in (1 \pm O(\eps^{1/3})) \cdot |\beta_t - y|.
    $$
    Thus,
    \begin{equation}
    \label{part_2_path}
    |y - \widetilde{y}| \in O(\eps^{1/3}) \cdot |\beta_t - y| \leq O(\eps^{1/3}) \cdot |x - y|.
    \end{equation}
    Combining~(\ref{part_1_path}) and~(\ref{part_2_path}), we are done.
\end{proof}

\begin{figure}
  \centering
  \includegraphics{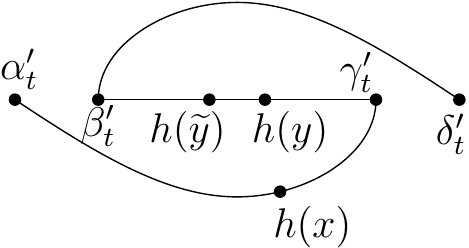}
  \caption{Illustration for the proof of Claim~\ref{claim_ab}.}
  \label{fig_type_ab}
\end{figure}

\subsection{An auxiliary map: the spiral}
\begin{lemma}\label{lem:basic-map}
    Let $\eps > 0$ be a small positive parameter. Let $g\colon \Rbb \to \Rbb^2$ be the map defined as follows. 
   \[
g(t) = \begin{cases}
(t,0), &\mbox{if $|t|>1$,}\\
(-t,0),&\mbox{if $|t|< \eps$}\\
(t\cos \varphi(t),t\sin \varphi(t)),\mbox{where $\varphi(t) = \frac{\pi \ln(1/|t|)}{\ln(1/\eps)}$} &\mbox{otherwise.}\\
\end{cases}
\]
Where the third term can be viewed in the polar coordinates as $\left(r(t) = t, \varphi(t) = \frac{\pi \ln(1/|t|)}{\ln(1/\eps)}\right)$. Then we have the following properties,
\begin{itemize}
    \item{Distortion:} for every $t_1, t_2 \in \Rbb$, one has:
    $$
    \|g(t_1) - g(t_2)\| \in \left(1 + O\left(\frac{1}{\log^2(1 / \eps)}\right)\right) \cdot |t_1 - t_2|;
    $$
    \item{Total movement:} for every $t \in \Rbb$, one has:
    $$
    \|g(t) - (t, 0)\| \leq O\left(1\right).
    $$
\end{itemize}
\end{lemma}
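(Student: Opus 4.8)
The plan is to analyze the map $g$ piece by piece, exploiting its rotational structure. The key observation is that $g$ preserves the radius exactly: $\|g(t)\| = |t|$ for all $t$ (this is immediate from the three cases). So the ``total movement'' bound is the easy part: for $|t| \le 1$ we have $\|g(t) - (t,0)\| \le \|g(t)\| + |t| = 2|t| \le 2$, and for $|t| > 1$ the difference is zero. So I would dispose of the second bullet in one line, and spend all the effort on the distortion bound.

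For the distortion, I would first reduce to the case $0 < t_1 < t_2$ inside the ``spiral'' region $[\eps, 1]$ (the regions $|t| > 1$ and $|t| < \eps$ are handled by noting $g$ is an isometry there, and cross-region pairs can be handled by splitting at the boundary points $\pm\eps, \pm 1$, where $g$ is continuous and the two one-sided behaviors glue, plus a separate easy check for $t_1 < 0 < t_2$ using $\|g(t_1)-g(t_2)\|^2 = t_1^2 + t_2^2 - 2 t_1 t_2 \cos(\varphi(t_1)+\varphi(t_2)) \le t_1^2+t_2^2+2|t_1t_2|$ and a lower bound from the angle being bounded by $\pi$). Within the spiral, the non-contraction $\|g(t_1)-g(t_2)\| \ge |t_1 - t_2|$ should follow because increasing $t$ both increases the radius and (monotonically) decreases the angle $\varphi$, so the chord between two points on the spiral is at least the difference of radii; I'd want to verify this cleanly, perhaps via the law of cosines $\|g(t_1)-g(t_2)\|^2 = t_1^2 + t_2^2 - 2t_1 t_2 \cos(\varphi(t_1) - \varphi(t_2))$ and $\cos \le 1$. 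The upper bound $\|g(t_1)-g(t_2)\| \le (1 + O(1/\log^2(1/\eps)))|t_1-t_2|$ is the crux: write $\Delta\varphi = \varphi(t_1)-\varphi(t_2) = \frac{\pi}{\ln(1/\eps)}\ln(t_2/t_1)$, so using $1 - \cos\theta \le \theta^2/2$,
$$
\|g(t_1)-g(t_2)\|^2 = (t_2 - t_1)^2 + 2 t_1 t_2 (1 - \cos\Delta\varphi) \le (t_2-t_1)^2 + t_1 t_2 (\Delta\varphi)^2.
$$
It then suffices to show $t_1 t_2 (\Delta\varphi)^2 \le O(1/\log^2(1/\eps)) \cdot (t_2-t_1)^2$, i.e. $t_1 t_2 \ln^2(t_2/t_1) \le O(1)\cdot (t_2 - t_1)^2$, which is the elementary inequality $\frac{ab \ln^2(b/a)}{(b-a)^2} = O(1)$ for $0 < a \le b$ — substituting $s = b/a \ge 1$ this is $\frac{s\ln^2 s}{(s-1)^2}$, a bounded function of $s \ge 1$ (it tends to $1$ as $s \to 1$ and to $0$ as $s \to \infty$, with a finite maximum).

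The main obstacle I anticipate is not any single computation but the bookkeeping at the boundaries between the three regimes, and in particular pairs $(t_1, t_2)$ with one point on the spiral and the other on a flat piece, or straddling $0$. For those I would use continuity of $g$ at $\pm\eps$ and $\pm 1$ together with the triangle inequality: insert the boundary point, bound each sub-segment by the already-established in-regime estimate, and check that the error terms add up correctly (they do, since each sub-length is at most $|t_1 - t_2|$ and the multiplicative errors are the same $1 + O(1/\log^2(1/\eps))$ factor). A subtle point requiring care: when $t_1 < 0 < t_2$ with both in $(-1,1)$, the naive splitting at $0$ fails because $g$ is not defined there in a useful way; instead I'd directly bound $\|g(t_1)-g(t_2)\|$ from below by $|t_1| + |t_2|$ minus a correction — actually $\|g(t_1)-g(t_2)\|^2 \ge t_1^2 + t_2^2 - 2|t_1||t_2| = (|t_1|+|t_2|)^2 - 4|t_1||t_2|\cos^2(\cdots)$... — so it is cleanest to observe that the angular coordinate on the whole of $[-1,1]$ ranges within an interval of length $\le \pi$ around a fixed direction only on each side, and handle the two-sided case by noting $\varphi$ is even in $|t|$ so points at $t$ and $-t$ map to the \emph{same} point $g(t) = g(-t)$ when... no — rather $g(-t) = (-t\cos\varphi(|t|), -t\sin\varphi(|t|)) \ne \pm g(t)$ in general, so I would just grind the law of cosines with the total angle $\varphi(|t_1|) + \varphi(|t_2|) \in (0, 2\pi)$ and check both bounds survive. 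I expect this case, though tedious, to go through with the same $s\ln^2 s/(s-1)^2$-type estimate applied after a change of variables.
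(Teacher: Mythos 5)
Your plan is correct and in substance matches the paper's: both rest on the law-of-cosines identity $\|g(t_1)-g(t_2)\|^2 = t_1^2 + t_2^2 - 2t_1t_2\cos(\varphi(t_1)-\varphi(t_2))$ together with the elementary bound $\sup_{0<\alpha\le 1}\alpha\ln^2(1/\alpha)<\infty$ (equivalently your $s\ln^2 s/(s-1)^2 = O(1)$), splitting into the regime $\alpha\ge 1/\ln^2(1/\eps)$ where the angle is small enough for Taylor, and $\alpha < 1/\ln^2(1/\eps)$ where the $\alpha$ prefactor alone suffices. The one genuine difference is cosmetic: for the non-expansion within the spiral, the paper integrates $\|g'(t)\|=\sqrt{1+\pi^2/\ln^2(1/\eps)}$ (arc length) while you use $1-\cos\theta\le\theta^2/2$ pointwise; both are fine. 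Two slips worth flagging. First, your cross-zero formula with $\cos(\varphi(t_1)+\varphi(t_2))$ is wrong: since $\varphi$ is a function of $|t|$, the single identity $\|g(t_1)-g(t_2)\|^2 = t_1^2+t_2^2-2t_1t_2\cos(\varphi(t_1)-\varphi(t_2))$ already holds for \emph{all} signs of $t_1,t_2$ in the spiral (the sign of $t_1t_2$ encodes the $\pi$ offset), and that is what the paper uses throughout; there is no ``$+$'' version. Second, your aside about $g(-t)\ne \pm g(t)$ is incorrect -- one checks directly from the definition that $g(-t)=-g(t)$, i.e.\ $g$ is odd on all three pieces -- but you don't actually need this fact, so it's only a distraction. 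Finally, be aware that the ``insert a boundary point and apply the triangle inequality'' reduction yields only the upper (non-expansion) bound for cross-regime pairs; luckily the lower bound for any same-sign pair is automatic from $\cos\theta\le 1$ in the law-of-cosines identity (giving $\|g(t_1)-g(t_2)\|\ge|t_1-t_2|$ outright), so the only pairs that truly require the $\alpha\ln^2(1/\alpha)$ estimate for the lower bound are the cross-zero ones, exactly as you anticipated. With the $\varphi(t_1)-\varphi(t_2)$ correction, your proof goes through and recovers the paper's argument.
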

\begin{proof}
First of all note that the function is continuous as $g(\eps) = (\eps\cos \varphi(\eps),\eps\sin\varphi(\eps)) = (-\eps,0)$, $g(-\eps) = (\eps,0)$, $g(1) = (\cos \varphi(1), \sin \varphi(1)) = (1,0)$, and $g(-1) = (-1,0)$. Next we show that the distortion is bounded as desired.
First, we prove that $g$ does not increase the distance by more than a multiplicative factor of $1+O(\frac{1}{\ln^2(1/\eps)})$, and second in Claim \ref{clm:spiral-decrease}, we prove that the distances do not decrease by more than the same factor. These two prove the bound on the distortion as desired.
Finally in Claim \ref{clm:spiral-movement}, we show the total movement property.

\begin{claim}
For $\eps\leq t_1<t_2 \leq 1$, we have $\|g(t_1) - g(t_2)\| \leq \left(1 + O(\frac{1}{\log^2(1 / \eps)})\right) \cdot |t_1 - t_2|$.
\end{claim}
\begin{proof}
The distance between $g(t_1)$ and $g(t_2)$ is at most the length of the curve between them which is given by the following formula
\begin{align*}
&\int_{t=t_1}^{t_2} \sqrt{\left(\frac{d(t\cos\varphi(t))}{dt}\right)^2 + \left(\frac{d(t\sin\varphi(t))}{dt}\right)^2} dt = \\
&\int_{t=t_1}^{t_2} \sqrt{\left(\cos\varphi(t) + \frac{\pi}{\ln(1/\eps)}\sin \varphi(t) \right)^2 + \left(\sin \varphi(t) - \frac{\pi}{\ln(1/\eps)}\cos\varphi(t)\right)^2} dt = \\
&\int_{t=t_1}^{t_2} \sqrt{1 + \left(\frac{\pi}{\ln(1/\eps)}\right)^2	} dt = (t_2-t_1)\sqrt{1 + \left(\frac{\pi}{\ln(1/\eps)}\right)^2	} \leq (t_2-t_1) \left(1+\frac{\pi^2}{2\ln^2(1/\eps)}\right)\\
\end{align*}
\end{proof}
The above claim, together with the fact that the function is symmetric around the origin, and the definition of the function for $|t|\geq 1$ and $|t|\leq \eps$, and triangle inequality, proves that for any $t_1,t_2\in \Re$, the distance between the images, $g(t_1)$ and $g(t_2)$ is increased by at most $\mathcal{D} = 1+O(\frac{1}{\ln^2(1/\eps)})$. Next we prove that the distances do not decrease too much either.
\begin{claim}\label{clm:spiral-decrease}
Given $t_1<t_2$, we have $\|g(t_1) - g(t_2)\| \geq \frac{|t_1 - t_2|}{\mathcal{D}}$.
\end{claim}
\begin{proof}
The claim is trivial if both $|t_1|,|t_2|\geq 1$ or $|t_1|,|t_2|\leq \eps$. Also if $t_2\geq 1$ and $-\eps \leq t_1\leq \eps$, the claim holds as $\frac{t_2-t_1}{g(t_2) - g(t_1)} \leq \frac{t_2 + \eps}{t_2 - \eps} \leq \frac{1+\eps}{1-\eps} \leq 1+3\eps$ for sufficiently small $\eps$. Also if $\eps\leq t_1 < t_2 \leq 1$, then by triangle inequality, $\|g(t_2)-g(t_1)\|\geq \|g(t_2)\|-\|g(t_1)\| = t_2 - t_1$. The remaining cases are discussed bellow or implied by symmetry.
\paragraph{Case 1.} If $\eps \leq t_2\leq  1$ and $-1\leq t_1 \leq -\eps$, by symmetry we can assume that $t_2\geq |t_1|$, and thus suppose that $t_1 = -\alpha  t_2$, where $0 \leq \alpha \leq 1$. 
First, note that if $\alpha \leq 1/\ln^2 (1/\eps)$, then since the distances from the origin to the points remain unchanged, we have that 
$$
\frac{\|g(t_1)-g(t_2)\|}{|t_1-t_2|} \geq \frac{t_2+t_1}{t_2 - t_1} \geq \frac{1-\alpha}{1+\alpha} \geq 1-O(\alpha) \geq 1-O(1/\log^2 (1/\eps))
$$
which proves the claim. Therefore, we can assume that $\alpha \geq 1/\ln^2(1/\eps)$. We should show that $\|g(t_1)-g(t_2)\|/|t_1-t_2| \geq 1/\mathcal{D} \geq 1 - O\left(\frac{1}{\ln^2(1/\eps)}\right)$, or equivalently,  $\|g(t_1)-g(t_2)\|^2/|t_1-t_2|^2 \geq 1 - O(\frac{1}{\ln^2 (1/\eps)})$. 
\begin{align*}
&\frac{\|g(t_1) - g(t_2)\|^2}{|t_1 - t_2|^2} = \frac{t_1^2 + t_2^2 - 2t_1t_2 \cos(\varphi(t_1) - \varphi(t_2))}{(t_1-t_2)^2 } \\
&= 1 + \frac{ 2t_1t_2\left(1-\cos(\varphi(t_1)-\varphi(t_2))\right)}{(t_1 - t_2)^2} =1 - \frac{2t_2^2\alpha (1 - \cos(\varphi(t_1) - \varphi(t_2)))}{t_2^2(1+\alpha)^2}  \\
&= 1 - \frac{2\alpha(1 - \cos(\varphi(t_1) - \varphi(t_2)))}{(1+\alpha)^2} = 1 - O(\alpha(1-\cos(\varphi(t_1)-\varphi(t_2))
\end{align*}
Therefore, we just need to show that $\alpha(1-\cos(\varphi(t_1) - \varphi(t_2))) = O(1/ \ln^2 (1/\eps))$. 
Note that 
$$
\varphi(t_1)-\varphi(t_2) =  \frac{\pi\ln(1/(\alpha t_2))}{\ln(1/\eps)} - \frac{\pi\ln(1/t_2)}{\ln(1/\eps)}  = \frac{\pi\ln(1/\alpha)}{\ln(1/\eps)} \leq \frac{2\pi\ln \ln (1/\eps)}{\ln(1/\eps)},
$$
and therefore, we can use the Taylor expansion for cosine and get that 
$$
\alpha(1-\cos(\varphi(t_1)-\varphi(t_2))) \leq \alpha\left(1- \left[1-\frac{\pi^2\ln^2(1/\alpha)}{2\ln^2(1/\eps)}\right]\right) \leq O\left(\frac{\alpha\ln^2(1/\alpha)}{\ln^2(1/\eps)}\right)
$$
which is at most $O(1/\ln^2(1/\eps))$ as $\alpha\ln^2(1/\alpha)$ is at most $e$ for $0\leq \alpha\leq 1$. This completes the proof for this case.
\paragraph{Case 2.} If $t_2\geq 1$ and $\eps \leq |t_1| \leq 1$, then let us again write the term we need to bound
\begin{align*}
\frac{\|g(t_2)-g(t_1)\|^2}{|t_2-t_1|^2} &= \frac{(t_2 - t_1\cos\varphi(t_1))^2 + t_1^2\sin^2\varphi(t_1)}{(t_2-t_1)^2} = \frac{t_1^2 + t_2^2 - 2t_1t_2\cos\varphi(t_1)}{(t_1 - t_2)^2}
\end{align*}
Now if $t_1$ is positive, i.e., $\eps\leq t_1\leq 1$, then clearly, since $\cos \varphi(t_1)\leq 1$, we have that $-2t_1t_2\cos\varphi(t_1)\geq -2t_1t_2$, and therefore the above fraction is at least $1$. Thus, we now consider the case where $-1\leq t_1\leq -\eps$, and need to show that $-2t_1t_2(1-\cos\varphi(t_1))/(t_2-t_1)^2 \leq O(1/\ln^2(1/\eps))$. Again, we let $t_1=-\alpha t_2$ where $0<\alpha \leq 1$, and we get that 
\[
\frac{-2t_1t_2(1-\cos\varphi(t_1))}{(t_2-t_1)^2} = O(\alpha(1-\cos\varphi(t_1)))
\]
Again, if $\alpha\leq 1/\ln^2(1/\eps)$, we have that $\alpha(1-\cos \varphi(t_1)) \leq O(1/\ln^2(1/\eps))$ as $(1-\cos \varphi(t_1))\leq 2$. Otherwise, as $t_2 \geq 1$, we have $|t_1| \geq 1/\ln^2(1/\eps)$, and therefore, $\varphi(t_1) \leq \frac{2\pi\ln\ln(1/\eps)}{\ln(1/\eps)}$. Thus, similar to Case 1, we can write that 
\[
\alpha(1-\cos\varphi(t_1)) \leq \alpha \left(1-\left[1-\frac{\pi^2\ln^2(1/\alpha)}{2\ln^2(1/\eps)}\right]\right) = O(\frac{\alpha\ln^2(1/\alpha)}{\ln^2(1/\eps)}) = O(1/\ln^2(1/\eps))
\]
where the above holds for similar reasons as Case 1.
\paragraph{Case 3.} If $\eps\leq t_2\leq 1$ and $-\eps \leq t_1\leq \eps$, then we have
\begin{align*}
\frac{\|g(t_2)-g(t_1)\|^2}{|t_2-t_1|^2} =\frac{t_1^2+t_2^2+2t_1t_2\cos\varphi(t_2)}{(t_2-t_1)^2} = 1+\frac{2t_1t_2(1+\cos\varphi(t_2))}{(t_2-t_1)^2}
\end{align*}
Now, if $t_1>0$, then the above term is at least $1 \geq 1- O(1/\ln^2(1/\eps))$ and the claim holds. So we assume that $-\eps\leq t_1\leq 0$, and let $t_1 = -\alpha t_2$ where $0\leq \alpha \leq 1$. Our goal is to prove that $-2t_1t_2(1+\cos\varphi(t_2))/(t_2-t_1)^2 \leq O(1/\ln^2(1/\eps))$. Again we can write
\[
\frac{-2t_1t_2(1+\cos\varphi(t_2))}{(t_2-t_1)^2} = O(\alpha(1+\cos\varphi(t_2)))
\]
Now if $\alpha\leq 1/\ln^2(1/\eps)$, we are done as $1+\cos\varphi(t_2)\leq 2$. But then if $\alpha\geq 1/\ln^2(1/\eps)$, we have that $t_2 = -t_1/\alpha \leq \eps/\alpha \leq \eps(\ln^2(1/\eps))$, and therefore, 
\[
\varphi(t_2) \geq \frac{\pi \ln(1/(\eps\ln^2(1/\eps)))}{\ln(1/\eps)} = \pi - \frac{2\pi\ln\ln(1/\eps)}{\ln(1/\eps)}.
\]
Since $\frac{2\pi\ln\ln(1/\eps)}{\ln(1/\eps)}$ is small, we can write the Taylor expansion and get that 
\begin{align*}
\alpha(1+\cos\varphi(t_2)) &= \alpha\left(1+\cos\left(\pi-\frac{2\pi\ln\alpha}{\ln(1/\eps)}\right)\right) \\
&= \alpha\left(1-\cos\left(\frac{2\pi\ln\alpha}{\ln(1/\eps)}\right)\right) \leq \alpha\left(1-\left[1-\left(\frac{2\pi\ln\alpha}{\ln(1/\eps)}\right)^2\right]\right) \\
&\leq O\left(\frac{\alpha\ln^2\alpha}{\ln^2(1/\eps)}\right) = O(\frac{1}{\ln^2(1/\eps)}),
\end{align*}
as desired. This completes the proof of this case.
\end{proof}

Finally, we need to prove the total movement condition as follows\footnote{We remark that a stronger bound for the total movement can be achieved but for our purposes the above bound suffices.}.
\begin{claim}\label{clm:spiral-movement}
For every point $t\in \Re$, one has $\|g(t) - (t, 0)\| \leq O\left(1\right)$.
\end{claim}
\begin{proof}
The claim is clearly true for $|t|\geq 1$. Also for $-\eps\leq t \leq \eps$, the claim holds since those points move by at most $2\eps$. Finally for points that are on the curve, i.e., $\eps \leq |t| \leq 1$, we know that their distances to the origin is preserved. Therefore, by triangle inequality, $\|g(t) - (t, 0)\| \leq 2|t| = O\left(1\right)$.
\end{proof}
This concludes the proof of the lemma.
\end{proof}

\begin{corollary}\label{cor:spirals}
    Let $\eps$ be a sufficiently small constant and let $p,q,p',q'\in \Re$, such that $p' - q' \in (1 \pm \eps) \Delta$, where $\Delta = q - p$ which can be positive or negative.
    Denote
    $\alpha = p - \frac{\Delta}{\eps^{2/3}}$, $\beta = p - \frac{\Delta}{\eps^{1/3}}$,
    $\gamma = q + \frac{\Delta}{\eps^{1/3}}$, $\delta = q + \frac{\Delta}{\eps^{2/3}}$,
    $\alpha' = q' - \frac{\Delta}{\eps^{2/3}}$, $\beta' = q' - \frac{\Delta}{\eps^{1/3}}$,
    $\gamma' = q' + \frac{\Delta}{\eps^{1/3}}$, $\delta' = q' + \frac{\Delta}{\eps^{2/3}}$,
    Then, there exists a map $g \colon \Re \to \Rbb^2$ such that:
    \begin{itemize}
    \item $g(\alpha) = (\alpha', 0)$;
     $g(\beta) = (\gamma', 0)$;
     $g(\gamma) = (\beta', 0)$;
     $g(\delta) = (\delta', 0)$.
    \item $\forall t_1, t_2\in  \Re$, one has $\norm{g(t_1)-g(t_2)} \in \left(1 + O\left(\frac{1}{\log^2(1 / \eps)}\right)\right) \cdot |t_1 - t_2|$.
    \end{itemize}
\end{corollary}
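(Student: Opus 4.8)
Here is how I would approach the proof.

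\noindent\textit{Proof idea.}
The plan is to realize $g$ as a composition $g = M \circ g_0 \circ \ell$, where $g_0\colon \Re\to\Re^2$ is the spiral map of Lemma~\ref{lem:basic-map} \emph{applied with parameter $\eps^{1/3}$ in place of $\eps$}, $M\colon\Re^2\to\Re^2$ is a fixed similarity, and $\ell\colon\Re\to\Re$ is a piecewise-linear reparametrization of the line. Lemma~\ref{lem:basic-map} gives $\norm{g_0(s_1)-g_0(s_2)} \in (1\pm O(1/\log^2(1/\eps^{1/3})))\,|s_1-s_2| = (1\pm O(1/\log^2(1/\eps)))\,|s_1-s_2|$ for all $s_1,s_2$, since $\log(1/\eps^{1/3})=\tfrac13\log(1/\eps)$. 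I will check that $\ell$ is $(1+O(\eps^{1/3}))$-bi-Lipschitz and that its scale factor times that of $M$ equals $1$; composing the three two-sided estimates then yields $\norm{g(t_1)-g(t_2)} \in (1\pm O(\eps^{1/3}+1/\log^2(1/\eps)))\,|t_1-t_2|$ for all $t_1,t_2\in\Re$, which is the claim of Corollary~\ref{cor:spirals} once $\eps$ is small enough that $\eps^{1/3}=O(1/\log^2(1/\eps))$ — automatic for $\eps$ sufficiently small, as $\log^2(1/\eps)=o(\eps^{-1/3})$.

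Concretely, one may assume $\Delta>0$ (otherwise precompose and postcompose with $t\mapsto-t$, which merely relabels the portals). Set $M(x,y)=\bigl(q'+\tfrac{\Delta}{\eps^{2/3}}x,\ \tfrac{\Delta}{\eps^{2/3}}y\bigr)$, and let $\ell$ be the \emph{continuous} piecewise-linear map equal to $\tfrac{\eps^{2/3}}{\Delta}(t-p)$ for $t\le\beta$, equal to $\tfrac{\eps^{2/3}}{\Delta}(t-q)$ for $t\ge\gamma$, and linear on $[\beta,\gamma]$; the pieces agree because $\ell(\beta)=-\eps^{1/3}$ and $\ell(\gamma)=\eps^{1/3}$ regardless of which side one evaluates from. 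A one-line computation gives $\ell(\alpha)=-1$, $\ell(\beta)=-\eps^{1/3}$, $\ell(\gamma)=\eps^{1/3}$, $\ell(\delta)=1$; since $g_0(-1)=(-1,0)$, $g_0(-\eps^{1/3})=(\eps^{1/3},0)$, $g_0(\eps^{1/3})=(-\eps^{1/3},0)$, $g_0(1)=(1,0)$ (the boundary values of the branches in Lemma~\ref{lem:basic-map}), applying $M$ produces exactly $g(\alpha)=(\alpha',0)$, $g(\beta)=(\gamma',0)$, $g(\gamma)=(\beta',0)$, $g(\delta)=(\delta',0)$, i.e.\ all four portal conditions. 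On $[\beta,\gamma]$ the value $\ell(m)$ lies in $[-\eps^{1/3},\eps^{1/3}]$, where $g_0$ is the reflection $t\mapsto(-t,0)$, so $g$ there is simply the affine segment from $(\gamma',0)$ to $(\beta',0)$ — which is why this $g$ is consistent with ``using only the two outer pieces'' in the application.

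It then remains to bound the distortion of $\ell$. It is increasing, with slope $\tfrac{\eps^{2/3}}{\Delta}$ on $(-\infty,\beta]$ and on $[\gamma,\infty)$, and slope $\tfrac{\ell(\gamma)-\ell(\beta)}{\gamma-\beta}=\tfrac{2\eps^{1/3}}{\Delta(1+2/\eps^{1/3})}=\tfrac{\eps^{2/3}}{\Delta}\cdot\tfrac{2}{\eps^{1/3}+2}$ on $[\beta,\gamma]$; all three slopes lie in $\bigl[(1-O(\eps^{1/3}))\tfrac{\eps^{2/3}}{\Delta},\ \tfrac{\eps^{2/3}}{\Delta}\bigr]$, so integrating the derivative gives $|\ell(t_1)-\ell(t_2)|\in(1\pm O(\eps^{1/3}))\tfrac{\eps^{2/3}}{\Delta}\,|t_1-t_2|$ for all $t_1,t_2$. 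Feeding this into $\norm{g(t_1)-g(t_2)}=\tfrac{\Delta}{\eps^{2/3}}\norm{g_0(\ell(t_1))-g_0(\ell(t_2))}$ together with the two-sided estimate for $g_0$ on all of $\Re$ closes the argument. I expect the only non-mechanical point to be the choice of $\ell$: one must notice that no single \emph{globally} affine rescaling can match all four portals on both sides at once (the quadruples $(\alpha,\beta,\gamma,\delta)$ and $(\alpha',\beta',\gamma',\delta')$ have different cross-ratios), but that matching becomes possible if the domain map is allowed to be piecewise-linear, compressing \emph{only} the central block $[\beta,\gamma]$ by the factor $\tfrac{2}{\eps^{1/3}+2}=1-O(\eps^{1/3})$, while the image side remains a single similarity — and that this compression is precisely the harmless $1+O(\eps^{1/3})$ loss (the same $\eps^{1/3}$ already appearing in Claim~\ref{ext_f_iso}), absorbed into $1+O(1/\log^2(1/\eps))$ for small $\eps$. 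Everything else is bookkeeping with the branches of Lemma~\ref{lem:basic-map}.
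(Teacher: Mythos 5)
Your proof is correct and follows the paper's blueprint: a monotone, continuous, piecewise-linear reparametrization that compresses only the central block $[\beta,\gamma]$ by $\tfrac{2}{\eps^{1/3}+2}=1-O(\eps^{1/3})$, followed by the spiral of Lemma~\ref{lem:basic-map} run at scale $\eps^{1/3}$, followed by a similarity sending $\pm1,\mp\eps^{1/3}$ to the four primed portals. Your factorization $M\circ g_0\circ\ell$, with $\ell$ having the \emph{same} slope $\eps^{2/3}/\Delta$ on both unbounded rays so that $\ell$ carries $\alpha,\beta,\gamma,\delta$ exactly to $-1,-\eps^{1/3},\eps^{1/3},1$, is if anything tidier than the paper's version: the paper's image-side center $m'=(q'+p')/2$ is not the actual midpoint $q'$ of $\{\alpha',\beta',\gamma',\delta'\}$, so the paper's $h$ as written is not continuous at $\beta,\gamma$, and its true distortion is $1+O(\eps^{1/3})$ rather than the stated $1+O(\eps)$ -- precisely the $1+O(\eps^{1/3})$ loss you correctly attribute to $\ell$ and correctly absorb into $1+O(1/\log^2(1/\eps))$ for small $\eps$.
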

\begin{proof}
Without loss of generality, we will assume that $\Delta$ is positive. Thus we have that $\alpha\leq \beta\leq p\leq q\leq \gamma\leq \delta$, and $\alpha'\leq \beta'\leq q'\leq p'\leq \gamma'\leq \delta'$. The other case is symmetric. Let $m=\frac{p+q}{2}$ which is also equal to $\frac{\alpha+\delta}{2}=\frac{\beta+\gamma}{2}$, and let $m'=\frac{q'+p'}{2}$ which is also equal to $\frac{\alpha'+\delta'}{2}=\frac{\beta'+\gamma'}{2}$.

Let $\eta = \frac{\|p'-q'\|}{\|q-p\|}$ which clearly lies in $\in (1\pm \eps)$.  First we define the map $h\colon \Re\to\Re$  as follows:
$$
h(t)=
\begin{cases}
\alpha'+t-\alpha, &\mbox{if $t\in(-\infty;\beta]$,}\\
\gamma'+t-\gamma, &\mbox{if $t\in[\gamma,\infty)$,}\\
m'+\eta(t-m), &\mbox{otherwise,}\\
\end{cases}
$$
which trivially maps the points from $(-\infty;\beta]$ to $(-\infty;\beta']$, and the points from $[\gamma;\infty)$ to $[\gamma';\infty)$ by translation, and  linearly maps $[\beta,\gamma]$ to $[\beta',\gamma']$ by scaling and translating the points. It is clear that the map is continuous and its distortion is at most $\max\{\eta,1/\eta\}$ which is at most $1+O(\eps)$. 

Now let $g_0\colon \Re\to\Re^2$ be the map of Lemma \ref{lem:basic-map} with $\eps' = \frac{m'-\beta'}{m'-\alpha'} = \frac{\gamma'-m'}{\delta'-m'} =\frac{\Delta(\eta/2+1/\eps^{1/3})}{\Delta(\eta/2+1/\eps^{2/3})}\leq O(\eps^{1/3})$ which has distortion $1+O(1/\log^2(1/\eps')) = 1+O(1/\log^2(1/\eps))$, and define the scale parameter $\lambda = (m'-\alpha') =(\delta'-m')$.
Our final map is just defined as $g(t) = m' + \lambda g_0(\frac{h(t)-m'}{\lambda})$ and it is clear that its distortion $\mathcal{D}_g\leq \mathcal{D}_h\cdot\mathcal{D}_{g_0} \leq (1+\eps)(1+O(1/\log^2(1/\eps))) \leq (1+O(1/\log^2(1/\eps)))$. This proves the second property. 
For the first property we have the following.
\begin{itemize}
\item{$g(\alpha)$:} We have that $h(\alpha) = \alpha'$ and thus $g(\alpha) = m'+(m'-\alpha')g_0(\frac{\alpha'-m'}{m'-\alpha'}) = m'+(m'-\alpha')g_0(-1) = m'-(m'-\alpha') = \alpha'$.

\item{$g(\beta)$:} We have that $h(\beta) = \beta'$ and thus $g(\beta) = m'+(\delta'-m')g_0(\frac{\beta'-m'}{m'-\alpha'}) = m'+(m'-\alpha')g_0(-\eps') = m'+(\delta'-m') \eps'= \gamma'$.

\item{$g(\gamma)$:} We have that $h(\gamma) = \gamma'$ and thus $g(\gamma) = m'+(m'-\alpha')g_0(\frac{\gamma'-m'}{\delta'-m'}) = m'+(m'-\alpha')g_0(\eps') = m'-\eps'(m'-\alpha') = \beta'$.

\item{$g(\delta)$:} We have that $h(\delta) = \delta'$ and thus $g(\delta) = m'+(\delta'-m')g_0(\frac{\delta'-m'}{\delta'-m'}) = m'+(\delta'-m')g_0(1) = m'+(\delta'-m') = \delta'$.
\end{itemize}
\end{proof}

\subsection{Lower bound}\label{sec:line-lower}
In this section, we show that there exist maps with distortion $1 + \varepsilon$ such that every outer extension of it has distortion at least $1 + \Omega\left(1/\log_2 (1/\varepsilon)\right)^2$.

\begin{theorem}[Theorem \ref{thm:lower-bound-on-phi}]
\label{thm:lower-bound-distortion}
There exist $X \subset \Rbb$ and a map $f:X \to \Rbb$ with distortion $1 + O(\varepsilon)$ such that every outer bi-Lipschitz extension
$f': \Rbb \to \Rbb^m$ has distortion at least  $1 + \Omega\left(1/\log^2 (1/\varepsilon))\right)$.
\end{theorem}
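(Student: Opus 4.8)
I would use exactly the example highlighted in the overview: let $X=\{0,\eps,1\}$ and let $f\colon X\to\Rbb$ be $f(0)=0$, $f(\eps)=-\eps$, $f(1)=1$. The three pairwise ratios $\|f(x)-f(y)\|/|x-y|$ are $1$, $1$, and $(1+\eps)/(1-\eps)$, so $f$ has distortion $1+O(\eps)$ (for the version of the statement asking for distortion exactly $1+\eps$, rescale $\eps$ by a constant; this does not change $\log^2(1/\eps)$ up to constants). So the content is the lower bound: if $f'\colon\Rbb\to\Rbb^m$ is an outer extension of $f$ with distortion $1+\delta$, then $\delta=\Omega(1/\log^2(1/\eps))$. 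Since $1/\log^2(1/\eps)\to 0$, we may assume $\delta$ is below any fixed constant (otherwise there is nothing to prove), and we may assume $\eps$ is small.

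First I would normalize. By definition of distortion there is $\lambda>0$ with $\lambda|u-v|\le\|f'(u)-f'(v)\|\le\lambda(1+\delta)|u-v|$ for all $u,v\in\Rbb$; applying this to the pinned pair $\{0,1\}$ gives $\lambda\in[1/(1+\delta),1]$. Write $v(t)=f'(t)-f'(0)=f'(t)$ (recall $f'(0)=0$). Then for \emph{every} $t$ we get $\|v(t)\|\in[\lambda|t|,\lambda(1+\delta)|t|]$, so $v(t)\neq 0$ for $t\neq 0$, and we may define the unit vector $u(t)=v(t)/\|v(t)\|$. The key point is that $u(\eps)=-e_1$ and $u(1)=e_1$, because $f'$ is an outer extension and these values of $f'$ are fixed — so the geodesic distance $\angle(u(\eps),u(1))$ on the unit sphere equals $\pi$. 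Note this uses only that $f'$ is globally bi-Lipschitz together with the three pinned points; it does \emph{not} require additional points of $X$, since the global bi-Lipschitz property already forces $\|f'(t)\|\approx|t|$ at \emph{every} scale.

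The heart of the argument is the following angular estimate: for $0<s<t\le 2s$ one has $\angle(u(s),u(t))=O(\sqrt\delta)$. Indeed, by the law of cosines and the bounds on $\|v(s)\|$, $\|v(t)\|$, $\|v(s)-v(t)\|$ (the factors $\lambda$ cancel),
\[
\cos\angle(u(s),u(t))=\frac{\|v(s)\|^2+\|v(t)\|^2-\|v(s)-v(t)\|^2}{2\,\|v(s)\|\,\|v(t)\|}\ \ge\ \frac{s^2+t^2-(1+\delta)^2(t-s)^2}{2(1+\delta)^2\,st}.
\]
Writing $\rho=t/s\in[1,2]$, the right-hand side is $\bigl(1+\rho^2-(1+\delta)^2(\rho-1)^2\bigr)\big/\bigl(2(1+\delta)^2\rho\bigr)$, which equals $1$ at $\delta=0$ since $1+\rho^2-(\rho-1)^2=2\rho$, and is $\ge 1-O(\delta)$ for small $\delta$ uniformly over $\rho\in[1,2]$. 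Hence $1-\cos\angle(u(s),u(t))\le O(\delta)$, so $\angle(u(s),u(t))\le O(\sqrt\delta)$.

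Finally I would telescope over dyadic scales. Set $K=\lceil\log_2(1/\eps)\rceil$ and $t_i=\eps^{\,1-i/K}$ for $i=0,\dots,K$, so that $t_0=\eps$, $t_K=1$, and each ratio $t_{i+1}/t_i=\eps^{-1/K}\le 2$. By the triangle inequality for the spherical metric and the estimate above,
\[
\pi=\angle(u(\eps),u(1))\ \le\ \sum_{i=0}^{K-1}\angle(u(t_i),u(t_{i+1}))\ \le\ K\cdot O(\sqrt\delta)\ =\ O\bigl(\sqrt\delta\,\log(1/\eps)\bigr),
\]
which forces $\delta=\Omega\bigl(1/\log^2(1/\eps)\bigr)$, i.e.\ the extension has distortion $1+\Omega(1/\log^2(1/\eps))$, as claimed. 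I expect the only delicate point to be the angular estimate: one must get the $\sqrt\delta$ (not $\delta$) dependence, since it is precisely this square root that converts $\Theta(\log(1/\eps))$ scales into a $1/\log^2(1/\eps)$ bound; everything else is routine, and in particular the rigidity comes for free from $f'$ being defined on all of $\Rbb$ with distortion $1+\delta$.
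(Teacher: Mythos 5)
Your proof is correct and follows essentially the same strategy as the paper's: track the direction $u(t)=f'(t)/\|f'(t)\|$ across the $\Theta(\log(1/\eps))$ dyadic scales between $\eps$ and $1$, bound the per-scale angular change by $O(\sqrt{\delta})$ via a near-isometry estimate (you via the law of cosines directly; the paper via its midpoint Claim~4.13), and conclude from the fact that the total rotation must be $\pi$. Your choice of $X=\{0,\eps,1\}$ (placing $\eps$ rather than $-\eps$ in $X$) pins down both $u(\eps)=-e_1$ and $u(1)=e_1$ exactly, which lets you skip the paper's extra triangle argument involving $z=f'(-\eps)$ and makes the telescoping cleaner, but the underlying idea is identical.
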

\begin{proof}
Consider a map $f$ that maps three points $-\varepsilon$, $0$, and $1$ to points $\varepsilon$, $0$, and $1$, respectively. The map has distortion
$\frac{1+\varepsilon}{1-\varepsilon} = 1 + 2\varepsilon + O(\varepsilon^2)$.
We show that any bi-Lipschitz extension $f':[-\varepsilon,1] \to \Rbb^n$ of $f$ has distortion at least $$1 + \left(\frac{\pi}{2 \log_2 (1/\varepsilon)}\right)^2$$ asymptotically.

Consider  a bi-Lipschitz extension $f':[-\varepsilon,1] \to \Rbb^n$ of $f$. Without loss of generality, we assume that $\varepsilon = 1/2^k$.
Let $x_i = 1/2^i$ for $i\in\{0,1,\dots, k\}$, and $x_i' = f'(x_i)$.
We will need the following claim.
\begin{claim}\label{claim:triangle}
Consider three points $a$, $b$, $c$ on a line such that $b$ lies exactly in the middle between $a$ and $c$; i.e., $ b= (a+c)/2$. Assume that they are mapped to
points $a'$, $b'$, $c'$ in $\Rbb^m$. Let $\alpha$ be the angle between segments $[a', b']$ and $[a', c']$. Then the distortion $D$ of the map is
at least $1/\cos \alpha$ if $\alpha \leq \pi/4$ and $\sqrt{2}$, otherwise.
In particular,
$$D \geq \min (1/\cos \alpha, \sqrt{2}).$$
\end{claim}
\begin{proof}
First, assume that $\alpha \leq \pi/4$.
We now show that $\|a'-b'\| \geq \frac{\|a'-c'\|}{2\cos \alpha}$ or $\|b'-c'\| \geq \frac{\|a'-c'\|}{2\cos \alpha}$.
Let $\rho = \|a'-b'\|/ \|a'-c'\|$. If $\rho \geq \frac{1}{2\cos \alpha}$, we are done. Otherwise,
$$\|b'-c'\|^2 = \|a'-b'\|^2 + \|a'-c'\|^2 - 2\cos\alpha \cdot \|a'-b'\| \|a'-c'\| = \|a'-c'\|^2(\rho^2 - 2\cos\alpha \cdot \rho + 1).$$
Now, the polynomial $t^2 - 2\cos\alpha \cdot t + 1$ attains its minimum on $[0,1/(2\cos\alpha)]$ at point $t = 1/(2\cos \alpha)$,
where it equals $1/(2\cos \alpha)^2$
(here we use that $\alpha \leq \pi/4$ and hence $1/(2\cos \alpha) < \cos \alpha$).
Therefore, $\|b'-c'\| \geq \|a'-c'\|/(2\cos \alpha)$, as required.
Note that the distortion is at least
$$\left.\frac{\|a'-b'\|}{\|a'-c'\|} \right/ \frac{\|a-b\|}{\|a-c\|} \qquad \text{and} \qquad
\left.\frac{\|b'-c'\|}{\|a'-c'\|} \right/ \frac{\|b-c\|}{\|a-c\|}.$$
One of these two ratios is at least $1/\cos \alpha$.

Now, assume that $\alpha \in (\pi/4, \pi/2)$.  The distance from $c'$ to the line passing through
$a'$ and $b'$ is $\sin \alpha \|a' -c'\| \geq  \|a' -c'\|/ \sqrt{2}$; in particular, $\|b' - c'\| \geq   \|a' -c'\|/ \sqrt{2}$.
As in the previous case, this implies that the distortion is at least $\sqrt{2}$.
Finally, assume that $\alpha \geq \pi/2$, then the angle at vertex $a'$ in the triangle $a'b'c'$ is obtuse, therefore
$b'c'$ is the longest side of $a'b'c'$. In particular, $\|b' - c'\| \geq   \|a' -c'\|$. We get that the distortion is at least
$2$.
\end{proof}
\begin{figure}
\includegraphics[clip, trim=5cm 9cm 7cm 3.5cm, width=\textwidth]{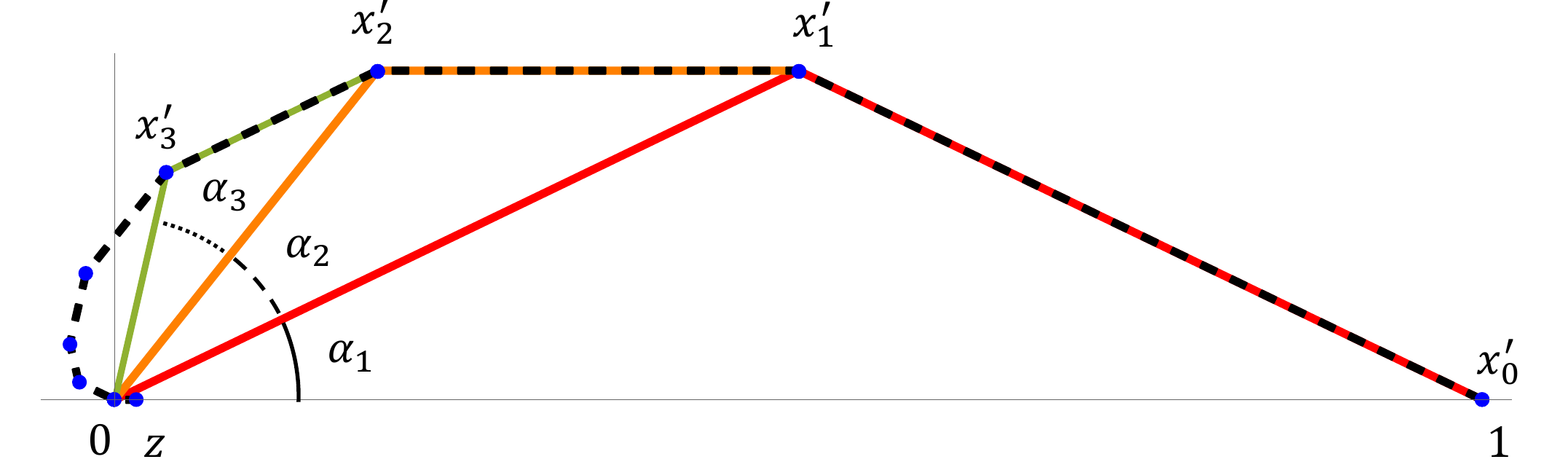}
\caption{Points $x_0',\dots, x_k'$ and angles $\alpha_1, \dots, \alpha_k$.}
\label{fig:points-x-i}
\end{figure}
Now we are ready to prove Theorem~\ref{thm:lower-bound-distortion}.
Let $\alpha_i$ be the angle between segments $[0, x_{i-1}']$ and $[0, x_{i}']$ (see Figure~\ref{fig:points-x-i}). Consider point
$z=(\varepsilon,\bar 0) = f'(-\varepsilon)$. Let $\beta$ be the
largest among the following angles:
\begin{itemize}
\item the angle between $[x_k', z]$ and $[x_k', 0]$,
\item the angle between $[z, x_k']$ and $[z, 0]$.
\end{itemize}
Finally, let $\gamma$ be the angle between
$[0, x_k']$ and $[0,z]$.

First, we apply Claim~\ref{claim:triangle} to points $0$, $x_i$, $x_{i-1}$. We get that
$$D\geq \min\left(\frac{1}{\cos \alpha_i},\sqrt{2}\right).$$
Second, we apply Claim~\ref{claim:triangle} to points $x_k$, $0$, $-\eps$ and to  $-\eps$, $0$, $x_k$
(see Figure~\ref{fig:points-xk-z}).
 We get that
$$D\geq \min\left(\frac{1}{\cos \beta},\sqrt{2}\right).$$
\begin{figure}
\includegraphics[clip, trim=1cm 8.6cm 11cm 6.3cm, width=\textwidth]{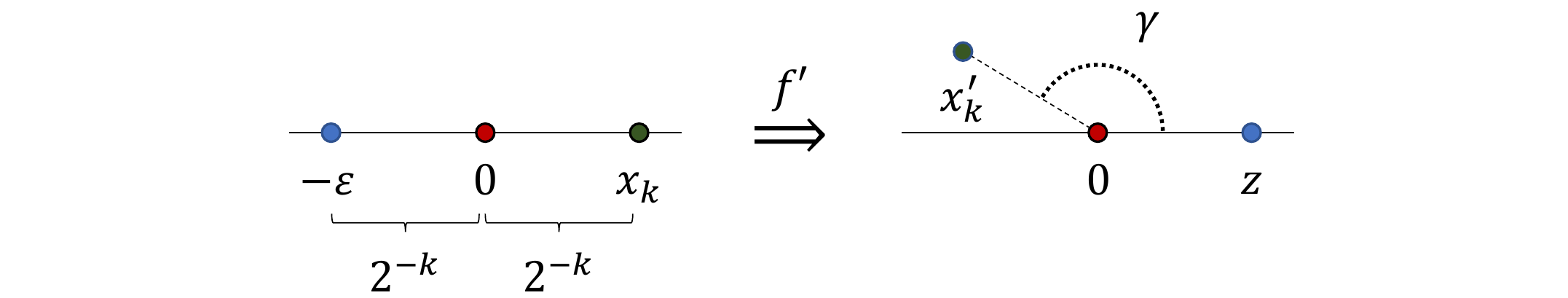}
\caption{Points $-\eps$, $0$, $x_k$ and their images $z = f'(-\eps)$, $0=f'(0)$, $x_k' = f'(x_k)$.}
\label{fig:points-xk-z}
\end{figure}
Now, we write an upper bound for $\gamma$ (which follows from the triangle inequality in spherical geometry)
$$\gamma \leq \sum_{i=1}^k \alpha_i.$$
Consider the triangle with vertices $0$, $z$, $x_k'$. One of the angles of this triangle is $\gamma$ and the largest of the
other two angles is $\beta$. Therefore, $\gamma + 2\beta \geq \pi$ and thus,
$$2\beta + \sum_{i=1}^k \alpha_i \geq \pi.$$
Consequently, either $\beta \geq \pi/(k+2)$ or some $\alpha_i \geq \pi/(k+2)$ (or both). We conclude that the distortion is at least
$$D \geq \min\left(\frac{1}{\cos \frac{\pi}{\log_2 (1/\varepsilon) +2}},\sqrt{2}\right) =  1 + (1 - o(1)) \frac{\pi^2}{2\log_2^2 (1/\varepsilon)}$$
when $\varepsilon \to 0$.
\end{proof}

\end{document}